\newcommand{\ack}{\section*{Acknowledgments}}
\newtheorem{theorem}{Theorem}[section]
\newtheorem{corollary}[theorem]{Corollary}
\newtheorem{lemma}[theorem]{Lemma}
\newtheorem{remark}[theorem]{Remark}
\newtheorem{proposition}[theorem]{Proposition}
\newtheorem{hypothesis}{Hypothesis}[section]
\theoremstyle{definition}
\newtheorem{definition}[theorem]{Definition}
\newtheorem{example}[theorem]{Example}
\newcommand{\nn}{\nonumber}
\newcommand{\be}{\begin{equation}}
\newcommand{\ee}{\end{equation}}
\newcommand{\ba}{\begin{array}}
\newcommand{\ea}{\end{array}}
\newcommand{\ti}{\tilde}
\numberwithin{equation}{section}
 \DeclareMathOperator{\im}{Im}
\DeclareMathOperator{\re}{Re} 
 \DeclareMathOperator{\dom}{dom}
\DeclareMathOperator{\ran}{ran} \DeclareMathOperator{\Ext}{Ext}
\DeclareMathOperator{\Span}{span}\DeclareMathOperator{\ess}{ess}
\DeclareMathOperator{\ac}{ac}
\DeclareMathOperator{\diag}{diag}
\newcommand\R{{\mathbb{R}}}
\newcommand\C{{\mathbb{C}}}
\newcommand\N{{\mathbb{N}}}
\newcommand\Z{{\mathbb{Z}}}
\newcommand\bO{{\mathbb{O}}}
\newcommand\gH{{\mathfrak{H}}}
\newcommand\gS{{\mathfrak{S}}}
\newcommand{\gG}{{\Gamma}}
\newcommand{\gt}{\mathfrak{t}}
\newcommand{\gA}{{\alpha}}
\newcommand\cH{{\mathcal{H}}}
\newcommand\cT{{\mathcal{T}}}
\newcommand\cL{{\mathcal{L}}}
\newcommand\cG{{\mathcal{G}}}
\newcommand\cE{{\mathcal{E}}}
\newcommand\cP{{\mathcal{P}}}
\newcommand\cV{{\mathcal{V}}}
\newcommand\cC{{\mathcal{C}}}
\newcommand\cB{{\mathcal{B}}}
\newcommand\cN{{\mathcal{N}}}
\newcommand\OO{{\mathcal{O}}}
\newcommand\bH{{\mathbf{H}}}
\newcommand\rh{{\mathbf{h}}}
\newcommand\rH{{\rm{H}}}
\newcommand\E{{\rm{e}}}
\newcommand\rR{{\rm{R}}}
\newcommand\rQ{{\rm{Q}}}
\newcommand\rD{{\rm{d}}}
\newcommand\I{{\rm{i}}}
\newcommand\rI{{\rm{I}}}
\newcommand\op{{\rm{op}}}
\newcommand\tr{{\rm{tr}}}
\newcommand\Deg{{\rm{Deg}}}
\newcommand\g{{\bf{g}}}
\newcommand\f{{\bf{f}}}
\def\mul{{\rm mul\,}}
\def\wt#1{{{\widetilde #1} }}
\newcommand*{\mailto}[1]{\href{mailto:#1}{\nolinkurl{#1}}}
\newcommand{\arxiv}[1]{\href{http://arxiv.org/abs/#1}{arXiv:#1}}
\newcommand{\msc}[1]{\href{http://www.ams.org/msc/msc2010.html?t=&s=#1}{#1}}
\begin{document}

\title[Infinite Quantum Graphs]{Spectral Theory of Infinite Quantum Graphs}

\dedicatory{Dedicated to the memory of M.\ Z.\ Solomyak (16.05.1931 -- 31.07.2016)}

\author[P.\ Exner]{Pavel Exner}
\address{Doppler Institute for Mathematical Physics and Applied Mathematics\\ Czech Technical University\\
B\v rehov\'a 7\\ 11519 Prague\\ Czechia\\ and
Department of Theoretical Physics\\ Nuclear Physics Institute\\ Czech Academy of Sciences\\ 25068 \v{R}e\v{z} near Prague, Czechia}
\email{\mailto{exner@ujf.cas.cz}}
\urladdr{\url{http://gemma.ujf.cas.cz/~exner/}}

\author[A.\ Kostenko]{Aleksey Kostenko}
\address{Faculty of Mathematics and Physics\\ University of Ljubljana\\ Jadranska 21\\ 1000 Ljubljana\\ Slovenia\\ and Faculty of Mathematics\\ University of Vienna\\
Oskar--Morgenstern--Platz 1\\ 1090 Wien\\ Austria}
\email{\mailto{Aleksey.Kostenko@fmf.uni-lj.si};\ \mailto{Oleksiy.Kostenko@univie.ac.at}}
\urladdr{\url{http://www.mat.univie.ac.at/~kostenko/}}

\author[M.\ Malamud]{Mark Malamud}
\address{Peoples Friendship University of Russia (RUDN University)\\ Miklukho-Maklaya Str. 6\\ 117198 Moscow\\Russia}
\email{\mailto{malamud3m@gmail.com}}

\author[H.\ Neidhardt]{Hagen Neidhardt}
\address{Weierstrass Institute for Applied Analysis and Stochastics\\ Mohrenstr. 39\\ 10117
Berlin\\ Germany}
\email{\mailto{neidhard@wias-berlin.de}}
\urladdr{\url{http://www.wias-berlin.de/~neidhard/}}

\thanks{{\it Research supported by the Czech Science Foundation (GA\v{C}R)
under grant No.~17-01706S and the European Union within the project CZ.02.1.01/0.0/0.0/16\textunderscore 019/0000778 (P.E.), by the Austrian Science Fund (FWF) under grant No.~P28807 (A.K.), by the ``RUDN University Program 5-100"  (M.M.), and by the European Research Council (ERC) under grant No.~AdG 267802 ``AnaMultiScale"\ (H.N.)}}

\thanks{Ann.\ Henri\ Poincar\'e (to appear); \doi{10.1007/s00023-018-0728-9}} 
 
\thanks{\arxiv{1705.01831}} 


\keywords{Quantum graph, analysis on graphs, self-adjointness, spectrum}
\subjclass[2010]{Primary \msc{81Q35}; Secondary \msc{34B45}; \msc{34L05}}

\begin{abstract}
We investigate quantum graphs with infinitely many vertices and edges without the common restriction on the geometry of the underlying metric graph that there is a positive lower bound on the lengths of its edges. Our central result is a close connection between spectral properties of a quantum graph and the corresponding properties of a certain weighted discrete Laplacian on the underlying discrete graph. Using this connection together with spectral theory of (unbounded) discrete Laplacians on infinite graphs, we prove a number of new results on spectral properties of quantum graphs. Namely, we prove several self-adjointness results including a Gaffney type theorem. We investigate the problem of lower semiboundedness, prove several spectral estimates (bounds for the bottom of spectra and essential spectra of quantum graphs, CLR-type estimates) and study spectral types.   
\end{abstract} 

\maketitle

{\scriptsize{\tableofcontents}}

\section{Introduction}\label{intro}

During the last two decades, {\em quantum graphs} became an extremely popular subject because of numerous applications in mathematical physics, chemistry and engineering. Indeed, the literature on quantum graphs is vast and extensive and there is no chance to give even a brief overview of the subject here. We only mention a few recent monographs and collected works with a comprehensive bibliography:  \cite{bcfk06},  \cite{bk13}, \cite{ekkst08}, \cite{gs08} and \cite{post}. The notion of quantum graph refers to a graph $\cG$ considered as a one-dimensional simplicial complex and equipped with a differential operator (``Hamiltonian''). The idea has it roots in the 1930s when it was proposed to model free electrons in organic molecules  \cite{paul36,rusch53}. It was rediscovered in the late 1980s and since that time it found numerous applications. Let us briefly mention some of them: superconductivity theory in granular and artificial materials \cite{al83, rusc98}, microelectronics and waveguide theory \cite{exs89, ml71,
mv05}, Anderson localization in disordered wires \cite{asw05, asw06, ehs07}, chemistry (including studying carbon nanostructures)
\cite{alm04, ex01, korl07, kp07, rb72}, photonic crystal theory \cite{akk99, fk96, ku01}, quantum chaotic systems \cite{gs08, ks97}, and others. These applications of quantum graphs usually involve modelling of waves of various nature propagating in thin branching media which looks like a thin neighbourhood $\Omega$ of a graph $\cG$. A rigorous justification of such a graph approximation is a nontrivial problem. It was first addressed in the situation where the boundary of the ``fat graph'' is Neumann (see, e.g., \cite{kuze01, rusc01}), a full solution was obtained only recently \cite{cet10, expo13}. The Dirichlet case is more difficult and a  work remains to be done (see, e.g., \cite{gri08, post}).

From the mathematical point of view, quantum graphs are interesting because they are a good model to study properties of quantum systems depending on geometry and topology of the configuration space. They exhibit a mixed dimensionality being locally one-dimensional but globally multi-dimensional of many different types. To the best of our knowledge, however, their analysis usually  includes the assumption that there is a positive lower bound on the lengths of the graph edges (we are aware only of a few works dealing with metric graphs having edges of arbitrarily small length, however, with some other additional rather restrictive assumptions, e.g., radially symmetric trees \cite{car00, sol02}, some classes of fractals \cite{ar18,arfk18,arkt16}, graphs having finite total length or diameter \cite{car11}). Our main aim is to investigate spectral properties of quantum graphs avoiding this rather restrictive hypothesis on the geometry of the underlying metric graph $\cG$.

To proceed further we need to introduce briefly some notions and structures (a detailed description is given in Section \ref{sec:BT_QG}).
Let $\cG_d = (\cV,\cE)$ be a (combinatorial) graph with finite or countably infinite sets of vertices $\mathcal{V}$ and edges $\mathcal{E}$.  For two different vertices $u$, $v\in \mathcal{V}$ we shall write $u\sim v$ if there is an edge $e\in \mathcal{E}$ connecting $u$ with $v$. For every $v\in \mathcal{V}$, $\cE_v$ denotes the set of edges incident to the vertex $v$.
 To simplify our considerations, we assume that the graph $\cG_d$ is connected and there are no loops or multiple edges (let us mention that these assumptions can be made without loss of generality, see Remark \ref{rem:assumptions} below). In what follows we shall also assume that $\cG_d$ is equipped with a metric, that is, each edge $e\in\mathcal{E}$ is assigned with the length $|e|=l_e\in (0,\infty)$ in a suitable way. A graph $\cG_d$ equipped with a metric $|\cdot|$ is called a {\em metric graph} and is denoted by $\cG=(\cG_d,|\cdot|)$.  Identifying every edge $e$ with the interval $(0,|e|)$ one can introduce the Hilbert space $L^2(\cG) = \bigoplus_{e\in\cE} L^2(e)$ and then the Hamiltonian $\bH$ which acts in this space as the (negative) second derivative $-\frac{d^2}{dx_e^2}$ on every edge $e\in\cE$. To give $\bH$ the meaning of a quantum mechanical energy operator, it must be self-adjoint. To make it symmetric, one needs to impose appropriate boundary conditions at the vertices. Kirchhoff conditions \eqref{eq:kirchhoff} or, more generally, $\delta$-type conditions with interactions strength $\alpha\colon\cV\to \R$
\[
\begin{cases} f\ \text{is continuous at}\ v,\\[1mm] \sum_{e\in \cE_v}f_e'(v) = \alpha(v)f(v), \end{cases}\qquad v\in\cV,
\]
are the most standard ones (cf. \cite{bk13}). Restricting further to functions vanishing everywhere except finitely many edges, we end up with the pre-minimal symmetric operator $\bH_\alpha^0$ (see Section \ref{sec:parameterization} for a precise definition). The first question which naturally appears in this context is, of course, whether this operator is essentially self-adjoint in $L^2(\cG)$ (which is the same that its closure $\bH_{\alpha} = \overline{\bH_\alpha^0}$ is self-adjoint).  
 To the best of our knowledge, in the case when both sets $\cV$ and $\cE$ are countably infinite, the self-adjointness of $\bH_{\alpha}$ was established when $\inf_{e\in\mathcal{E}}|e|>0$ and the interactions strength $\alpha\colon \cV\to\R$ is bounded from below in a suitable sense (see, e.g., \cite[Chapter I]{bk13} and \cite{lsv14}). The subsequent analysis of $\bH_{\gA}$ was then naturally performed under these rather restrictive assumptions on $\cG$ and $\alpha$.

We propose a new approach to investigate spectral properties of infinite quantum graphs. To this goal, we exploit the boundary triplets machinery \cite{DM91, Gor84, schm}, a powerful approach to extension theory of symmetric operators (see Appendix \ref{app:triplets} for further details and references). Consider in $L^2(\cG)$ the following operator
\begin{align}\label{eq:Hmin}
\bH_{\min} & = \bigoplus_{e\in \cE} \rH_{e,\min}, & \rH_{e,\min} & = -\frac{\rD^2}{\rD x_e^2},\quad \dom(\rH_{e,\min}) = W^{2,2}_0(e),
\end{align}
where $W^{2,2}_0(e)$ denotes the standard Sobolev space on the edge $e\in\cE$. Clearly, $\bH_{\min}$ is a closed symmetric operator in $L^2(\cG)$ with deficiency indices ${\rm n}_\pm(\bH_{\min})=2\#(\cE)$. In particular, the deficiency indices are infinite when $\cG$ contains infinitely many edges and hence in this case the description of self-adjoint extensions and the study of their spectral properties is a very nontrivial problem. Despite some skepticism (see, e.g., \cite[p.483]{ekkst08}), we are indeed able to construct a suitable boundary triplet for the maximal operator $\bH_{\max}:=\bH_{\min}^\ast$ in the case when $\inf_{e\in\mathcal{E}}|e|=0$. As an immediate outcome, the boundary triplets approach enables us to parameterize the set of all self-adjoint (respectively, symmetric) extensions of $\bH_{\min}$ in terms of self-adjoint (respectively, symmetric) ``boundary linear relations''. 
Furthermore, it turns out (see Proposition \ref{prop:bo_alpha}) that the boundary relation (to be more precise, its operator part) parameterizing the quantum graph operator $\bH_{\gA}$ is unitarily equivalent to the weighted discrete Laplacian $\rh_\gA$ defined in $\ell^2(\cV;m)$ by the following expression
\be\label{eq:1.03}
(\tau_{\cG,\alpha} f)(v) := \frac{1}{m(v)}\left( \sum_{u\in \cV} b(u,v)(f(v) - f(u))  + \alpha(v)f(v)\right),\quad v\in\cV,
\ee
where the weight functions $m\colon \cV\to (0,\infty)$ and $b\colon \cV\times \cV \to [0,\infty)$ are given by
\begin{align}\label{eq:1.04}
m\colon & v\mapsto \sum_{e\in\cE_v}|e|, & b\colon (u,v) & \mapsto \begin{cases} |e_{u,v}|^{-1}, & u\sim v,\\ 0, & u\not\sim v.\end{cases}
\end{align}
Therefore, spectral properties of the quantum graph Hamiltonian $\bH_\gA$ and the discrete Laplacian $\rh_\gA$ are closely connected. For example, we show that (see Theorem \ref{th:main}): 
\begin{itemize}
\item[(i)] {\em The deficiency indices of $\bH_\alpha$ and $\rh_\alpha$ are equal. 
In particular, $\bH_\alpha$ is self-adjoint if and only if $\rh_\alpha$ is self-adjoint.}
\end{itemize}
Assume additionally that the operator $\bH_\alpha$ (and hence also the operator $\rh_\alpha$) is self-adjoint. Then:
\begin{itemize}
\item[(ii)] {\em $\bH_\alpha$ is lower semibounded if and only if $\rh_\alpha$ is lower semibounded.} 
\item[(iii)] {\em The total multiplicities of negative spectra of $\bH_\alpha$ and $\rh_\alpha$ coincide. In particular, $\bH_\alpha$ is nonnegative if and only if the operator $\rh_\alpha$ is nonnegative. Moreover, negative spectra of $\bH_\alpha$ and $\rh_\alpha$ are discrete simultaneously.}
\item[(iv)] {\em $\bH_\alpha$ is positive definite if and only if  $\rh_\alpha$ is positive definite. }
\item[(v)] {\em If in addition $\rh_\alpha$ is lower semibounded, then $\inf\sigma_{\ess}(\bH_\alpha)>0$ $(\inf\sigma_{\ess}(\bH_\alpha)=0)$ exactly when $\inf\sigma_{\ess}(\rh_\alpha)>0$ $($respectively, $\inf\sigma_{\ess}(\rh_\alpha)=0)$.}
\item[(vi)] {\em The spectrum of $\bH_\alpha$ is purely discrete if and only if the number $\#\{e\in\cE\colon |e|>\varepsilon\}$ is finite for every $\varepsilon>0$ and the spectrum of $\rh_\alpha$ is purely discrete.}
\end{itemize}

Spectral theory of discrete Laplacians on graphs has a long and venerable history due to its numerous applications in probability (e.g., random walks on graphs) and physics (see the monographs \cite{cdv}, \cite{dsv}, \cite{ds}, \cite{lub}, \cite{lp}, \cite{var}, \cite{woe} and references therein). If $\inf_{e\in\cE} |e|=0$, then the corresponding discrete Laplacian $\rh_\alpha$ might be unbounded even if $\alpha\equiv 0$. A significant progress in the study of unbounded discrete Laplacians has been achieved during the last decade (see the surveys \cite{kel15}, \cite{kl10}) and we apply these recent results to investigate spectral properties of quantum graphs in the case when $\inf_{e\in\cE} |e|=0$. For example, using (i), we establish a Gaffney type theorem (see Corollary \ref{cor:gaffney} and Remark \ref{rem:4.10}) 
 by simply applying the corresponding result for discrete operators (see \cite[Theorem 2]{hkmw13}): {\em if $\cG$ equipped with a natural path metric is complete as a metric space, then $\bH_0$ is self-adjoint}. Notice that by a Hopf--Rinow type theorem from \cite{hkmw13}, $(\cV,\varrho_0)$ is complete as a metric space if and only if $\cG$ satisfies the so-called {\em finite ball condition} (see, e.g., \cite[Assumption 1.3.5]{bk13}). 
 Combining (iv) and (v) with the Cheeger type and the volume growth estimates for discrete Laplacians (see \cite{bkw15}, \cite{fo}, \cite{kel15}, \cite{kl12}), we prove several spectral estimates for $\bH_0$. In particular, we prove necessary (Theorem \ref{th:H0positive2}(iii)) and sufficient (Theorem \ref{th:H0positive}(iii)) discreteness conditions for $\bH_0$. In the case $\#\cE=\infty$, it follows from (vi) that the condition $\inf_{e\in\cE}|e|=0$ is necessary for the spectrum of $\bH_0$ to be discrete and this is the very reason why the discreteness problem has not been addressed previously for quantum graphs (perhaps, the only exception is the case of radially symmetric trees since for this class of quantum graphs it is possible to reduce the spectral analysis to the analysis of Sturm--Liouville operators, see \cite[\S 5.3]{sol02}).  

Let us also stress that some of our results are new even if $\inf_{e\in\cE}|e|>0$. In this case the discrete Laplacian $\rh_0$ is bounded and hence we immediately conclude by applying (i) that $\bH_\gA$ is self-adjoint for any $\gA\colon\cV\to \R$ (Corollary \ref{cor:sa02}). On the other hand, $\rh_0$ is bounded if and only if {\em the weighted degree} function $\Deg\colon\cV\to \R$ defined by
\[
\Deg\colon v\mapsto \frac{1}{m(v)}\sum_{u\in\cE_v} b(u,v) = \frac{\sum_{e\in\cE_v} |e|^{-1}}{\sum_{e\in\cE_v} |e|}
\]
is bounded on $\cV$ (see \cite{dav}). Therefore, $\bH_\gA$ is self-adjoint for any $\gA\colon\cV\to \R$ in this case too (Lemma \ref{lem:sa01}). Let us stress that the condition $\inf_{e\in\cE}|e|>0$ is sufficient for $\Deg$ to be bounded on $\cV$, however, it is not necessary (see Example \ref{ex:tree01}). 

The duality between spectral properties of continuous and discrete operators on finite graphs and networks was observed by physicists in the 1960s and then by mathematicians in the 1980s \cite{vB,cat,ex97,ni85,roth}.  For a particular class, the so-called equilateral graphs, it is even possible to prove a sort of unitary equivalence between continuous and discrete operators \cite{bgp07,pan12,pan13} (actually, this can also be viewed as the analog of Floquet theory for periodic Sturm--Liuoville operators, cf. \cite{abmn05}). However, in all those cases $\inf_{e\in\cE}|e|>0$ is satisfied and the corresponding difference Laplacian in contrast to \eqref{eq:1.03} is given by
\be\label{eq:1.03B}
(\tau f)(v) := \frac{1}{\deg(v)}\left(\sum_{u\sim v} \frac{f(v) - f(u)}{|e_{u,v}|} + \alpha(v)f(v)\right),\qquad v\in\cV,
\ee
that is, the weight function $m$ is replaced by the combinatorial degree function (see, e.g., \cite[Chapter II]{post}, \cite{roso10}). These functions coincide only if the graph is equilateral and then both \eqref{eq:1.03} and \eqref{eq:1.03B} (with $\alpha\equiv 0$) reduce to the combinatorial Laplacian on $\cG_d$. Moreover, in the case $\inf_{e\in\E}|e|=0$, spectral properties of operators defined by \eqref{eq:1.03} and \eqref{eq:1.03B} can completely be different and spectral properties of \eqref{eq:1.03B} do not correlate with those of the quantum graph operator $\bH_\gA$ (this can be seen by  simple examples of Jacobi matrices, see Remark \ref{rem:difJacobi}).

In fact, it is not difficult to discover certain connections just by considering the corresponding quadratic forms. Namely, let $f$ be a continuous compactly supported function on the metric graph $\cG$, which is linear on every edge. Setting $f_{\cV}:=f\upharpoonright_\cV$, we then get (see Remark \ref{rem:3.8} for more details)
\be\label{eq:1.05}
\begin{split}
\gt_{\bH_\gA}[f]:=\big(\bH_{\gA}f,f\big)_{L^2(\cG)} &=\frac{1}{2}\sum_{u,v\in \cV} b(v,u)|f(v) - f(u)|^2 + \sum_{v\in\cV}\alpha(v)|f(v)|^2\\
&= \big(\rh_\gA f_{\cV},f_{\cV}\big)_{\ell^2(\cV;m)}=:\gt_{\rh_\gA}[f_\cV].
\end{split}
\ee
If $\gA\colon \cV\to [0,\infty)$, then the closures of both forms $\gt_{\bH_\gA}$ and $\gt_{\rh_\gA}$ are regular Dirichlet forms whenever the corresponding graph $\cG$ is locally finite (cf. \cite{fuk10}). 
 Clearly, \eqref{eq:1.05} establishes a close connection between the corresponding Markovian semigroups as well as between Markov processes on the corresponding graphs. However, let us stress that it was exactly the above statement (iii) which helped us to improve and complete one result of G.\ Rozenblum and M.\ Solomyak \cite{roso10} on the behavior of the heat semigroups generated by $\bH_0$ and $\rh_0$ (see Theorem \ref{lem:rozsol} and Remark \ref{rem:rozsol}): {\em for $D>2$ the following equivalence holds}
\[
\|\E^{-t\bH_0}\|_{L^1\to L^\infty} \le  C_1t^{-D/2},\quad t>0\qquad \Longleftrightarrow \qquad \|\E^{-t\rh_0}\|_{\ell^1\to \ell^\infty} \le  C_2t^{-D/2},\quad  t>0.
\]
Here $C_1$ and $C_2$ are positive constants, which do not depend on $t$. Let us also mention that the estimates of this type are crucial in proving Rozenblum--Cwikel--Lieb (CLR) type estimates for both $\bH_\gA$ and $\rh_\gA$ (see Section \ref{ss:5.2}).

Our results continue and extend the previous work \cite{KM10a, KM10, KM_13} and \cite{kmn16} on 1-D Schr\"odinger operators and matrix Schr\"odinger operators with point interactions, respectively. Notice that (see Example \ref{ex:02}) in this case the line or a half-line can be considered as the simplest metric graph (a regular tree with $d=2$) and then the corresponding discrete Laplacian is simply a Jacobi (tri-diagonal) matrix (with matrix coefficients in the case of matrix Schr\"odinger operators).

Let us now finish the introduction by briefly describing the content of the article. The core of the paper is Section \ref{sec:BT_QG}, where we construct a suitable boundary triplet for the operator $\bH_{\max}$ (Theorem \ref{th:triplet} and Corollary \ref{cor:2.4}) by applying an efficient procedure suggested recently in \cite{KM10}, \cite{MalNei_08} (see also Appendix \ref{ss:a4}). 
  The central result of Section \ref{sec:BT_QG} is Theorem \ref{th:HTheta}, which describes basic spectral properties (self-adjointness, lower semiboundedness, spectral estimates, etc.) of proper extensions $\bH_{\Theta}$ of $\bH_{\min}$
 given by 
\be\label{eq:1.02}
\begin{split}
\bH_{\Theta}:= \bH_{\max}\upharpoonright &\dom(\bH_\Theta),\\
& \dom(\bH_\Theta):= \{f\in \dom(\bH_{\max})\colon \{\Gamma_0f,\Gamma_1\}\in\Theta\},
\end{split}
\ee
 in terms of the corresponding properties of {\em the boundary relation} $\Theta$. In particular, \eqref{eq:1.02} establishes a one-to-one correspondence between self-adjoint (respectively, symmetric) linear relations in an auxiliary Hilbert space $\cH$ and self-adjoint (respectively, symmetric) extensions of the minimal operator $\bH_{\min}$. 
 
In Section \ref{sec:parameterization} we specify Theorem \ref{th:HTheta} to the case of the Hamiltonian $\bH_{\gA}$. First of all, we find the boundary relation parameterizing the operator $\bH_{\gA}$ in the sense of \eqref{eq:1.02}. As it was already mentioned, its operator part is unitarily equivalent to the discrete Laplacian \eqref{eq:1.03}--\eqref{eq:1.04} and hence this fact establishes a close connection between spectral properties of $\bH_{\gA}$ and $\rh_\gA$ (Theorem \ref{th:main}).

In Sections \ref{sec:kirchhoff} and \ref{sec:delta}, we exploit recent advances in spectral theory of unbounded discrete Laplacians and prove a number of results on quantum graphs with Kirchhoff and $\delta$-couplings at vertices avoiding the standard restriction $\inf_{e\in\cE}|e|>0$. More specifically, the case of Kirchhoff conditions is considered in Section \ref{sec:kirchhoff}, where we prove several self-adjointness results and also provide estimates on the bottom of the spectrum as well as on the essential spectrum of $\bH_{0}$.
We discuss the self-adjointness of $\bH_{\gA}$ in Section
\ref{ss:5.1}. On the one hand, we show that $\bH_{\gA}$ is
self-adjoint for any $\gA\colon\cV\to \R$ whenever the weighted
degree function $\Deg$ is bounded on
$\cV$. In the case when $\Deg$ is locally bounded on $\cV$, we prove
self-adjointness and lower semiboundedness of $\bH_{\gA}$ under
certain semiboundedness assumptions on $\gA$. We also demonstrate by
simple examples that these results are sharp. Section \ref{ss:5.2}
is devoted to CLR-type estimates for quantum graphs. In Section
\ref{ss:5.3} we investigate spectral types of $\bH_\gA$. Moreover,
using the Cheeger-type estimates for $\rh_\alpha$, we prove several
spectral bounds for $\bH_\gA$.

As it was already mentioned, Theorem \ref{th:HTheta} is valid for
all self-adjoint extensions of $\bH_{\min}$, however, the
corresponding boundary relation may have a complicated structure
when we go beyond the $\delta$ couplings. In
Section \ref{sec:final}, we briefly discuss the case of
the so-called $\delta_s'$-couplings, cf. \cite{ex96} . It turns our
that the corresponding boundary operator is a difference operator,
however, its order depends on the vertex degree function of the underlying
discrete graph.

In Appendix \ref{app:triplets} we collect necessary definitions and facts about linear relations in Hilbert spaces, boundary triplets and Weyl functions.

\subsection*{Notation}  $\N$, $\Z$, $\R$, $\C$ have standard meaning; $\Z_{\ge 0} = \Z\cap[0,\infty)$.

 $a\vee b=\max(a,b)$, $a\wedge b = \min(a,b)$.

$\cH$ and $\gH$ denote separable complex Hilbert spaces; $\rI_{\gH}$ and $\bO_{\gH}$ are, respectively, the identity and the zero maps on $\gH$; $\rI_n:=\rI_{\C^n}$ and $\bO_n:=\bO_{\C^n}$. By $\cB(\gH)$ and $\cC(\gH)$ we denote, respectively, the sets of bounded and closed linear operators in $\gH$; $\wt{\cC}(\gH)$ is the set of closed linear relations in $\gH$; $\gS_p(\gH)$ is the two-sided von Neumann--Schatten ideal in $\gH$, $p\in(0,\infty]$. In particular, $\gS_1(\gH)$,  $\gS_2(\gH)$ and $\gS_\infty(\gH)$ denote the trace ideal, the Hilbert--Schmidt ideal and the set of compact operators in $\gH$. 

Let $T=T^\ast$ be a self-adjoint linear operator (relation) in $\gH$. For a Borel set $\Omega\subseteq \R$, by $E_T(\Omega)$ we denote the corresponding orthogonal spectral projection of $T$. Moreover, we set 
\begin{align*}
T^- & := T\,E_T\big({(-\infty,0)}\big), & \kappa_-(T) & = \dim \ran(T^-) = {\rm tr} (T^-),
\end{align*}
that is, $\kappa_-(T)$ is the total multiplicity of the negative spectrum of $T$. Note that $\kappa_-(T)$ is the number (counting multiplicities) of negative eigenvalues of $T$ if the negative spectrum of $T$ is discrete.
In this case we denote by $\lambda_j(T):=\lambda_j(|T^-|)$ their absolute values numbered in the decreasing order and counting their multiplicities.

\section{Boundary triplets for graphs}\label{sec:BT_QG}

Let us set up the framework. Let $\cG_d = (\cV,\cE)$ be {\em a (undirected) graph}, that is, $\cV$ is a finite or countably infinite set of vertices and $\cE$ is a finite or countably infinite set of edges. For two vertices $u$, $v\in \mathcal{V}$ we shall write $u\sim v$ if there is an edge $e_{u,v}\in \mathcal{E}$ connecting $u$ with $v$. 
 For every $v\in \mathcal{V}$, we denote  the set of edges incident to the vertex $v$ by $\cE_v$ and 
\be\label{eq:combdeg}
\deg(v):= \#\{e\colon e\in\cE_v\} 
\ee
is called {\em the degree} (or {\em combinatorial degree}) of a vertex $v\in\cV$. 
  {\em A path} $\cP$ of (combinatorial) length $n\in\Z_{\ge 0}$ is a subset of vertices $\{v_0,v_1,\dots, v_n\}\subset \cV$ such that $n$ vertices $\{v_0,v_1,\dots, v_{n-1}\}$ are distinct and $v_{k-1}\sim v_k$ for all $k\in \{1,\dots,n\}$. A graph $\cG_d$ is called {\em connected} if for any two vertices  there is a path  connecting them. 
  
We also need the following assumptions on the geometry of $\cG_d$:

\begin{hypothesis}\label{hyp:graph01}
$\cG_d$ is connected and there are no loops or multiple edges.
\end{hypothesis}

\begin{remark}\label{rem:assumptions}
Let us stress that the above assumptions can be made without loss of generality. Namely, if $\cG_d$ is not connected, then one simply needs to consider each connected component separately. The simplicity assumption can  always be achieved by adding the so-called {\em inessential vertices} (vertices of degree two and equipped with Kirchhoff conditions) to the corresponding metric graph. Indeed, adding or removing such a vertex does not change spectral properties of the corresponding quantum graph (see, e.g., \cite[Remark 1.3.3]{bk13}).
\end{remark}

Let us assign each edge $e\in\cE$ with length $|e|\in (0,\infty)$\footnote{We shall always assume that there are no edges having an infinite length, however, see Remark \ref{rem:3.01}(ii).} and direction\footnote{This means that the graph $\cG_d$ is directed}, that is, each edge $e\in \cE$ has one initial vertex $e_o$ and one terminal vertex $e_i$. In this case $\cG=(\cV,\cE,|\cdot|) = (\cG_d,|\cdot|)$ is called {\em a metric graph}. Moreover, every edge $e\in\cE$ can be identified with the interval $(0,|e|)$ and hence we can introduce the Hilbert space $L^2(\cG)$ of functions $f\colon \cG\to \C$ such that 
\[
L^2(\cG) = \bigoplus_{e\in\cE} L^2(e) = \Big\{f=\{f_e\}_{e\in\cE}\colon f_e\in L^2(e),\ \sum_{e\in\cE}\|f_e\|^2_{L^2(e)}<\infty\Big\}.
\] 
 
Let us equip  $\cG$ with the Laplace operator. For every $e\in\cE$ consider the maximal operator $\rH_{e,\max}$ acting on functions $f\in W^{2,2}(e)$ as a negative second derivative. Now consider the maximal operator on $\cG$ defined by
\begin{align}\label{eq:Hmax}
\bH_{\max} & = \bigoplus_{e\in \cE} \rH_{e,\max}, & \rH_{e,\max} & = -\frac{\rD^2}{\rD x_e^2},\quad \dom(\rH_{e,\max}) = W^{2,2}(e).
\end{align}
 For every $f_e\in W^{2,2}(e)$ the following quantities 
 \begin{align}\label{eq:tr_fe}
 f_e(e_o) & := \lim_{x\to e_o} f_e(x), & f_e(e_i) & := \lim_{x\to e_i} f_e(x),
 \end{align}
 and 
 \begin{align}\label{eq:tr_fe'}
 f_e'(e_o) & := \lim_{x\to e_o} \frac{f_e(x) - f_e(e_o)}{|x - e_o|}, & f_e'(e_i) & := \lim_{x\to e_i} \frac{f_e(x) - f_e(e_i)}{|x - e_i|},
 \end{align}
 are well defined. 

We begin with a simple and well known fact (see, e.g., \cite{KM10}).

\begin{lemma}\label{lem:triplet_e}
Let $e\in\cE$ and $\rH_{e,\max}$ be the corresponding maximal operator. The triplet $\Pi_e^0 = \{\C^2,{\Gamma}^0_{0,e},{\Gamma}^0_{1,e}\}$, where the mappings ${\Gamma}^0_{0,e}$, ${\Gamma}^0_{1,e}\colon W^{2,2}(e) \to \C^2$ are defined by
\begin{align}\label{eq:G00}
{\Gamma}^0_{0,e} & \colon f\mapsto \begin{pmatrix} f_e(e_o) \\[2mm] f_e(e_i) \end{pmatrix}, & \Gamma^0_{1,e} & \colon f\mapsto \begin{pmatrix}  f_e'(e_o) \\[2mm]  f_e'(e_i) \end{pmatrix},
\end{align}
is a boundary triplet for $\rH_{e,\max}$. Moreover, the corresponding Weyl function ${M}^0_e \colon \C\setminus\R\to \C^{2\times 2}$ is given by\footnote{Here $\sqrt{z}$ denotes the principal branch of the square root with the cut along the negative semi-axis.}
\be\label{eq:Me0}
{M}^0_e\colon z\mapsto  \sqrt{z}\begin{pmatrix} -\cot(|e|\sqrt{z}) & {\rm csc}(|e|\sqrt{z}) \\ {\rm csc}(|e|\sqrt{z}) & -\cot(|e|\sqrt{z}) \end{pmatrix}.
\ee 
\end{lemma}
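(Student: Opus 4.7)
The plan is to identify the edge $e$ with the interval $(0,|e|)$ by setting $e_o=0$ and $e_i=|e|$. Under the definitions \eqref{eq:tr_fe}--\eqref{eq:tr_fe'}, the boundary values reduce to the usual one-sided traces and the one-sided derivatives become $f_e'(e_o)=f'(0^+)$ and $f_e'(e_i)=-f'(|e|^-)$, i.e.\ $\Gamma^0_{1,e}$ records the \emph{inward/outward} derivatives at the two endpoints (up to a sign flip at $e_i$). With this identification, everything reduces to the standard half-line/interval calculation; I will verify the three requirements of a boundary triplet (see Appendix \ref{app:triplets}): the abstract Green identity, the joint surjectivity of $(\Gamma^0_{0,e},\Gamma^0_{1,e})$, and then compute the Weyl function.

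First I would establish the Lagrange/Green identity. For $f,g\in W^{2,2}(0,|e|)$ two integrations by parts give
\[
(\rH_{e,\max}f,g)-(f,\rH_{e,\max}g)=\bigl[f(x)\overline{g'(x)}-f'(x)\overline{g(x)}\bigr]_0^{|e|},
\]
and expanding this boundary expression in terms of the traces at $0$ and $|e|$ and re-grouping using the sign convention for $f_e'(e_i)$ yields exactly $(\Gamma^0_{1,e}f,\Gamma^0_{0,e}g)_{\C^2}-(\Gamma^0_{0,e}f,\Gamma^0_{1,e}g)_{\C^2}$. Next, surjectivity of $(\Gamma^0_{0,e},\Gamma^0_{1,e})\colon W^{2,2}(e)\to\C^2\oplus\C^2$ follows from a standard interpolation argument: given any prescribed data $(a_0,a_i;b_0,b_i)\in\C^4$, a cubic Hermite polynomial on $[0,|e|]$ already realises them, and this polynomial lies in $W^{2,2}(e)$. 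Together these two facts make $\Pi_e^0$ a boundary triplet for $\rH_{e,\max}=\rH_{e,\min}^\ast$.

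To compute the Weyl function I would use the defining relation $M^0_e(z)\Gamma^0_{0,e}f_z=\Gamma^0_{1,e}f_z$ for every $f_z\in\ker(\rH_{e,\max}-z)$, $z\in\C\setminus\R$. Writing $k=\sqrt{z}$, the solution space of $-f''=zf$ is spanned by $\cos(kx)$ and $\sin(kx)$, so one can parameterise the unique solution with prescribed traces $f(0)=a$ and $f(|e|)=b$ by
\[
f_z(x)=a\cos(kx)+\frac{b-a\cos(k|e|)}{\sin(k|e|)}\sin(kx).
\]
Computing $f_z'(0)$ and $-f_z'(|e|)$ and assembling them into a column vector gives, after a short trigonometric simplification, the claimed matrix
\[
\Gamma^0_{1,e}f_z=k\begin{pmatrix}-\cot(k|e|) & \csc(k|e|)\\ \csc(k|e|) & -\cot(k|e|)\end{pmatrix}\begin{pmatrix}a\\ b\end{pmatrix}.
\]

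The only mild subtlety I foresee is the sign bookkeeping in \eqref{eq:tr_fe'}: the outward-derivative convention at $e_i$ introduces a minus sign that must propagate correctly through both the Green identity and the Weyl-function calculation, and it is what makes the off-diagonal entries of $M^0_e(z)$ symmetric rather than antisymmetric. Aside from this, all steps are routine ODE/Sobolev computations and I would not dwell on them.
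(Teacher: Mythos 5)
Your proposal is correct: the sign convention $f_e'(e_i)=-f'(|e|^-)$ from \eqref{eq:tr_fe'} is handled properly in both the Green identity and the Weyl-function computation, and the resulting matrix agrees with \eqref{eq:Me0}. The paper declares this proof "straightforward" and omits it, and your argument is exactly the standard verification it has in mind, so there is nothing to compare beyond noting that you have supplied the routine details.
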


\begin{proof}
The proof is straightforward and we leave it to the reader. 
\end{proof}

It is easy to see that the Green's formula
\begin{align}\label{eq:Green}
\begin{split}
(\bH_{\max}f,g)_{L^2(\cG)} - (f,\bH_{\max}g)_{L^2(\cG)} 
&= \sum_{e\in\cE}  f_e'(e_i)g_e(e_i)^* - f_e(e_i)(g_e'(e_i))^* \\
 & \quad + \sum_{e\in\cE}  f_e'(e_o)g_e(e_o)^* - f_e(e_o)(g_e'(e_o))^*\\
&= \sum_{v\in\cV} \sum_{e\in\cE_v} f_e'(v)g_e(v)^* - f_e(v) (g_e'(v))^*,
\end{split} 
\end{align}
holds for all $f$, $g\in \dom(\bH_{\max})\cap L^2_c(\cG)$, where $L^2_c(\cG)$ is a subspace consisting of functions from $L^2(\cG)$ vanishing everywhere on $\cG$ except finitely many edges, and the asterisk denotes complex conjugation. 
 One would expect that a boundary triplet for $\bH_{\max}$ can be constructed as a direct sum $\Pi = \oplus_{e\in\cE} \Pi_e^0$ of boundary triplets $\Pi_e^0$, however, it is not true once $\inf_{e\in\cE} |e| = 0$ (see \cite{KM10} for further details). Using Theorem \ref{th_bt_criterion}, we proceed as follows (see also \cite[Section 4]{KM10}). 
First of all, \eqref{eq:Me0} extends to a meromorphic function with simple poles $\frac{\pi^2}{|e|^2}k^2$, $k\in\N$. Hence for every $e\in\cE$ we set 
\begin{align}\label{eq:RQ_e}
\rR_e & := \sqrt{|e|}I_2, & \rQ_e & := \lim_{z\to 0}{M}^0_e(z) = \frac{1}{|e|}\begin{pmatrix} -1 & 1 \\ 1 & -1 \end{pmatrix},
\end{align}
and then we define the new mappings ${\Gamma}_{0,e}$, ${\Gamma}_{1,e}\colon W^{2,2}(e) \to \C^2$ by 
\begin{align}\label{eq:Ge}
\Gamma_{0,e} & := \rR_e {\Gamma}^0_{0,e}, & \Gamma_{1,e} & := \rR_e^{-1}( {\Gamma}^0_{1,e} - \rQ_e{\Gamma}^0_{0,e}),
\end{align}
that is, 
\begin{align}\label{eq:G01_e}
 \Gamma_{0,e} & \colon f\mapsto \begin{pmatrix} \sqrt{|e|}f_e(e_o) \\[2mm]  \sqrt{|e|}f_e(e_i) \end{pmatrix}, & \Gamma_{1,e} & \colon f\mapsto 
 \frac{1}{|e|^{3/2}}\begin{pmatrix} |e|f_e'(e_o) +f_e(e_o) - f_e(e_i) \\[2mm]  |e|f_e'(e_i) - f_e(e_o) + f_e(e_i)\end{pmatrix}.
\end{align}
Clearly, $\Pi_e = \{\C^2,\Gamma_{0,e},\Gamma_{1,e}\}$ is also a boundary triplet for $\rH_{\max,e}$. In addition, the following claim holds.

\begin{theorem}\label{th:triplet}
Suppose $\sup_{e\in\cE} |e|<\infty$. Then the direct sum of boundary triplets
\be
\Pi = \bigoplus_{e\in\cE} \Pi_e = \{\cH,\Gamma_0,\Gamma_1\},\qquad \cH=\bigoplus_{e\in\cE}\C^2,\quad \Gamma_j:=\bigoplus_{e\in\cE} \Gamma_{j,e},\ \ j\in\{0,1\},
\ee
is a boundary triplet for the operator $\bH_{\max}$. Moreover, the corresponding Weyl function is given by
\begin{align}\label{eq:M}
M(z) & = \bigoplus_{e\in\cE} M_e(z), & 
 M_e(z) & = \rR_e^{-1}({M}^0_e(z) - \rQ_e)\rR_e^{-1},
\end{align}
for all $z\in\C\setminus\R$.
\end{theorem}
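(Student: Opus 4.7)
My plan is to apply the general direct-sum criterion for boundary triplets stated in Appendix \ref{app:triplets} (Theorem \ref{th_bt_criterion}): given individual boundary triplets $\Pi_e = \{\C^2, \Gamma_{0,e}, \Gamma_{1,e}\}$ for the maximal operators $\rH_{e,\max}$, the direct sum $\Pi = \bigoplus_{e\in\cE}\Pi_e$ is a boundary triplet for $\bH_{\max} = \bigoplus_{e\in\cE}\rH_{e,\max}$ if and only if the corresponding Weyl functions $M_e$ satisfy, at some (equivalently, every) $z_0 \in \C \setminus \R$,
\begin{equation*}
\sup_{e\in\cE}\|M_e(z_0)\|_{\cB(\C^2)} < \infty \qquad \text{and} \qquad \sup_{e\in\cE}\|(\im M_e(z_0))^{-1}\|_{\cB(\C^2)} < \infty.
\end{equation*}
Once these uniform bounds are in place, the boundary-triplet property and the representation $M(z) = \bigoplus_e M_e(z)$ follow automatically. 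The formula \eqref{eq:M} for each $M_e$ itself is a direct consequence of the transformation rules of Appendix \ref{ss:a4}: since $\Pi_e$ is obtained from $\Pi_e^0$ via $\Gamma_{0,e} = \rR_e\Gamma^0_{0,e}$ and $\Gamma_{1,e} = \rR_e^{-1}(\Gamma^0_{1,e} - \rQ_e \Gamma^0_{0,e})$ with invertible $\rR_e$ and self-adjoint $\rQ_e$, the associated Weyl function transforms as $M_e(z) = \rR_e^{-1}(M_e^0(z) - \rQ_e)\rR_e^{-1} = |e|^{-1}(M_e^0(z) - \rQ_e)$.

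The substantive content is the verification of the two uniform bounds at, say, $z_0 = \I$. The key observation is that the subtraction of $\rQ_e = \lim_{z\to 0} M_e^0(z)$ cancels exactly the singular $1/|e|$ behaviour of $M_e^0(z)$ as $|e|\to 0$. Using the Taylor expansions $\sqrt{z}\cot(|e|\sqrt{z}) = |e|^{-1} - |e|z/3 + O(|e|^3|z|^2)$ and $\sqrt{z}\csc(|e|\sqrt{z}) = |e|^{-1} + |e|z/6 + O(|e|^3|z|^2)$, one obtains
\begin{equation*}
M_e^0(z) - \rQ_e = \frac{|e|z}{6}\begin{pmatrix} 2 & 1 \\ 1 & 2 \end{pmatrix} + O(|e|^3 |z|^2),
\end{equation*}
and hence
\begin{equation*}
M_e(z) = \frac{z}{6}\begin{pmatrix} 2 & 1 \\ 1 & 2 \end{pmatrix} + O(|e|^2 |z|^2).
\end{equation*}
In particular $M_e(\I)$ converges as $|e|\to 0$ to a matrix whose imaginary part is the strictly positive-definite matrix $\frac{1}{6}\bigl(\begin{smallmatrix} 2 & 1 \\ 1 & 2\end{smallmatrix}\bigr)$. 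For $|e|$ in a compact range $[\varepsilon, \sup_{e'}|e'|]$ (with $\varepsilon$ small enough that the small-$|e|$ asymptotics already guarantee the bounds), both $M_e(\I)$ and $\im M_e(\I)$ are entire real-analytic functions of the single parameter $|e|>0$ on the ray $\{|e|\sqrt{\I}\}$ (which lies strictly in the open upper half-plane, hence away from every pole of $\cot$ and $\csc$), so uniform bounds on a compact interval follow from continuity and compactness.

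The main obstacle, and the reason for the two pieces of the construction, is the uniform lower bound on $\im M_e(\I)$. The rescaling $\rR_e = \sqrt{|e|}\,\rI_2$ together with the subtraction of $\rQ_e$ is what tames the $|e|\to 0$ regime: without either ingredient, $\|M_e^0(\I)\|$ would blow up like $|e|^{-1}$ and the direct sum $\bigoplus_e \Pi_e^0$ would fail to be a boundary triplet. In the opposite regime $|e|\to \infty$, the asymptotics $\cot(|e|\sqrt{\I})\to -\I$ and $\csc(|e|\sqrt{\I})\to 0$ keep $M_e^0(\I)$ bounded, but the prefactor $|e|^{-1}$ now forces $M_e(\I)\to 0$, so $\im M_e(\I)$ degenerates and its inverse blows up; the hypothesis $\sup_e|e|<\infty$ is precisely what excludes this degeneration. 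Together, the rescaling controls small edges and the finite-supremum hypothesis controls large ones, yielding both uniform estimates and thus the claimed boundary-triplet property.
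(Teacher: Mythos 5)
Your proposal is correct and follows essentially the same route as the paper: both invoke the direct-sum criterion of Theorem \ref{th_bt_criterion} and reduce the claim to the uniform bounds \eqref{WF_criterion} on $M_e(\I)$ and $(\im M_e(\I))^{-1}$, which the regularization \eqref{eq:RQ_e}--\eqref{eq:Ge} is designed to secure. The only difference is that the paper delegates this verification to the analogous computation in \cite[Theorem 4.1]{KM10}, whereas you carry out the Taylor expansion of $M_e^0(z)-\rQ_e$ explicitly; your expansion and the resulting limit $\im M_e(\I)\to\frac{1}{6}\bigl(\begin{smallmatrix}2&1\\1&2\end{smallmatrix}\bigr)>0$ as $|e|\to 0$ are accurate.
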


\begin{proof}
By Theorem \ref{th_bt_criterion}, we need to verify either of the conditions \eqref{WF_criterion} or \eqref{III.2.2_02}. However, this can be done as in the proof of \cite[Theorem 4.1]{KM10} line by line since 
\[
M_e(z) = \frac{\sqrt{z}}{|e|}\begin{pmatrix} -\cot(|e|\sqrt{z}) + \frac{1}{|e|} & {\rm csc}(|e|\sqrt{z}) - \frac{1}{|e|} \\ {\rm csc}(|e|\sqrt{z}) - \frac{1}{|e|} & -\cot(|e|\sqrt{z}) + \frac{1}{|e|} \end{pmatrix},\quad z\in\C\setminus\R,
\]
and we omit the details.
\end{proof}

Moreover, similarly to \cite[Proposition 4.4]{KM10} one can also prove the following

\begin{lemma}\label{lem:Munif}
Suppose $\sup_{e\in\cE} |e|<\infty$. Then the Weyl function $M(x)$ given by \eqref{eq:M} uniformly 
tends to $-\infty$ as $x\to -\infty$, that is, for every $N>0$ there is $x_N<0$ such that
\[
M(x) <-N\cdot{\rm I}_{\cH}
\]
for all $x<x_N$.
\end{lemma}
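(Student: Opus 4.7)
The plan is to exploit the block-diagonal structure $M(x)=\bigoplus_{e\in\cE}M_e(x)$: the inequality $M(x)<-N\cdot \rI_{\cH}$ is equivalent to the family of two-by-two inequalities $M_e(x)<-N\cdot \rI_2$ holding simultaneously for every $e\in\cE$. Since $L:=\sup_{e\in\cE}|e|<\infty$, the task reduces to establishing a lower bound on $-M_e(-y^2)$ that is uniform in $|e|\in(0,L]$ and tends to infinity with $y$. So I would aim to show both eigenvalues of $M_e(-y^2)$ diverge to $-\infty$ uniformly in the edge length.

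First I would substitute $z=-y^2$ ($y>0$) into the explicit formula for $M_e(z)$ displayed in the proof of Theorem \ref{th:triplet}. Using $\sqrt{-y^2}=\I y$, together with $\cot(\I t)=-\I\coth t$ and $\csc(\I t)=-\I\csch t$, a direct calculation gives
\[
M_e(-y^2)=\frac{1}{|e|}\begin{pmatrix}-y\coth(|e|y)+\frac{1}{|e|} & y\csch(|e|y)-\frac{1}{|e|}\\[1mm] y\csch(|e|y)-\frac{1}{|e|} & -y\coth(|e|y)+\frac{1}{|e|}\end{pmatrix}.
\]
This matrix has constant diagonal and hence is diagonalised by $(1,\pm 1)^\top/\sqrt{2}$. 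Applying the half-angle identities $\coth(s)-\csch(s)=\tanh(s/2)$ and $\coth(s)+\csch(s)=\coth(s/2)$, its eigenvalues reduce to
\[
\lambda_+(y,|e|)=-\frac{y}{|e|}\tanh\Bigl(\frac{|e|y}{2}\Bigr),\qquad \lambda_-(y,|e|)=\frac{2}{|e|^2}\Bigl(1-\frac{|e|y}{2}\coth\frac{|e|y}{2}\Bigr).
\]

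Next I would prove that each eigenvalue diverges to $-\infty$ uniformly in $|e|\in(0,L]$. For $\lambda_+$, the elementary bound $\tanh(t)\ge\tanh(1)\min(t,1)$ for $t\ge 0$ yields $-\lambda_+(y,|e|)\ge\tanh(1)\min(y^2/2,\,y/L)$. For $\lambda_-$, set $t=|e|y/2$; the map $t\mapsto t\coth t-1$ is nonnegative and increasing on $(0,\infty)$, with $t\coth t-1\sim t^2/3$ as $t\to 0^+$ and $t\coth t-1\sim t$ as $t\to\infty$, hence there exists $c>0$ with $t\coth t-1\ge c\min(t^2,t)$ on all of $(0,\infty)$. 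Substituting back gives $-\lambda_-(y,|e|)\ge 2c\min(y^2/4,\,y/L)$. Combining these, $M_e(-y^2)\le -C\min(y^2/4,\,y/L)\cdot \rI_2$ with $C>0$ independent of $e$, so for any $N>0$ one chooses $y_N$ with $C\min(y_N^2/4,y_N/L)>N$ and sets $x_N:=-y_N^2$.

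The main obstacle I expect is the uniform control of $\lambda_-$ in the small-argument regime $|e|y\ll 1$: the prefactor $1/|e|^2$ blows up, while $1-t\coth t$ vanishes like $-t^2/3$, so only the precise quadratic behaviour at the origin allows the product to be bounded below by a quantity depending on $y$ and $L$ alone. Splitting the analysis into the two regimes $|e|y\le 2$ (where a Taylor estimate on $t\coth t-1$ is used) and $|e|y>2$ (where the linear growth dominates) handles this cleanly, and the analogous splitting takes care of $\lambda_+$.
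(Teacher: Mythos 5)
Your proof is correct and follows essentially the same route as the paper, which gives no argument of its own but defers to \cite[Proposition 4.4]{KM10}, where precisely this explicit diagonalization of the $2\times 2$ blocks (eigenvalues $-\tfrac{y}{|e|}\tanh(\tfrac{|e|y}{2})$ and $\tfrac{2}{|e|^2}\bigl(1-\tfrac{|e|y}{2}\coth\tfrac{|e|y}{2}\bigr)$ via the half-angle identities) and the two-regime uniform lower bounds are carried out. The only wording to tighten is the opening claim that $M(x)<-N\,\rI_{\cH}$ is \emph{equivalent} to $M_e(x)<-N\,\rI_2$ for all $e$ — for an infinite direct sum one needs the blockwise bound to be uniform in $e$, which your quantitative estimate $M_e(-y^2)\le -C\min(y^2/4,\,y/L)\,\rI_2$ does supply, so the conclusion stands.
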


We shall also need another boundary triplet for $\bH_{\max}$, which can be obtained from the triplet $\Pi$ by regrouping all its components with respect to the vertices:
\be\label{eq:tiPi}
\cH_{\cG} = \bigoplus_{v\in\cV}\C^{\deg(v)},\qquad \wt{\Gamma}_j = \bigoplus_{v\in\cV}\wt\Gamma_{j,v},\quad j\in\{0,1\},
\ee
where
\be\label{eq:tiG0}
\wt\Gamma_{0,v}\colon f\mapsto \big\{\sqrt{|e|} f_e(v)\big\}_{e\in\cE_v}, 
\ee
and 
\be\label{eq:tiG1}
\wt\Gamma_{1,v}\colon f\mapsto \big\{|e|^{-1/2} f'_e(v) + (-1)^{q_e(v)}|e|^{-3/2}(f_e(e_o) - f_e(e_i))\big\}_{e\in\cE_v}, 
\ee
with 
\be\label{eq:qpm}
q_e(v):=\begin{cases} 1, & v=e_o,\\ -1, & v=e_i. \end{cases}
\ee

\begin{corollary}\label{cor:2.4}
If $\sup_{e\in\cE} |e|<\infty$, then the triplet $\Pi_{\cG} = \{\cH_{\cG}, \wt\Gamma_0,\wt\Gamma_1\}$ given by \eqref{eq:tiPi}--\eqref{eq:qpm} is a boundary triplet for $\bH_{\max}$.
\end{corollary}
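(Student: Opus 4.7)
The plan is to realize $\Pi_{\cG}$ as nothing more than a unitary relabeling of the boundary triplet $\Pi$ from Theorem \ref{th:triplet}, and then to invoke the elementary principle that the boundary-triplet property is preserved when the same unitary is applied simultaneously to both boundary maps. No new analytic input beyond Theorem \ref{th:triplet} should be needed; the whole argument is essentially combinatorial bookkeeping.

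First I would identify the two boundary spaces. By Hypothesis \ref{hyp:graph01} the graph $\cG_d$ has neither loops nor multiple edges, so every edge carries two distinct endpoints, each of which is attached to a distinct vertex. Consequently the set of ``half-edges'' $\{(e,o)\colon e\in\cE\}\cup\{(e,i)\colon e\in\cE\}$ indexing the summands of $\cH=\bigoplus_{e\in\cE}\C^2$ is in natural bijection with $\{(v,e)\colon v\in\cV,\ e\in\cE_v\}$ indexing the summands of $\cH_{\cG}=\bigoplus_{v\in\cV}\C^{\deg(v)}$, via the rules $(e,o)\leftrightarrow(e_o,e)$ and $(e,i)\leftrightarrow(e_i,e)$. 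Permuting coordinates along this bijection defines a canonical unitary $U\colon\cH\to\cH_{\cG}$.

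Next I would verify coordinate-by-coordinate that $\wt\Gamma_j=U\Gamma_j$ for $j=0,1$. For $j=0$ this is immediate from comparing \eqref{eq:G01_e} with \eqref{eq:tiG0}: the component of $\Gamma_0 f$ at the half-edge $(e,\epsilon)$ equals $\sqrt{|e|}\,f_e(e_\epsilon)$, which is exactly what \eqref{eq:tiG0} prescribes at the vertex $v=e_\epsilon$. For $j=1$, inspection of \eqref{eq:G01_e} shows that the components of $\Gamma_{1,e}f$ indexed by $o$ and $i$ are, respectively,
\[
|e|^{-1/2}f_e'(e_o)+|e|^{-3/2}\bigl(f_e(e_o)-f_e(e_i)\bigr) \quad\text{and}\quad |e|^{-1/2}f_e'(e_i)-|e|^{-3/2}\bigl(f_e(e_o)-f_e(e_i)\bigr),
\]
and regrouping these by the vertex $v\in\{e_o,e_i\}$ reproduces \eqref{eq:tiG1}, with the sign of the $|e|^{-3/2}$ term flipping precisely as encoded by the convention \eqref{eq:qpm}.

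Finally I would close the argument by the following general fact, which needs only a one-line verification: if $\{\cH,\Gamma_0,\Gamma_1\}$ is a boundary triplet for $\bH_{\max}=\bH_{\min}^{\ast}$ and $U\colon\cH\to\cH'$ is unitary, then $\{\cH',U\Gamma_0,U\Gamma_1\}$ is again a boundary triplet for $\bH_{\max}$, since the Green identity $(\Gamma_1 f,\Gamma_0 g)_{\cH}-(\Gamma_0 f,\Gamma_1 g)_{\cH}=(\bH_{\max}f,g)-(f,\bH_{\max}g)$ is preserved under applying $U$ on both slots of each inner product, and surjectivity of $(\Gamma_0,\Gamma_1)\colon\dom(\bH_{\max})\to\cH\oplus\cH$ transfers verbatim to $(U\Gamma_0,U\Gamma_1)\colon\dom(\bH_{\max})\to\cH_{\cG}\oplus\cH_{\cG}$. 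The only potential obstacle is making sure that the regrouping of half-edges by vertex is genuinely bijective — this is exactly where Hypothesis \ref{hyp:graph01} enters decisively, since loops or multiple edges would destroy injectivity — and that the sign bookkeeping in \eqref{eq:qpm} aligns with the two opposite signs appearing in the second row of \eqref{eq:G01_e}.
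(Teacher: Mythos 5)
Your proposal is correct and follows essentially the same route as the paper: the paper's proof likewise defines the unitary $U\colon\cH\to\cH_{\cG}$ regrouping half-edge coordinates by vertex, observes that $\wt\Gamma_j=U\Gamma_j$ for $j\in\{0,1\}$, and concludes from Theorem \ref{th:triplet}. Your write-up merely spells out the coordinate check and the (standard) invariance of the boundary-triplet property under a simultaneous unitary, both of which the paper leaves implicit.
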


\begin{proof}
Every $f\in \cH$ and $\ti{f}\in \cH_{\cG}$ can be written as follows $f = \{(f_{eo},f_{ei})\}_{e\in\cE}$ and 
$\ti{f} = \{(\ti{f}_{e,v})_{e\in\cE_v}\}_{v\in\cV}$, respectively. Define the operator $U:\cH\to \cH_{\cG}$ by 
\be\label{eq:U}
U \colon  \{(f_{eo},f_{ei})\}_{e\in\cE} \mapsto \{({f}_{e,v})_{e\in\cE_v}\}_{v\in\cV},\quad f_{e,v}:=\begin{cases}f_{eo}, & v=e_o,\\ f_{ei}, & v=e_i.\end{cases}
\ee
Clearly, $U$ is a unitary operator and moreover
\be\label{eq:tiGamma}
\wt\Gamma_j = U\Gamma_j,\quad j\in\{0,1\}.
\ee
This completes the proof.
\end{proof}

Let us also mention other important relations. 

\begin{corollary}\label{cor:2.05}
The Weyl function $M_\cG$ corresponding to the boundary triplet  \eqref{eq:tiPi}--\eqref{eq:qpm} is given by
\be\label{eq:tiM}
M_\cG(z) = UM(z)U^{-1},
\ee
where $M$ is the Weyl function corresponding to the triplet $\Pi$ constructed in Theorem \ref{th:triplet} and $U$ is the operator defined by \eqref{eq:U}. 
\end{corollary}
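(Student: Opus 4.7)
The plan is to unwind the definition of the Weyl function for both triplets and exploit the intertwining relation $\wt\Gamma_j = U\Gamma_j$ from \eqref{eq:tiGamma}. Recall (see Appendix \ref{app:triplets}) that for a boundary triplet $\{\cH,\Gamma_0,\Gamma_1\}$ of $\bH_{\max}$, the associated Weyl function is characterized by
\[
M(z)\,\Gamma_0 f_z = \Gamma_1 f_z, \qquad f_z \in \gN_z := \ker(\bH_{\max} - z),\ \ z\in\C\setminus\R,
\]
together with the fact that $\Gamma_0\!\restriction\!\gN_z$ is a bijection from $\gN_z$ onto $\cH$. The analogous identity characterizes $M_\cG$ with respect to the triplet $\Pi_\cG = \{\cH_\cG,\wt\Gamma_0,\wt\Gamma_1\}$.

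Next I would plug \eqref{eq:tiGamma} into the defining relation for $\wt\Gamma_1 f_z$: for every $f_z\in\gN_z$,
\[
\wt\Gamma_1 f_z = U\Gamma_1 f_z = U M(z)\Gamma_0 f_z = U M(z) U^{-1} (U\Gamma_0 f_z) = U M(z) U^{-1}\wt\Gamma_0 f_z.
\]
Since $U\colon\cH\to\cH_\cG$ is unitary by \eqref{eq:U} and $\Gamma_0\!\restriction\!\gN_z$ is a bijection onto $\cH$, it follows that $\wt\Gamma_0\!\restriction\!\gN_z = U(\Gamma_0\!\restriction\!\gN_z)$ is a bijection onto $\cH_\cG$. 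Therefore the bounded operator $UM(z)U^{-1}$ satisfies the defining relation of the Weyl function for $\Pi_\cG$, and by uniqueness of the Weyl function we conclude $M_\cG(z) = UM(z)U^{-1}$, as claimed.

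There is no real obstacle here; the argument is entirely formal and relies only on the unitarity of $U$ and the bijectivity of $\Gamma_0\!\restriction\!\gN_z$, both of which are already in place from Theorem \ref{th:triplet} and Corollary \ref{cor:2.4}. The only point requiring a line of justification is that $\wt\Gamma_0\!\restriction\!\gN_z$ inherits bijectivity onto $\cH_\cG$ from the intertwining with $U$.
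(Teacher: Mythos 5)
Your argument is correct and is exactly the computation the paper has in mind: the corollary is stated without proof precisely because $M_\cG(z)=\wt\Gamma_1(\wt\Gamma_0\restriction\gN_z)^{-1}=U\Gamma_1(U\Gamma_0\restriction\gN_z)^{-1}=UM(z)U^{-1}$ follows immediately from Definition \ref{def_Weylfunc} and the intertwining \eqref{eq:tiGamma}. Your extra remark that $\wt\Gamma_0\restriction\gN_z$ inherits bijectivity onto $\cH_\cG$ from the unitarity of $U$ is the only point needing mention, and you handle it correctly.
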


\begin{remark}
If
\begin{align*}
\Gamma_0^0 & :=  \bigoplus_{e\in\cE}\wt\Gamma_{j,e}^0, & \Gamma_1^0 & :=  \bigoplus_{e\in\cE}\wt\Gamma_{1,e}^0,
\end{align*}
where  $\Gamma_0^0$ and $\Gamma_1^0$ are given by \eqref{eq:G00}, then 
\begin{align}\label{eq:G2.19}
\wt\Gamma_0^0 & := U\Gamma_0^0, & \wt\Gamma_1^0 & := U\Gamma_1^0,
\end{align}
have the following form
\be\label{eq:wtG0}
\wt\Gamma_0^0 = \bigoplus_{v\in\cV}\wt\Gamma_{0,v}^0,\qquad \wt\Gamma_{0,v}^0\colon f\mapsto  \{f_e(v)\}_{e\in\cE_v},
\ee
and 
\be\label{eq:wtG1}
\wt\Gamma_1^0 = \bigoplus_{v\in\cV}\wt\Gamma_{1,v}^0,\qquad \wt\Gamma_{1,v}^0\colon f\mapsto  \{f'_e(v)\}_{e\in\cE_v}.
\ee
\end{remark}

\begin{corollary}\label{cor:Munif}
Let $M_\cG$ be the Weyl function corresponding to the boundary triplet $\Pi_\cG$. Then $M_\cG(x)$ uniformly tends to $-\infty$ as $x\to -\infty$.
\end{corollary}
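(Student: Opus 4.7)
The plan is to deduce this directly from Lemma \ref{lem:Munif} combined with the unitary equivalence established in Corollary \ref{cor:2.05}. The essential observation is that uniform lower/upper operator bounds are preserved under unitary conjugation: if $A \le B$ as self-adjoint operators on $\cH$ and $U\colon \cH\to\cH_\cG$ is unitary, then $UAU^{-1}\le UBU^{-1}$ on $\cH_\cG$, and in particular $U(c\,\rI_\cH)U^{-1} = c\,\rI_{\cH_\cG}$ for any scalar $c\in\R$.

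First I would recall from Corollary \ref{cor:2.05} that $M_\cG(z) = U M(z) U^{-1}$ for $z\in\C\setminus\R$, with $U$ the unitary given by \eqref{eq:U}. Since $M(x)$ is self-adjoint for $x<0$ (being the value of a Weyl function on the real axis away from the spectrum of the associated self-adjoint extension), so is $M_\cG(x)$, and unitary conjugation preserves spectral inequalities.

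Next, given any $N>0$, Lemma \ref{lem:Munif} supplies $x_N<0$ such that $M(x) < -N\cdot\rI_\cH$ for all $x<x_N$. Conjugating both sides by $U$ yields
\[
M_\cG(x) = U M(x) U^{-1} < -N\cdot U\rI_\cH U^{-1} = -N\cdot\rI_{\cH_\cG}
\]
for all $x<x_N$, which is precisely the definition of uniform convergence of $M_\cG(x)$ to $-\infty$ as $x\to-\infty$.

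There is no real obstacle here; the claim is an immediate transport of Lemma \ref{lem:Munif} through the unitary $U$ of \eqref{eq:U}, and the whole argument fits in a few lines.
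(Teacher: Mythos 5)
Your argument is correct and is exactly the paper's proof: the authors likewise deduce the corollary immediately from Lemma \ref{lem:Munif} together with the relation $M_\cG(z)=UM(z)U^{-1}$ of \eqref{eq:tiM}, the only difference being that you spell out the (routine) fact that unitary conjugation preserves the operator inequality $M(x)<-N\cdot\rI$.
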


\begin{proof}
It is an immediate consequence of Lemma \ref{lem:Munif} and \eqref{eq:tiM}. 
\end{proof}

Let $\Theta$ be a linear relation in $\cH_{\cG}$ and define the following operator
\be\label{eq:Htheta}
\begin{split}
\bH_{\Theta}:=\bH_{\max}\upharpoonright &{\dom(\bH_{\Theta})},\quad \\
&\dom(\bH_{\Theta}):=\big\{f\in \dom(\bH_{\max})\colon \{\wt\Gamma_0f,\wt\Gamma_1f\}\in\Theta\big\},
\end{split}
\ee
where the mappings $\wt\Gamma_0$ and $\wt\Gamma_1$ are defined by \eqref{eq:tiPi}--\eqref{eq:tiG1}. Since $\Pi_\cG$ is a boundary triplet for $\bH_{\max}$, every proper extension of the operator $\bH_{\min}$ has the form \eqref{eq:Htheta}. Moreover, by Theorem \ref{th:properext},  \eqref{eq:Htheta} establishes a bijective correspondence between the set $\Ext(\bH_{\min})$ of proper extensions of $\bH_{\min}$ and the set of all linear relations in $\cH_{\cG}$. The next result summarizes basic spectral properties of operators $\bH_{\Theta}$ characterized in terms of the corresponding boundary relation $\Theta$. In particular, we are able to describe all self-adjoint extensions of the minimal operator $\bH_{\min}$. 

\begin{theorem}\label{th:HTheta}
Suppose $\sup_{e\in\cE} |e|<\infty$. Also, let $\Theta$ be a linear relation in $\cH_{\cG}$ and let $\bH_{\Theta}$ be the corresponding operator \eqref{eq:Htheta}. Then:
\begin{itemize}
\item[(i)] $\bH_{\Theta}^\ast = \bH_{\Theta^\ast}$.
\item[(ii)] $\bH_{\Theta}$ is closed if and only if the linear relation $\Theta$ is closed.
\item[(iii)] $\bH_{\Theta}$ is symmetric if and only if $\Theta$ is symmetric and, moreover,
\[
{\rm n}_\pm(\bH_{\Theta}) = {\rm n}_\pm(\Theta).
\] 
In particular, $\bH_{\Theta}$ is self-adjoint if and only if so is  $\Theta$.
\end{itemize}  

Assume in addition that $\Theta$ is a self-adjoint linear relation (hence $\bH_{\Theta}$ is also self-adjoint). Then:
\begin{itemize}
\item[(iv)] $\bH_{\Theta}$ is lower semibounded if and only if the same is true for $\Theta$.
\item[(v)]  $\bH_{\Theta}$ is nonnegative (positive definite) if and only if $\Theta$ is nonnegative (positive definite).
\item[(vi)] The total multiplicities of negative spectra of $\bH_{\Theta}$ and $\Theta$ coincide,
\be\label{eq:kappaT}
\kappa_-(\bH_{\Theta}) = \kappa_-(\Theta).
\ee
\item[(vii)] For every $p\in(0,\infty]$ the following equivalence holds
\be\label{eq:SpnegTheta}
\bH_{\Theta}^- \in \mathfrak{S}_p(L^2(\cG)) \ \Longleftrightarrow \ \Theta^- \in \mathfrak{S}_p(\cH_{\cG}).
\ee
\item[(viii)] If the negative spectrum of $\bH_{\Theta}$ (or equivalently $\Theta$) is discrete, then for every $\gamma\in (0,\infty)$ the following equivalence holds
\be\label{eq:weakSp}
\lambda_j(\bH_{\Theta}) =  j^{-\gamma}(a+o(1)) \ \Longleftrightarrow \ \lambda_j({\Theta})=  j^{-\gamma}(b+o(1)),
\ee
as $j\to \infty$, where either $ab\not = 0$ or $a = b = 0$.

\item[(ix)] If, in addition, $\Theta$ is lower semibounded, then $\inf\sigma_{\ess}(\bH_\Theta)>0$ $(\inf\sigma_{\ess}(\bH_\Theta)=0)$ holds exactly when $\inf\sigma_{\ess}(\Theta)>0\ ($respectively, $\inf\sigma_{\ess}(\Theta)=0)$.

\item[(x)] Also, let $\wt\Theta=\wt\Theta^\ast \in \widetilde{\mathcal{C}}(\cH_{\cG})$. Then for every $p\in(0,\infty]$ the following equivalence holds for the corresponding Neumann--Schatten ideals
\be\label{eq:SpequivT}
          (\bH_{\Theta}-\I)^{-1} - (\bH_{\wt{\Theta}}-\I)^{-1}\in{\mathfrak S}_p(L^2(\cG)) \ 
\Longleftrightarrow \          
                  (\Theta-\I)^{-1} - (\widetilde{\Theta}-\I)^{-1}\in{\mathfrak S}_p(\cH_{\cG}).
\ee
      If $\dom(\Theta) = \dom(\widetilde{\Theta})$ holds in addition, then 
\be\label{eq:SpimplicT}
         \overline{\Theta - \widetilde{\Theta}} \in \mathfrak{S}_p(\cH_{\cG})\ \Longrightarrow \ 
         ({\bH}_{\Theta}-\I)^{-1} - ({\bH}_{\widetilde{\Theta}}-\I )^{-1}\in {\mathfrak S}_p(L^2(\cG)).
\ee
\item[(xi)] The spectrum of $\bH_\Theta$ is purely discrete if and only if $\#\{e\in\cE\colon |e|>\varepsilon\}$ is finite for every $\varepsilon>0$ and the spectrum of the linear relation $\Theta$ is purely discrete.
\end{itemize} 
\end{theorem}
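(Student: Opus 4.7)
The plan is to view every claim as a specialization of a standard boundary-triplet result applied to the triplet $\Pi_\cG=\{\cH_\cG,\wt\Gamma_0,\wt\Gamma_1\}$ from Corollary \ref{cor:2.4}, with the key analytic input being Corollary \ref{cor:Munif}. Items (i)--(iii) follow immediately: since $\Pi_\cG$ is a boundary triplet for $\bH_{\max}=\bH_{\min}^*$, the parametrization \eqref{eq:Htheta} is a bijection between proper extensions of $\bH_{\min}$ and linear relations on $\cH_\cG$, and the standard machinery (see Appendix \ref{app:triplets}) delivers $\bH_\Theta^*=\bH_{\Theta^*}$, the transfer of closedness/symmetry, and the equality of deficiency indices ${\rm n}_\pm(\bH_\Theta)={\rm n}_\pm(\Theta)$, hence of self-adjointness.

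For the remaining items I intend to use the Krein-type resolvent formula
\begin{equation*}
(\bH_\Theta-z)^{-1}-(\bH_0-z)^{-1}=\gamma(z)\bigl(\Theta-M_\cG(z)\bigr)^{-1}\gamma(\bar z)^*,
\end{equation*}
where $\bH_0$ is the reference self-adjoint extension corresponding to $\wt\Gamma_0 f=0$ and $\gamma(z)$ is the associated $\gamma$-field, together with Corollary \ref{cor:Munif}. For (iv)--(vi) the uniform behavior $M_\cG(x)\to-\infty$ ensures that $\Theta-M_\cG(x)$ is positive definite for $x$ sufficiently negative as soon as $\Theta$ is lower semibounded, yielding (iv); a Birman--Schwinger type count then identifies negative eigenvalues of $\bH_\Theta$ with zeros of $\Theta-M_\cG(x)$ as $x$ traverses $(-\infty,0)$ along the monotone family $M_\cG$, proving (v) and \eqref{eq:kappaT} in (vi). Items (vii) and (viii) follow by Schatten-ideal analysis of the same formula, using the boundedness of $\gamma(z)$ and the one-to-one correspondence between singular numbers of $(\bH_\Theta-z)^{-1}$ on the negative axis and those of $(\Theta-M_\cG(z))^{-1}$. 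For (ix), lower semiboundedness reduces the essential spectrum question to the behavior near a fixed negative point, where the same resolvent comparison transfers the property through $\Theta\leftrightarrow\bH_\Theta$. For (x), the two-sided equivalence \eqref{eq:SpequivT} is the resolvent difference formula applied twice, and the implication \eqref{eq:SpimplicT} follows from the algebraic identity
\begin{equation*}
(\Theta-M_\cG(\I))^{-1}-(\wt\Theta-M_\cG(\I))^{-1}=-(\Theta-M_\cG(\I))^{-1}\overline{(\Theta-\wt\Theta)}(\wt\Theta-M_\cG(\I))^{-1}
\end{equation*}
valid under $\dom(\Theta)=\dom(\wt\Theta)$, together with the ideal property of $\gS_p$.

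For (xi), I would identify $\bH_0$ explicitly: the condition $\wt\Gamma_0 f=0$, in view of \eqref{eq:tiG0}, forces Dirichlet boundary conditions at both endpoints of every edge, so $\bH_0=\bigoplus_{e\in\cE}\rH_{e,D}$. Its spectrum is $\{\pi^2 k^2/|e|^2:\,e\in\cE,\;k\in\N\}$, which is purely discrete if and only if $\#\{e\in\cE:\,|e|>\varepsilon\}<\infty$ for every $\varepsilon>0$ (since otherwise the first Dirichlet eigenvalues $\pi^2/|e|^2$ accumulate in a bounded set). By the Krein formula, $(\bH_\Theta-z)^{-1}$ is compact iff both summands are compact; the second is compact precisely when $(\Theta-M_\cG(z))^{-1}$ is compact, i.e.\ when the spectrum of $\Theta$ is purely discrete. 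Combining the two conditions yields (xi).

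The main obstacle I anticipate is not any single computation but rather the careful handling of $\Theta$ as a \emph{linear relation} throughout items (v)--(xi): the multivalued part must be separated out so that $\kappa_-(\Theta)$, $\sigma_\ess(\Theta)$, and $(\Theta-M_\cG(z))^{-1}$ refer unambiguously to the operator part, and the resolvent formula must be invoked in its form valid for relations. Once this bookkeeping is in place, every step reduces to a standard boundary-triplet identity whose essential analytic content is the uniform escape of $M_\cG(x)$ to $-\infty$ from Corollary \ref{cor:Munif}.
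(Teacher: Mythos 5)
Your overall route is the same as the paper's: items (i)--(iii) and (x) are read off from Theorem \ref{th:properext} applied to the triplet $\Pi_\cG$, item (iv) from the uniform escape of $M_\cG(x)$ to $-\infty$ (Corollary \ref{cor:Munif} plus Theorem \ref{th:lsb=lsb}), and the rest from the Weyl-function/Krein-formula machinery of Appendix \ref{app:triplets}, with (xi) obtained by identifying the reference extension $\ker(\wt\Gamma_0)$ as the decoupled Dirichlet operator whose spectrum is discrete iff $\#\{e\colon |e|>\varepsilon\}<\infty$ for all $\varepsilon>0$. That identification, and the use of $\inf\sigma(\bH^F)=\pi^2/(\sup_e|e|)^2>0$, are exactly what the paper does.

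There is, however, one concrete missing step that affects (v)--(ix) as you have set them up. The abstract results you are implicitly invoking (Theorems \ref{th:lsbTheta} and \ref{th:ess-impl-ess}) compare $\bH_\Theta$ not with $\Theta$ but with $\Theta-M_\cG(0)$: e.g.\ $\kappa_-(\bH_\Theta)=\kappa_-(\Theta-M_\cG(0))$ and $\bH_\Theta\ge 0$ iff $\Theta-M_\cG(0)\ge \bO$. Your Birman--Schwinger traversal of $(-\infty,0)$ likewise terminates at the relation $\Theta-M_\cG(0)$. To obtain the statements as claimed, with $\Theta$ alone, you must verify that $M_\cG(0)=\bO_{\cH_\cG}$. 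This is precisely what the regularization \eqref{eq:RQ_e}--\eqref{eq:G01_e} was designed to achieve: since $\rQ_e=\lim_{z\to 0}M_e^0(z)$, one has $M_e(0)=\rR_e^{-1}(M_e^0(0)-\rQ_e)\rR_e^{-1}=\bO_2$ for every edge, hence $M(0)=\bO_\cH$ and $M_\cG(0)=UM(0)U^{-1}=\bO_{\cH_\cG}$. Without this one-line computation your argument proves the correct-looking but different assertions about $\Theta-M_\cG(0)$; with it, your proof closes and coincides with the paper's. A second, minor point: to apply the semiboundedness and negative-spectrum results you should also note explicitly that $\ker(\wt\Gamma_0)$ gives the Friedrichs extension $\bH^F$ of $\bH_{\min}$ and that it is positive definite (which follows from $\sup_{e}|e|<\infty$); you state the spectrum of this operator only in your discussion of (xi), but it is a hypothesis of the abstract theorems you need for (v)--(ix) as well.
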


\begin{proof}
Consider the boundary triplet $\Pi$ constructed in Theorem \ref{th:triplet}. Items (i), (ii), (iii) and (x) follow from Theorem \ref{th:properext}. Item (iv) follows from Theorem \ref{th:lsb=lsb} and Corollary \ref{cor:Munif}.

Next consider the corresponding Weyl function $M$ given by \eqref{eq:M}. Clearly, 
\[
M_e(0) = \rR_e^{-1}(M_e^0(0) - \rQ_e)\rR_e^{-1} = \rR_e^{-1}(\rQ_e - \rQ_e)\rR_e^{-1} = \bO_{2}
\]
for all $e\in\cE$. Then \eqref{eq:M} together with \eqref{eq:srMa} implies that $M(0) = \bO_{\cH} \in \cB(\cH)$. Moreover, in view of \eqref{eq:tiM}, we get
\[
M_\cG(0) = UM(0)U^{-1} = \bO_{\cH_{\cG}} \in [\cH_{\cG}].
\]
Noting that 
\[
\rH_e^0:= \rH_{e,\max} \upharpoonright {\ker(\Gamma_{0,e})} = \rH_e^F
\]
is the Friedrichs extension of $\rH_{e,\min} =( \rH_{e,\max})^\ast$, we immediately conclude that 
\be\label{eq:H0}
\bH^0:= \bH_{\max}\upharpoonright {\ker(\wt\Gamma_{0})} = \bH_{\max}\upharpoonright {\ker(\Gamma_{0})} = \bigoplus_{e\in\cE} \rH_e^0 = \bH^F
\ee
is the Friedrichs extension of $\bH_{\min} = (\bH_{\max})^\ast$.  
Moreover, \be\label{eq:sigmaHe}
\sigma(\rH_e^0) = \Big\{ \frac{\pi^2 n^2}{|e|^2}\Big\}_{n\in\N}
\ee
and hence
\[
\inf \sigma(\bH^F) =\inf_{e\in\cE} \inf\sigma(\rH_e^F)= \inf_{e\in\cE} \frac{\pi^2}{|e|^2} = \frac{\pi^2}{(\sup_{e\in\cE} |e|)^2}>0.
\]
Now items (v)--(viii) follow from Theorem \ref{th:lsbTheta} and item (ix) follows from Theorem \ref{th:ess-impl-ess}.

Finally, it follows from \eqref{eq:H0} and \eqref{eq:sigmaHe} 
that the spectrum of $\bH^F$ is purely discrete if and only if $\#\{e\in\cE\colon |e|>\varepsilon\}$ is finite for every $\varepsilon>0$. This fact together with Theorem \ref{th:properext}(iv) implies item (xi). 
\end{proof}

\begin{remark}
The assumption $\sup_{e\in\cE} |e|<\infty$ in Theorem \ref{th:HTheta} can be dropped either by modifying the underlying graph $\cG$ by adding additional vertices or by modifying the construction of the boundary triplet in Theorem \ref{th:triplet}. However, both options lead to a more cumbersome form of the corresponding boundary relation $\Theta$ and we decided to exclude this case from our considerations in order to keep the exposition as transparent as possible. 
\end{remark}

\begin{remark}
The analogs of statements (iii)  and (iv) of Theorem \ref{th:HTheta} were obtained in \cite{lsv14} under the additional restrictive assumption $\inf_{e\in\cE} |e|>0$. Notice that if the latter holds, then the regularization \eqref{eq:RQ_e}--\eqref{eq:G01_e} is not needed and one can construct a boundary triplet for the maximal operator $\bH_{\max}$ by summing up the triplets \eqref{eq:G00}.
\end{remark}

\section{Parameterization of quantum graphs with $\delta$-couplings}\label{sec:parameterization}

Turning to a more specific problem, we need to make further assumptions on the geometry of a connected metric graph $\cG$.

\begin{hypothesis}\label{hyp:graph02}
 $\cG$ is locally finite, that is, every vertex $v\in\cV$ has finitely many neighbours,  $1\le \deg(v)<\infty$ for all $v\in \cV$. Moreover, there is a finite upper bound on the lengths of edges, 
 \be\label{eq:dstar}
 \sup_{e\in\cE}|e|<\infty.
\ee
\end{hypothesis}

Let $\alpha\colon \cV\to \R$ be given and equip every vertex $v\in\cV$ with the so-called $\delta$-type vertex condition: 
\be\label{eq:bcalpha}
\begin{cases} f\ \text{is continuous at}\ v,\\[2mm] \sum_{e\in \cE_v}f_e'(v) = \alpha(v)f(v), \end{cases}
\ee
Let us define the operator $\bH_{\alpha}$ as the closure of the operator $\bH_{\alpha}^0$ given by
\be\label{eq:Halpha}
\begin{split}
\bH_{\alpha}^0 = &\bH_{\max}\upharpoonright {\dom(\bH_{\alpha}^0)},\\ 
&\dom(\bH_{\alpha}^0) = \{f\in \dom(\bH_{\max})\cap L^2_{c}(\cG)\colon f\ \text{satisfies}\ \eqref{eq:bcalpha},\ v\in\cV\},
\end{split}
\ee
where $L^2_c(\cG)$ consists of functions from $L^2(\cG)$ vanishing everywhere on $\cG$ except finitely many edges.

\begin{remark} \label{rem:3.01}
A few remarks are in order:
\begin{itemize}
\item[(i)] If $\deg(v_0)=\infty$ for some $v_0\in\cV$, then it was shown in \cite[Theorem 5.2]{lsv14} that a Kirchhoff-type boundary condition at $v_0$ (as well as \eqref{eq:bcalpha}) leads to an operator which is not closed. Moreover, it turns out that its closure gives rise to Dirichlet boundary condition at $v_0$, i.e.,
disconnected edges.

\item[(ii)] Assumption \eqref{eq:dstar} is of a technical character. Of course, the case of edges having an infinite length would require separate considerations in Section \ref{sec:BT_QG} and this will be done elsewhere. On the other hand, the case when all edges have finite length but there is no uniform upper bound can be reduced to the case of graphs satisfying \eqref{eq:dstar} either by adding additional inessential vertices or by slight modifications in the considerations of Section \ref{sec:BT_QG}. Note also that those allow to include situations when the graph is not simple, that is, it has loops and multiple edges (cf. Hypothesis \ref{hyp:graph01}). 
\end{itemize}
\end{remark}

Let us emphasize that the operator  $\bH_\alpha$ is symmetric. Moreover, simple examples show that $\bH_\alpha$ might not be self-adjoint. 

  \begin{example}[1-D Schr\"odinger operator with $\delta$-interactions]\label{ex:01}
Consider the positive semi-axis $(0,\infty)$ and let $\{x_k\}_{k\ge 0}\subset [0,\infty)$ be a strictly increasing sequence such that $x_0=0$ and $x_k\uparrow +\infty$. Considering  $x_k$
as vertices and the intervals $e_k=(x_{k-1},x_k)$ as edges, we end up with the simplest infinite metric graph. Notice that for every real sequence  $\alpha=\{\alpha_k\}_{k\ge 0}$ with $\alpha_0=0$ conditions
\eqref{eq:bcalpha} take the following form: $f'(0)=0$  and
   \begin{equation}\label{eq:delta_line}
   \begin{split}
&f(x_k-) = f(x_k+)=:f(x_k), \\
 &f'(x_k + ) - f'(x_k - ) = \alpha_k f(x_k),\quad  k\in\N.
\end{split}
   \end{equation}
The operator $\bH_\alpha$ is known as {\em the one-dimensional Schr\"odinger operator with $\delta$-interactions} on  $X=\{x_k\}_{k\in\N}$ (see, e.g., \cite{AGHH}), and the corresponding differential expression is given by
  \be\label{eq:1Ddelta}
\rH_{X,\gA} = -\frac{\rD^2}{\rD x^2} + \sum_{k\in\N}
\alpha_k\delta(x-x_k).
  \ee

It was proved in  \cite{KM10} that $\rH_{X,\gA}$ is self-adjoint if  
$\sum_{k}|e_k|^2= \infty$ (the latter is known in the literature as {\em the Ismagilov condition}, see \cite{is}). 
On the other hand (see  \cite[Proposition 5.9]{KM10}), if $\sum_{k}|e_k|^2<\infty$ and in addition $|e_{k-1}| \cdot |e_{k + 1}|\geq
|e_k|^2$ for all sufficiently large $k$, then the operator $\rH_{X,\gA}$ is symmetric with $\mathrm{n}_\pm(\rH_{X,\gA})=1$ whenever  $\alpha=\{\alpha_k\}_{k\in\N}$ satisfies the following condition
  \[
  \sum_{k=1}^\infty |e_{k + 1}|\Big|\alpha_k + \frac{1}{|e_k|} + \frac{1}{|e_{k +1}|}\Big|<\infty.
  \]
This effect was discovered by C.\ Shubin Christ and G.\ Stolz \cite[pp. 495--496]{ShuSto94} in the special case  $|e_k| = 1/k$
and $\alpha_k =-(2k+1)$, $k\in\N$. For further details and results we refer to \cite{KM_13}, \cite{mir15}. \hfill$\lozenge$
\end{example}

 Our main aim is to find a boundary relation $\Theta_\gA$ parameterizing the operator $\bH_\alpha$ in terms of the boundary triplet $\Pi_\cG$ given by \eqref{eq:tiPi}--\eqref{eq:tiG1}. First of all, notice that at each vertex $v\in \cV$ the boundary conditions \eqref{eq:bcalpha} have the following form
\be\label{eq:3.6}
D_{v} \wt\Gamma_{1,v}^0f = C_{v,\alpha} \wt\Gamma_{0,v}^0f , 
\ee
where $\wt\Gamma_{0,v}^0f = \{f_e(v)\}_{e\in\cE_v}$, $\wt\Gamma_{1,v}^0f = \{f'_e(v)\}_{e\in\cE_v}$ (see \eqref{eq:wtG0} and \eqref{eq:wtG1}) and the matrices $C_{v,\alpha}$, $D_{v}\in \C^{\deg(v)\times\deg(v)}$ are given by
\be\label{eq:3.7}
\begin{split}
C_{v,\alpha} & = \begin{pmatrix} 
1 & -1 & 0 & 0 & \dots & 0 \\
0& 1 & -1 & 0 &  \dots & 0 \\
0& 0 &1 & -1 &  \dots & 0 \\
\dots &\dots &\dots &\dots &\dots &\dots \\ 
0 & 0 & 0 & 0 & \dots & -1 \\
\alpha(v) & 0 & 0 & 0 & \dots & 0
\end{pmatrix},\\[2mm]
D_v &= \begin{pmatrix} 
0 & 0 & 0 & 0 & \dots & 0 \\
0& 0 & 0 & 0 &  \dots & 0 \\
0& 0 &0 & 0 &  \dots & 0 \\
\dots &\dots &\dots &\dots &\dots &\dots \\ 
0 & 0 & 0 & 0 & \dots & 0 \\
1 & 1 & 1 & 1 & \dots & 1
\end{pmatrix}.
\end{split}
\ee
 It is easy to check that these matrices satisfy the Rofe--Beketov conditions  (see Proposition \ref{prop:RB}), that is
\begin{align}\label{eq:Rofe}
C_{v,\alpha}D_v^\ast & = D_vC_{v,\alpha}^\ast, & {\rm rank}(C_{v,\alpha}| D_v) & = \deg(v),
\end{align}
and hence 
\[
\Theta_{v,\alpha}:= \big\{\{f,g\} \in \C^{\deg(v)}\times\C^{\deg(v)}\colon\ C_{v,\alpha}f=D_v g\big\}
\]
is a self-adjoint linear relation in $\C^{\deg(v)}$.  
Now set
\begin{align*}
C_\alpha^0 & := \bigoplus_{v\in\cV} C_{v,\alpha}, & D^0 & := \bigoplus_{v\in\cV} D_{v}.
\end{align*}
Clearly, $f\in\dom(\bH_{\max})\cap L^2_c(\cG)$ satisfies
\[
D^0\wt\Gamma_1^0 f = C_\alpha^0 \wt\Gamma_0^0f,
\]
if and only if $f\in \dom(\bH_\alpha^0)=\dom(\bH_{\gA})\cap L^2_c(\cG)$. 
 In view of \eqref{eq:G2.19}, we get 
\begin{align*}
\wt\Gamma_0 & = \wt{\rR} \wt\Gamma_0^0, & \wt\Gamma_1 & = \wt{\rR}^{-1} (\wt\Gamma_1^0 - \wt{\rQ} \wt\Gamma_0^0)
\end{align*}
where 
\begin{align*}
\wt{\rR} & = U\rR U^{-1}, & \wt{\rQ} & = U\rQ U^{-1},
\end{align*}
and $\rR = \oplus_{e\in\cE}\rR_e$, $\rQ= \oplus_{e\in\cE}\rQ_e$ and $U$ are defined by \eqref{eq:RQ_e} and \eqref{eq:U}, respectively. 
Hence we conclude that $f\in \dom(\bH_\alpha^0)$ if and only if $f$ satisfies
\[
D\wt\Gamma_1 f = C_\alpha \wt\Gamma_0 f,
\]
where
\begin{align*}
D & = D^0\wt{\rR}, & C_\alpha & = (C_\alpha^0 - D^0\wt{\rQ})\wt{\rR}^{-1}.
\end{align*}

Thus we are led to specification of the boundary relation parameterizing the operator $\bH_{\gA}^0$. Namely, consider now the linear relation $\Theta_\alpha^0$ defined in $\cH_{\cG}$ by
\be\label{eq:ThetaA}
\Theta_\alpha^0 = \{\{f,g\}\in\cH_{\cG,c}\times \cH_{\cG,c}\colon  \ C_\alpha f = Dg\},
\ee
where $\cH_{\cG,c}$ consists of vectors of $\cH_{\cG}$ having only finitely many nonzero coordinates. 
It is not difficult to see that $\Theta_\alpha^0$ is symmetric and hence it admits the decomposition (see Appendix \ref{ss:a1})
\begin{align*}
\Theta_\alpha^0 & = \Theta_{\rm op}^0\oplus \Theta_{\mul}^0, & \Theta_{\mul}^0 & = \{0\}\times \mul(\Theta_\alpha^0),
\end{align*}
and $\Theta_{\rm op}^0$ is the operator part of $\Theta_\alpha^0$. Clearly,
\[
\mul(\Theta_\alpha^0) = \ker(D)\cap \cH_{\cG,c} = {\wt{\rR}^{-1} \ker(D^0)}\cap \cH_{\cG,c}.  
\]

Let   $f=\{f_v\}_{v\in\cV}\in \cH_{\cG}$, where $f_v=\{f_{v,e}\}_{e\in\cE_v}$. Next we observe that  
\begin{align*}
\wt{\rR} & = \bigoplus_{v\in\cV}\wt{\rR}_v, & \wt{\rR}_v & = \diag(\sqrt{|e|})_{e\in\cE_v},
\end{align*}
and 
\begin{align*}
\wt{\rQ} & = \bigoplus_{v\in\cV}\wt{\rQ}_{v} + \wt{\rQ}^0, & \wt{\rQ}_{v} & = -\diag(|e|^{-1})_{e\in\cE_v}, 
\end{align*}
where
\[
 (\wt{\rQ}^0f)_{v,e} = |e_{v,u}|^{-1}f_{u,e},\qquad u := \begin{cases} e_i, & v=e_o,\\ e_o, & v=e_i.\end{cases}
\]
Noting that 
\[
\cH_{\rm op} = \overline{\dom(\Theta_\alpha^0)} = \overline{\ran(D^*)} = \overline{\ran(\wt{\rR}(D^0)^*)},
\]
we get
\begin{align*}
\cH_{\rm op} & = \Span\{\f_v\}_{v\in\cV}, & \f_v & = \{f_{u,e}^v\},\quad f_{u,e}^v = \begin{cases} \sqrt{|e|}, & u=v,\\ 0, & u\neq v.\end{cases} 
\end{align*}
Let us now show that $\f_v\in\dom(\Theta_\alpha^0)$ for every $v\in\cV$. Denote by $P_v$ the orthogonal projection in $\cH_{\cG}$ onto $\cH_{\cG}^v:=\Span\{\f_v\}$.  Next notice that
\[
P_uC_\alpha \f_v = P_u(C_\alpha^0 - D^0\wt{\rQ})\wt{\rR}^{-1}\f_v = \begin{cases} \big(\underbrace{0,0,\dots,0,\alpha(v) + \sum_{e\in\cE_v}|e|^{-1}}_{\deg(v)}\big),& u=v,\\[2mm]
\big(\underbrace{0,0,\dots,0,-|e_{u,v}|^{-1}}_{\deg(u)}\big),& u\sim v,\\[2mm]
0,& u\not\sim v,\ u\neq v.\end{cases} 
\]
Finally, take $g\in \cH_{\cG,c}$ and consider
\[
(Dg)_u = (D^0\wt{\rR}g)_u = \big(\underbrace{0,0,\dots,0,\sum_{e\in\cE_u}\sqrt{|e|}g_{u,e}}_{\deg(u)} \big).
\] 
Therefore, define $\g_v\in \cH_{\rm op}$ by
\be\label{eq:g=f}
P_u \g_v = \{\sqrt{|e|}\}_{e\in\cE_u} \times
\begin{cases} 
\frac{1}{m(v)}(\alpha(v) + \sum_{e\in\cE_v}|e|^{-1}), & u=v, \\[2mm]
-\frac{1}{\sqrt{|e_{u,v}|} m(u)}, & u\sim v,\\[2mm]
0, & u\not\sim v,\ u\neq v,
\end{cases}
\ee
where the function $m\colon\cV\to (0,\infty)$ is given by
\be\label{eq:m_weight}
m\colon v\mapsto \sum_{e\in\cE_v}|e|,\qquad v\in\cV.
\ee
Clearly,
\[
C_\alpha \f_v = D \g_v,
\]
and hence ${\bf f}_v\in \dom(\Theta_\alpha^0)$. Moreover, \eqref{eq:g=f} immediately implies that 
\be
\g_v  = \frac{1}{m(v)}\Big(\alpha(v) + \sum_{e\in\cE_v}|e|^{-1}\Big) \f_v - \sum_{u\sim v} \frac{1}{\sqrt{|e_{u,v}|} m(u)} \f_u =: \Theta_{\rm op}^0\f_v.
\ee
Noting that $\{\f_v\}_{v\in\cV}$ is an orthogonal basis in ${\cH}_{\rm op}$ and $\|\f_v\|^2 = m(v)$ for all $v\in\cV$, we conclude that the operator part $\Theta_{\op}^0$ of $\Theta_\alpha^0$ is unitarily equivalent to the following pre-minimal difference operator $\rh_\alpha^0$ defined in $\ell^2(\cV)$ by
\be\label{eq:discr_lapl}
(\tau_{\cG,\alpha} f)(v) = \frac{1}{\sqrt{m(v)}}\left( \sum_{u\in \cV} b(v,u)\Big(\frac{f(v)}{\sqrt{m(v)}} - \frac{f(u)}{\sqrt{m(u)}}\Big)  + \frac{\alpha(v)}{\sqrt{m(v)}}f(v)\right),\quad v\in\cV,
\ee
where $b\colon \cV\times\cV \to [0,\infty)$ is  given by
\be\label{eq:b_weight}
b(v,u) = \begin{cases}  |e_{v,u}|^{-1}, & v\sim u,\\ 0, & v\not\sim u.\end{cases}
\ee
More precisely, we define the operator $\rh_\alpha^0$ in $\ell^2(\cV)$ on the domain $\dom(\rh_\alpha^0):=C_c(\cV)$ by 
\begin{align}\label{eq:h_alpha}
\begin{split}
\rh_\alpha^0\colon \ba[t]{lcl} \dom(\rh_\alpha^0) &\to&  \ell^2(\cV) \\ f &\mapsto& \tau_{\cG,\alpha}f \ea.
\end{split}
\end{align}
Here and below $C_c(\cV)$ is the space of finitely supported functions on $\cV$. 
 Notice that Hypothesis \ref{hyp:graph02} guarantees that $\rh_\alpha^0$ is well defined since  $\tau_{\cG,\alpha}f\in\ell^2(\cV)$ for every $f\in C_c(\cV)$. Moreover, $\rh_\gA^0$ is symmetric and let us denote its closure by $\rh_\gA$.

Thus we proved the following result.

\begin{proposition}\label{prop:bo_alpha}
Assume that Hypotheses \ref{hyp:graph01} and \ref{hyp:graph02} are satisfied. Also, let $\bH_\alpha$ be the closure of the pre-minimal operator \eqref{eq:Halpha} and let $\Pi_\cG$ be the boundary triplet \eqref{eq:tiPi}--\eqref{eq:tiG1}. Then 
\be\label{eq:}
\dom(\bH_\alpha) = \{f\in\dom(\bH_{\max})\colon\ \{\wt\Gamma_0f,\wt\Gamma_1f\}\in \Theta_\alpha\},
\ee
where $\Theta_\alpha$ is a linear relation in $\cH_{\cG}$ defined as the closure of $\Theta_\gA^0$ given by \eqref{eq:ThetaA}. Moreover, the operator part $\Theta_\alpha^{\op}$ of $\Theta_\alpha$ is unitarily equivalent to the operator $\rh_\alpha = \overline{\rh_\alpha^0}$ acting in $\ell^2(\cV)$. 
\end{proposition}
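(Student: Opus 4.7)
I would proceed in three main stages: first, translate the $\delta$-conditions at each vertex into a linear relation $\Theta_\alpha^0$ on $\cH_\cG$; second, identify the operator part of $\Theta_\alpha^0$ by exhibiting an explicit orthogonal basis for $\overline{\dom(\Theta_\alpha^0)}$ and computing $\Theta_{\op}^0$ on it; third, construct a unitary carrying this action onto $\rh_\alpha^0$, and take closures.

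For the first stage, the $\delta$-coupling \eqref{eq:bcalpha} at a vertex $v$ rewrites as $C_{v,\alpha}\wt\Gamma_{0,v}^0 f = D_v\wt\Gamma_{1,v}^0 f$ with the matrices in \eqref{eq:3.7}. These satisfy the Rofe--Beketov conditions \eqref{eq:Rofe}, so the local relation is self-adjoint in $\C^{\deg(v)}$. Assembling over all vertices and applying the regularization \eqref{eq:G2.19} (so that $C_\alpha = (C_\alpha^0-D^0\wt\rQ)\wt\rR^{-1}$ and $D=D^0\wt\rR$) gives the relation $\Theta_\alpha^0$ in \eqref{eq:ThetaA} with the property $f\in\dom(\bH_\alpha^0)$ iff $\{\wt\Gamma_0 f,\wt\Gamma_1 f\}\in\Theta_\alpha^0$. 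Closing both sides and using $\bH_\alpha=\overline{\bH_\alpha^0}$ yields the first assertion of the proposition.

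For the second stage, the canonical decomposition $\Theta_\alpha^0 = \Theta_{\op}^0\oplus\Theta_{\mul}^0$ reduces matters to analyzing $\Theta_{\op}^0$ on $\cH_{\op}:=\overline{\ran(D^*)}$, and a direct inspection of $D=D^0\wt\rR$ shows that $\cH_{\op}=\overline{\Span\{\f_v\}_{v\in\cV}}$ with $\f_v$ supported in the $v$-block of $\cH_\cG$ and equal to $\sqrt{|e|}$ on $e\in\cE_v$; these are orthogonal with $\|\f_v\|^2=m(v)$. The computational heart is to evaluate $C_\alpha\f_v$ block by block, using that $\wt\rR^{-1}\f_v$ is the indicator of the $v$-block; the resulting vector lies in $\ran(D)$, and solving $D\g_v=C_\alpha\f_v$ with $\g_v\in\cH_{\op}$ (by choosing the $u$-block of $\g_v$ proportional to $\f_u$ and matching the single nontrivial coordinate at each $u$) produces the explicit expression \eqref{eq:g=f} for $\Theta_{\op}^0\f_v$ as a linear combination of $\f_v$ and the $\f_u$ with $u\sim v$.

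For the third stage, I would introduce the unitary $V\colon\cH_{\op}\to\ell^2(\cV)$ defined by $V\f_v = \sqrt{m(v)}\,\delta_v$; it is well-defined and unitary because $\{\f_v\}$ and $\{\sqrt{m(v)}\delta_v\}$ are orthogonal families with matching norms whose linear spans are dense. Computing $\rh_\alpha^0\delta_v$ directly from \eqref{eq:discr_lapl}--\eqref{eq:b_weight} and comparing coefficient by coefficient with $V\Theta_{\op}^0 V^{-1}\delta_v$ shows $\rh_\alpha^0 = V\Theta_{\op}^0 V^{-1}$ on $C_c(\cV)=V(\Span\{\f_v\})$; taking closures of these symmetric operators then yields $\rh_\alpha = V\Theta_\alpha^{\op} V^{-1}$. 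The main obstacle is the block-by-block tracking of the regularization: one has to show that $C_\alpha\f_v$ actually lies in $\ran(D)$ (so that $\f_v\in\dom(\Theta_\alpha^0)$) \emph{and} that the resulting coefficients reproduce the precise weights appearing in $\tau_{\cG,\alpha}$ in \eqref{eq:1.03}--\eqref{eq:1.04}, including the correct interaction between the vertex measure $m$ and the edge weight $b$.
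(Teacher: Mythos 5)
Your proposal is correct and follows essentially the same route as the paper: rewriting the $\delta$-conditions via the Rofe--Beketov matrices $C_{v,\alpha}$, $D_v$, passing through the regularization to get $\Theta_\alpha^0$, identifying $\cH_{\rm op}=\overline{\ran(D^\ast)}$ with the orthogonal family $\{\f_v\}$, solving $D\g_v=C_\alpha\f_v$ to compute $\Theta_{\rm op}^0\f_v$, and reading off the unitary equivalence with $\rh_\alpha^0$ from $\|\f_v\|^2=m(v)$. The only (harmless) difference is that you make the unitary $V\f_v=\sqrt{m(v)}\,\delta_v$ explicit, which the paper leaves implicit.
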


We also need another discrete Laplacian. Specifically, in the weighted Hilbert space $\ell^2(\cV;m)$ we consider the minimal operator defined by the following difference expression
\be\label{eq:h_a}
(\wt\tau_{\cG,\alpha} f)(v) := \frac{1}{m(v)}\left( \sum_{u\in \cV} b(v,u)(f(v) - f(u))  + \alpha(v)f(v)\right),\quad v\in\cV.
\ee 

\begin{lemma}\label{cor:similarity}
The pre-minimal operator $\tilde{\rh}_\alpha^0$ associated with \eqref{eq:h_a} in $\ell^2(\cV;m)$ is unitarily equivalent to the operator $\rh_\alpha^0$ defined by \eqref{eq:discr_lapl}, \eqref{eq:h_alpha} and acting in $\ell^2(\cV)$.
\end{lemma}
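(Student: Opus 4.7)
The plan is to exhibit an explicit unitary intertwiner between the two Hilbert spaces that conjugates one difference expression into the other. The natural candidate is the multiplication operator
\[
U \colon \ell^2(\cV;m) \to \ell^2(\cV), \qquad (Uf)(v) := \sqrt{m(v)}\, f(v), \quad v \in \cV.
\]
First I would check that $U$ is indeed unitary. This follows by a direct norm computation: for $f \in \ell^2(\cV;m)$,
\[
\|Uf\|^2_{\ell^2(\cV)} = \sum_{v\in\cV} m(v) |f(v)|^2 = \|f\|^2_{\ell^2(\cV;m)},
\]
while $U$ is obviously bijective with inverse $(U^{-1}g)(v) = g(v)/\sqrt{m(v)}$, since Hypothesis \ref{hyp:graph02} guarantees $m(v) \in (0,\infty)$ for all $v \in \cV$.

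Next I would verify that $U$ maps the domain onto the domain. Both pre-minimal operators are defined on $C_c(\cV)$, and $U$ preserves the support of functions pointwise, so $U(C_c(\cV)) = C_c(\cV)$ set-theoretically; thus $U\dom(\tilde{\rh}_\alpha^0) = \dom(\rh_\alpha^0)$. The remaining step is to check the intertwining identity $U \wt\tau_{\cG,\alpha} = \tau_{\cG,\alpha} U$ at the level of difference expressions. For $f \in C_c(\cV)$ and any $v \in \cV$,
\[
(U\wt\tau_{\cG,\alpha}f)(v) = \sqrt{m(v)} \cdot \frac{1}{m(v)}\left(\sum_{u\in\cV} b(v,u)(f(v)-f(u)) + \alpha(v)f(v)\right),
\]
which, after writing $f(w) = (Uf)(w)/\sqrt{m(w)}$ inside the sum, reproduces exactly the expression \eqref{eq:discr_lapl} evaluated at $Uf$. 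Hence $U\wt\tau_{\cG,\alpha}U^{-1} = \tau_{\cG,\alpha}$ and therefore $U\tilde{\rh}_\alpha^0 U^{-1} = \rh_\alpha^0$.

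There is no serious obstacle here; the only subtle point is pinning down the correct weight $\sqrt{m(v)}$ (rather than $m(v)$ or $1/\sqrt{m(v)}$), which is dictated by the requirement that the inner products match. Once $U$ is chosen, everything else is a mechanical substitution and the unitary equivalence of the pre-minimal operators follows.
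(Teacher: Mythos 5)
Your proposal is correct and follows exactly the paper's proof: the paper also uses the multiplication operator $U\colon f\mapsto \sqrt{m}\,f$ from $\ell^2(\cV;m)$ onto $\ell^2(\cV)$ and notes $\tilde{\rh}_\alpha^0 = U^{-1}\rh_\alpha^0 U$. Your version merely spells out the unitarity, domain preservation, and intertwining computations that the paper leaves to the reader.
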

 
\begin{proof}
It suffices to note that 
\[
\tilde{\rh}_\alpha^0 =U^{-1}{\rh_\alpha^0}U, 
\]
where the operator 
\[
\begin{array}{cccc}
U\colon & \ell^2(\cV;m) & \to &\ell^2(\cV) \\
 & f & \mapsto & \sqrt{m}f
\end{array}
\]
isometrically maps $\ell^2(\cV;m)$ onto $\ell^2(\cV)$. 
\end{proof}
 
 In the following we shall use $\rh_\alpha$ as the symbol denoting the closures of both operators. 
Now we are ready to formulate our main result. 

\begin{theorem}\label{th:main}
Assume that Hypotheses \ref{hyp:graph01} and \ref{hyp:graph02} are satisfied. Let $\alpha\colon\cV\to \R$ and $\bH_\alpha$ be a closed symmetric operator associated with the graph $\cG$ and equipped with the $\delta$-type coupling conditions \eqref{eq:bcalpha} at the vertices. Also, let $\rh_\alpha$ be the discrete Laplacian defined either by \eqref{eq:discr_lapl} in $\ell^2(\cV)$ or by  \eqref{eq:h_a} in $\ell^2(\cV;m)$, where the functions $m\colon \cV\to (0,\infty)$ and $b\colon \cV\times\cV\to [0,\infty)$ are given by \eqref{eq:m_weight} and \eqref{eq:b_weight}, respectively. Then:
\begin{itemize}
\item[(i)] The deficiency indices of $\bH_\alpha$ and $\rh_\alpha$ are equal and 
\be\label{eq:def+=-}
\mathrm{n}_+(\bH_\alpha) = \mathrm{n}_-(\bH_\alpha) = \mathrm{n}_\pm(\rh_\alpha)\le \infty.
\ee
In particular, $\bH_\alpha$ is self-adjoint if and only if $\rh_\alpha$ is self-adjoint.
\end{itemize}
Assume in addition that $\bH_\alpha$ (and hence also $\rh_\alpha$) is self-adjoint. Then:
\begin{itemize}
\item[(ii)] The operator $\bH_\alpha$ is lower semibounded if and only if the operator $\rh_\alpha$ is lower semibounded. 
\item[(iii)] The operator $\bH_\alpha$ is nonnegative (positive definite) if and only if the operator $\rh_\alpha$ is nonnegative (respectively, positive definite). 
\item[(iv)] The total multiplicities of negative spectra of $\bH_\alpha$ and $\rh_\alpha$ coincide,
\be
\kappa_-(\bH_\alpha) = \kappa_-(\rh_\alpha).
\ee
\item[(v)] Moreover, the following equivalence 
\be
\bH_\alpha^- \in \gS_p(L^2(\cG)) \Longleftrightarrow \rh_\alpha^- \in \gS_p(\ell^2(\cV;m)),
\ee
holds for all $p\in (0,\infty]$. In particular, negative spectra of $\bH_\alpha$ and $\rh_\alpha$ are discrete simultaneously.
\item[(vi)] If $\rh_\alpha^- \in \gS_\infty(\ell^2(\cV;m))$, then the following equivalence holds for all $\gamma\in(0,\infty)$ 
\be\label{eq:weakSp_a}
\lambda_j(\bH_{\alpha}) =  j^{-\gamma}(a+o(1)) \ \Longleftrightarrow \ \lambda_j(\rh_\alpha)=  j^{-\gamma}(b+o(1)),
\ee
as $j\to \infty$, where either $ab\not = 0$ or $a = b = 0$.
\item[(vii)] If, in addition, $\rh_\alpha$ is lower semibounded, then $\inf\sigma_{\ess}(\bH_\alpha)>0$ $(\inf\sigma_{\ess}(\bH_\alpha)=0)$ exactly when $\inf\sigma_{\ess}(\rh_\alpha)>0$ $($respectively, $\inf\sigma_{\ess}(\rh_\alpha)=0)$.
\item[(viii)] The spectrum of $\bH_\alpha$ is purely discrete if and only if the number $\#\{e\in\cE\colon |e|>\varepsilon\}$ is finite for every $\varepsilon>0$ and the spectrum of the operator $\rh_\alpha$ is purely discrete.
\item[(ix)] If $\ti{\gA}\colon \cV\to \R$ is such that $\rh_{\ti\alpha}=\rh_{\ti\alpha}^\ast$, then the following equivalence 
\be
(\bH_\alpha - \I)^{-1} - (\bH_{\ti\alpha} - \I)^{-1} \in \gS_p(L^2(\cG)) \Longleftrightarrow (\rh_\alpha - \I)^{-1} - (\rh_{\ti\alpha} - \I)^{-1} \in \gS_p(\ell^2(\cV)),
\ee
holds for all $p\in (0,\infty]$. 
\end{itemize}
\end{theorem}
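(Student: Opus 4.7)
The plan is to deduce Theorem \ref{th:main} as an essentially mechanical corollary of two facts already in hand: Proposition \ref{prop:bo_alpha}, which identifies $\bH_\alpha$ with the extension $\bH_{\Theta_\alpha}$ of $\bH_{\min}$ parameterized by the boundary relation $\Theta_\alpha$ and shows that its operator part $\Theta_\alpha^{\op}$ is unitarily equivalent to $\rh_\alpha$; and Theorem \ref{th:HTheta}, which translates every spectral property of $\bH_{\Theta_\alpha}$ into the corresponding property of $\Theta_\alpha$ itself. The remaining step in each clause will be to descend from $\Theta_\alpha$ to $\Theta_\alpha^{\op}$ via the orthogonal decomposition $\Theta_\alpha = \Theta_\alpha^{\op} \oplus \Theta_{\mul}$ with $\Theta_{\mul} = \{0\}\times \mul(\Theta_\alpha)$, a purely multi-valued self-adjoint piece that is harmless for every invariant under discussion. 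Lemma \ref{cor:similarity} then lets me state the final conclusion either in $\ell^2(\cV)$ or in $\ell^2(\cV;m)$ at will.

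Concretely, I will march through the items in order. For (i), Theorem \ref{th:HTheta}(iii) yields $\mathrm{n}_\pm(\bH_\alpha) = \mathrm{n}_\pm(\Theta_\alpha)$; since $\Theta_{\mul}$ is itself self-adjoint with zero deficiency, $\mathrm{n}_\pm(\Theta_\alpha) = \mathrm{n}_\pm(\Theta_\alpha^{\op}) = \mathrm{n}_\pm(\rh_\alpha)$, which also gives the self-adjointness equivalence. Once self-adjointness is in force, items (ii)--(viii) pair off one by one with parts (iv), (v), (vi), (vii), (viii), (ix), (xi) of Theorem \ref{th:HTheta}: lower semiboundedness, (positive) definiteness, $\kappa_-$, $\gS_p$ membership of the negative part, Weyl-type asymptotics of the negative eigenvalues, the dichotomy $\inf\sigma_{\ess}>0$ versus $=0$, and pure discreteness. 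In every case the invariant is invisible to $\Theta_{\mul}$ (which sits ``at $+\infty$'' and contributes neither negative spectrum nor finite essential spectrum), so it transfers verbatim to $\Theta_\alpha^{\op}\cong \rh_\alpha$. Finally, (ix) is Theorem \ref{th:HTheta}(x) applied to the pair $\Theta_\alpha$, $\Theta_{\ti\alpha}$: since the vertex matrices $D_v$ in \eqref{eq:3.7} do not depend on $\alpha$, the two relations have the same multi-valued part, so their resolvent difference lives entirely in $\cH_{\op}$ and is unitarily equivalent to $(\rh_\alpha - \I)^{-1} - (\rh_{\ti\alpha} - \I)^{-1}$ on $\ell^2(\cV)$.

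The main obstacle, such as it is, is bookkeeping in the descent $\Theta_\alpha \rightsquigarrow \Theta_\alpha^{\op}$: I need to verify, appealing to the abstract facts from Appendix \ref{app:triplets}, that each spectral notion in Theorem \ref{th:HTheta} is compatible with the orthogonal decomposition of a self-adjoint linear relation into operator and purely multi-valued parts. Concretely, I must check that deficiency indices, lower bounds, negative eigenvalue counting, Schatten class membership of the negative part and of resolvent differences, and the location of $\inf\sigma_{\ess}$ all coincide for $\Theta_\alpha$ and $\Theta_\alpha^{\op}$, and that pure discreteness of $\Theta_\alpha$ in the relation sense is equivalent to pure discreteness of $\Theta_\alpha^{\op}$. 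These are all routine given the structure of self-adjoint linear relations, but they are the only substantive verifications required; with them in place, Theorem \ref{th:main} becomes a direct corollary of Proposition \ref{prop:bo_alpha} and Theorem \ref{th:HTheta}.
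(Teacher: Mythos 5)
Your proposal follows the paper's own proof essentially verbatim: the paper likewise disposes of items (ii)--(ix) in one line as immediate consequences of Theorem \ref{th:HTheta} and Proposition \ref{prop:bo_alpha}, and the passage from $\Theta_\alpha$ to its operator part $\Theta_\alpha^{\op}\cong\rh_\alpha$ that you flag as the ``main obstacle'' is exactly what the conventions of Appendix \ref{ss:a1}--\ref{ss:a2} are set up to make automatic (deficiency indices, $\gS_p$-membership, spectral types and lower bounds of a self-adjoint relation are \emph{defined} through its operator part, and $(\Theta-z)^{-1}=(\Theta_{\op}-z)^{-1}\oplus\bO_{\mul(\Theta)}$ handles the resolvent difference in (ix)). So your reduction is sound and is the intended one.

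The one point you do not address is the only point the paper's proof actually singles out for comment: the \emph{first} equality $\mathrm{n}_+(\bH_\alpha)=\mathrm{n}_-(\bH_\alpha)$ in \eqref{eq:def+=-}. Your chain yields $\mathrm{n}_+(\bH_\alpha)=\mathrm{n}_+(\rh_\alpha)$ and $\mathrm{n}_-(\bH_\alpha)=\mathrm{n}_-(\rh_\alpha)$ separately, but nothing in Theorem \ref{th:HTheta}(iii) forces the two deficiency indices of a symmetric operator or relation to coincide, so writing $\mathrm{n}_\pm(\rh_\alpha)$ as a single number is not yet justified. The paper closes this by observing that $\rh_\alpha$ commutes with complex conjugation, whence $\mathrm{n}_+(\rh_\alpha)=\mathrm{n}_-(\rh_\alpha)$ by the von Neumann theorem, and the common value then propagates back to $\bH_\alpha$ through your chain of equalities. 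This is a one-line fix, but as written your proof of item (i) is incomplete.
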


\begin{proof}
We only need to comment on the first equality in \eqref{eq:def+=-} since the rest immediately follows from Theorem \ref{th:HTheta} and Proposition \ref{prop:bo_alpha}. However, the first equality in \eqref{eq:def+=-} follows from the equality of deficiency indices of the operator $\rh_\gA$. Indeed, $\mathrm{n}_+(\rh_\gA)=\mathrm{n}_-(\rh_\gA)$ by the von Neumann theorem since $\rh_\gA$ commutes with the complex conjugation. 
\end{proof}

Let us demonstrate Theorem \ref{th:main} by applying it to the 1-D Schr\"odinger operator with $\delta$-interactions \eqref{eq:1Ddelta} considered in Example \ref{ex:01}.

\begin{example}\label{ex:02}
Let  $\rH_{X,\gA}$ be the Schr\"odinger operator \eqref{eq:1Ddelta} with
$\delta$-interactions on the positive semi-axis $(0,\infty)$. Recall that in this case $\cV=\{x_k\}_{k\ge 0}$ and $\cE = \{e_k\}_{k\in\N}$, where $e_k=(x_{k-1},x_k)$. By \eqref{eq:m_weight} and \eqref{eq:b_weight}, we get
\[
m(x_k)=\begin{cases} |e_1|, & k=0,\\
|e_{k}|+|e_{k+1}|, & k\in\N,
\end{cases}
\]
where $|e_k| = x_k - x_{k-1}$ for all $k\in\N$, and 
\[
b(x_k,x_n) = \begin{cases} |x_k-x_n|^{-1}, &
|n-k|=1,\\0, & |n-k|\neq 1.\end{cases}
\]
Setting $f=\{f_k\}_{k\ge0}$ with $f_k:=f(x_k)$, we see that the difference expression \eqref{eq:discr_lapl} is just a three-term recurrence relation
\[
(\tilde{\tau}_{\alpha} f)_k = \begin{cases} b_1(f_0 - f_1), & k=0,\\ 
-b_k f_{k-1} + a_k f_k - b_{k+1}f_{k+1}, & k\in\N, \end{cases}
\]
where
\begin{align}\label{eq:akbk}
a_k & = \frac{\alpha_k + |e_{k}|^{-1} + |e_{k+1}|^{-1}}{m(x_k)}, & b_k & =  \frac{|e_{k}|^{-1}}{\sqrt{m(x_{k-1})m(x_k)}},
\end{align}
for all $k\in\N$. 
Hence the corresponding operator $\rh_\alpha$ is the minimal operator associated in $\ell^2(\Z_{\ge 0})$ with the Jacobi (tri-diagonal) matrix
\be\label{eq:1Djacop}
J= \begin{pmatrix} b_1 & -b_1 & 0 & 0 & \dots \\
-b_1 & a_1 & -b_2 &  0 & \dots \\
0& -b_2 & a_2 & -b_3 & \dots \\
 0 & 0 & -b_3 & a_3 & \dots \\
\dots & \dots & \dots & \dots & \dots \\
\end{pmatrix}.
\ee
In this particular case Theorem \ref{th:main} was established in \cite{KM10} and in the recent paper \cite{kmn16} it was  extended to the case of Schr\"odinger operators in a space of vector-valued functions. \hfill $\lozenge$
\end{example}

\begin{remark}\label{rem:difJacobi}
Let us emphasize the difference between the operators generated by \eqref{eq:1.03} and \eqref{eq:1.03B} in the case $\inf_{e\in\cE}|e|=0$. Indeed, replacing $m$ by $\deg$ in \eqref{eq:akbk} and noting that $\deg(x_k)=2$ for all $k\in\N$, we end up with the Jacobi matrix, which does not reflect spectral properties of the Hamiltonian $\rH_{X,\gA}$. For example, setting $|e_k|=1/k$, $k\in\N$,  \eqref{eq:discr_lapl} with $\deg$ in place of $m$ then gives rise to the matrix
\be\label{eq:1Djacoptilde}
\wt{J}= \frac{1}{2}\begin{pmatrix} \sqrt{2} & -\sqrt{2} & 0 & 0 & \dots \\
-\sqrt{2} & \gA_1 + 3 & -{2} &  0 & \dots \\
0& -{2} & \gA_2 + 5 & -3 & \dots \\
 0 & 0 & -3 & \gA_3 + 7 & \dots \\
\dots & \dots & \dots & \dots & \dots \\
\end{pmatrix}.
\ee
The Carleman test shows that the minimal operator associated with $\wt{J}$ in $\ell^2(\Z_{\ge 0})$ is always self-adjoint, however, $J$ with $\alpha_k:=-(2k+1)$ for all $k\in \N$ defines in $\ell^2(\Z_{\ge 0})$ the minimal symmetric operator with deficiency indices $(1,1)$ (cf. Example \ref{ex:01}). In particular, in this case the spectrum of every self-adjoint extension of $J$ (and hence of $\rH_{X,\gA}$!) is purely discrete, however, the spectrum of $\wt{J}$ with this choice of $\alpha$ is purely absolutely continuous and covers the whole real line $\R$ (cf. \cite{jn02}). The latter shows that one cannot replace \eqref{eq:1.03} by \eqref{eq:1.03B} in Theorem \ref{th:main} if $\inf_{e\in\cE}|e|=0$.
\end{remark}

\begin{remark}\label{rem:3.8}
One can notice a connection between the discrete Laplacian \eqref{eq:h_a} and the operator $\bH_\alpha$ without the boundary triplets approach. Namely, consider the kernel $\cL= \ker(\bH_{\max})$ of $\bH_{\max}$, which consists of piecewise linear functions on $\cG$. Every $f\in \cL$ can be identified with its values  $\{f(e_i), f(e_o)\}_{e\in \cE}$ on $\cV$. First of all, notice that 
\be
\|f\|^2_{L^2(\cG)} = \sum_{e\in\cE} |e| \frac{|f(e_i)|^2 + \re(f(e_i)f(e_o)^\ast) + |f(e_o)|^2}{3}.
\ee
Now restrict ourselves to the subspace $\cL_{cont}$ of $\cL$ which consists of continuous functions vanishing everywhere on $\cG$ except finitely many edges. Clearly,  
\[
 \sum_{e\in\cE} |e| ({|f(e_i)|^2 + |f(e_o)|^2}) = \sum_{v\in\cV} |f(v)|^2\sum_{e\in \cE_v}|e|= \|f\|^2_{\ell^2(\cV;m)}
\]
defines an equivalent norm on $\cL_{cont}$. 
On the other hand, for every $f\in \cL_{cont}$ we get
\begin{align*}
(\bH_\alpha f,f) & = \sum_{e\in\cE} \int_{e} |f'({x_e})|^2 d{x_e} + \sum_{v\in\cV} \alpha(v)|f(v)|^2\\
& = \sum_{e\in\cE} \frac{|f(e_o) - f(e_i)|^2}{|e| } + \sum_{v\in\cV} \alpha(v)|f(v)|^2\\
&=\frac{1}{2}\sum_{u,v\in \cV} b(v,u)|f(v) - f(u)|^2 + \sum_{v\in\cV}\alpha(v)|f(v)|^2=:\gt_{\cG,\alpha}[f].
\end{align*}
However, one can easily check that the latter is the quadratic form of the discrete operator $\rh_\alpha$ defined in $\ell^2(\cV;m)$ by \eqref{eq:h_a}, that is, the following equality 
\be\label{eq:hAform}
\big(\rh_\alpha f,f\big)_{\ell^2(\cV;m)} = \gt_{\cG,\alpha}[f]=\frac{1}{2}\sum_{u,v\in \cV} b(v,u)|f(v) - f(u)|^2 + \sum_{v\in\cV}\alpha(v)|f(v)|^2
\ee
holds for every $f\in C_c(\cV)$. 
\end{remark}

\section{Quantum graphs with Kirchhoff vertex conditions}\label{sec:kirchhoff}

As in Section \ref{sec:parameterization}, if it is not explicitly stated, we shall always assume that $\cG$ satisfies Hypotheses \ref{hyp:graph01} and \ref{hyp:graph02}.
In this section we restrict ourselves to the case $\alpha\equiv 0$, that is, we consider the quantum graph with Kirchhoff vertex conditions 
\be\label{eq:kirchhoff}
\begin{cases} f\ \text{is continuous at}\ v,\\[2mm] \sum_{e\in \cE_v}f_e'(v) =0, \end{cases}
\ee 
at every vertex $v\in\cV$. 
Let us denote  by $\bH_0$ the closure of the corresponding operator $\bH_0^0$ given by \eqref{eq:Halpha}. By Theorem \ref{th:main}, the spectral properties of $\bH_0$ are closely connected with those of $\rh_0$, where $\rh_0$ is the discrete Laplacian defined in $\ell^2(\cV;m)$ by the difference expression
\be\label{eq:h0}
(\tau_{\cG,0} f)(v) = \frac{1}{m(v)} \sum_{u\sim v} b(u,v)(f(v) - f(u)), \quad v\in\cV, 
\ee 
and the functions $m\colon\cV\to (0,\infty)$, $b\colon\cV\times\cV \to [0,\infty)$ are defined by \eqref{eq:m_weight} and \eqref{eq:b_weight}, respectively,
\begin{align}\label{eq:mb}
m(v) & =\sum_{e\in\cE_v}|e|, & b(u,v) & = \begin{cases}  |e_{u,v}|^{-1}, & u\sim v,\\ 0, & u\not\sim v.\end{cases}
\end{align}
Note that both operators $\bH_0$ and $\rh_0$ are symmetric and nonnegative. Moreover, Theorem \ref{th:main} immediately implies the following result.

\begin{corollary}\label{cor:kirchhoff}
Assume that Hypotheses \ref{hyp:graph01} and \ref{hyp:graph02} are satisfied. 
Then: 
\begin{itemize}
\item[(i)]
The deficiency indices of $\bH_0$ and $\rh_0$ are equal and
\[
\mathrm{n}_+(\bH_0) = \mathrm{n}_-(\bH_0) = \mathrm{n}_\pm(\rh_0)\le \infty.
\]
In particular, $\bH_0$ is self-adjoint if and only if $\rh_0$ is self-adjoint.
\end{itemize}

Assume in addition that $\bH_0$ (and hence also $\rh_0$) is self-adjoint. Then:
\begin{itemize}
\item[(ii)]  $\bH_0$ is positive definite if and only if the same is true for $\rh_0$.
\item[(iii)]  $\inf\sigma_{\ess}(\bH_0)>0$  if and only if $\inf \sigma_{\ess}(\rh_0)>0$.
\item[(iv)] The spectrum of $\bH_0$ is purely discrete if and only if the number $\#\{e\in\cE\colon |e|>\varepsilon\}$ is finite for every $\varepsilon>0$ and the spectrum of the operator $\rh_0$ is purely discrete.
\end{itemize}
\end{corollary}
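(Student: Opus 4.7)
The plan is to read the corollary off as a direct specialization of Theorem \ref{th:main} to the Kirchhoff case $\alpha \equiv 0$. With that identification, statement (i) is literally Theorem \ref{th:main}(i), statement (ii) is the positive definiteness half of Theorem \ref{th:main}(iii), statement (iii) is Theorem \ref{th:main}(vii), and statement (iv) is Theorem \ref{th:main}(viii). Accordingly, I would first note that the Kirchhoff operator $\bH_0$ defined through \eqref{eq:kirchhoff} and \eqref{eq:Halpha} coincides with the operator $\bH_\alpha$ from Section \ref{sec:parameterization} taken at $\alpha \equiv 0$, and likewise $\rh_0$ from \eqref{eq:h0}--\eqref{eq:mb} is the $\alpha = 0$ instance of the discrete Laplacian appearing in Theorem \ref{th:main}; both identifications are immediate from the definitions.

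The only nontrivial hypothesis one must check before invoking Theorem \ref{th:main}(vii) to obtain (iii) is that $\rh_0$ is lower semibounded, and this comes for free. Indeed, inserting $\alpha \equiv 0$ into the quadratic form identity \eqref{eq:hAform} from Remark \ref{rem:3.8} gives
\[ \bigl(\rh_0 f, f\bigr)_{\ell^2(\cV;m)} = \tfrac{1}{2}\sum_{u,v\in\cV} b(v,u)\,|f(v)-f(u)|^2 \ge 0 \]
for every $f \in C_c(\cV)$. Hence $\rh_0^0$ is nonnegative and, under the self-adjointness assumption in (ii)--(iv), so is its closure $\rh_0$. An analogous integration by parts on each edge shows $\bH_0 \ge 0$. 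In particular, the nonnegativity part of (ii) is automatic, leaving only the equivalence of positive definiteness for Theorem \ref{th:main}(iii) to supply.

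There is no substantive obstacle here: the corollary is essentially a packaging statement, collecting the Kirchhoff consequences of Theorem \ref{th:main}. The sole bookkeeping point is to record that both $\bH_0$ and $\rh_0$ are automatically nonnegative, which guarantees that the lower semiboundedness premise of Theorem \ref{th:main}(vii) holds without imposing any extra assumptions on the graph geometry beyond Hypotheses \ref{hyp:graph01}--\ref{hyp:graph02}.
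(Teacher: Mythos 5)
Your proposal is correct and matches the paper's own treatment: the paper simply remarks that both $\bH_0$ and $\rh_0$ are symmetric and nonnegative and then states that Theorem \ref{th:main} (with $\alpha\equiv 0$) immediately implies the corollary. Your identification of each item with the corresponding part of Theorem \ref{th:main} and your explicit verification of nonnegativity via the quadratic form (needed for the lower semiboundedness premise in item (iii)) is exactly the intended argument.
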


Our next goal is to use the spectral theory of discrete Laplacians \eqref{eq:h0} to prove new results for quantum graphs. 

\subsection{Intrinsic metrics on graphs} \label{ss:4.1}

During the last decades a lot of attention has been paid to the study of spectral properties of the discrete Laplacian \eqref{eq:h0}. Let us recall several basic concepts. Suppose that the metric graph $\cG=(\cV,\cE,|\cdot|)$ satisfies Hypotheses \ref{hyp:graph01} and \ref{hyp:graph02}. The function $\Deg\colon \cV \to (0,\infty)$ defined by 
 \be\label{eq:wgtdeg}
\Deg\colon v\mapsto \frac{1}{m(v)}\sum_{u\in\cE_v} b(u,v) = \frac{\sum_{e\in\cE_v} |e|^{-1}}{\sum_{e\in\cE_v} |e|},
\ee
is called {\em the weighted degree}.  Notice that by \cite[Lemma 1]{dav} (see also \cite[Theorem 11]{kl10}), $\rh_0$ is bounded on $\ell^2(\cV;m)$ (and hence self-adjoint) if and only if the weighted degree $\Deg$ is bounded on $\cV$. In this case (see \cite[Lemma 1]{dav})
\be\label{eq:estDavies}
\sup_{v\in\cV} \Deg(v) \le \|\rh_0\|\le 2\sup_{v\in\cV} \Deg(v).
\ee

{\em A pseudo metric} $\varrho$ on $\cV$ is a symmetric function $\varrho\colon \cV\times\cV\to [0,\infty)$ such that $\varrho(v,v)=0$ for all $v\in\cV$ and satisfies the triangle inequality. Notice that every function $p\colon \cE \to (0,\infty)$ generates {\em a path pseudo metric} $\varrho_p$ on $\cV$ with respect to the graph $\cG$ via
\be\label{eq:pathmetric}
\varrho_p(u,v) := 
\inf_{\cP=\{v_0,\dots,v_n\}\colon u=v_0,\ v=v_n}\sum_{k} p(e_{v_{k-1},v_k}). 
\ee
Here the  infimum is taken over all paths connecting $u$ and $v$.

Following \cite{flw14} (see also \cite{bkw15, kel15}), a pseudo metric $\varrho$ on $\cV$ 
 is called {\em intrinsic} with respect to the graph $\cG$ if 
\be\label{eq:intrinsicdef}
 \sum_{u\in\cE_v}b(u,v)\varrho(u,v)^2\le m(v)
\ee
holds on $\cV$. 
Notice that for any given locally finite graph an intrinsic metric always exists. 

\begin{example}
\begin{itemize}
\item[(a)]
Let  $p\colon \cE\to (0,\infty)$ be defined by
\be\label{eq:vrho0}
 p\colon e_{u,v}\mapsto \big(\Deg(u)\vee \Deg(v)\big)^{-1/2}. 
\ee
It is straightforward to check that the corresponding path pseudo metric $\varrho_p$ is intrinsic (see \cite[Example 2.1]{hkmw13}, \cite{kel15}). 

\item[(b)] Another pseudo metric was suggested in \cite{dVTHT}. Namely, let $\varrho$ be a path pseudo metric generated by the function $p\colon \cE\to (0,\infty)$
\be
p\colon e_{u,v}\mapsto  \left(\frac{m(u)\wedge m(v)}{b(e_{u,v})}\right)^{1/2}.
\ee
It was shown in \cite{hkmw13} that this metric is equivalent to the metric \eqref{eq:vrho0} if and only if the combinatorial degree $\deg$ is bounded on $\cV$. \hfill $\lozenge$
\end{itemize}
\end{example}

It turns out that for the discrete operator $\rh_0$ given by \eqref{eq:h0}, \eqref{eq:mb} the natural path metric induced by the metric graph $\cG$ is intrinsic.

\begin{lemma}\label{lem:varrho0}
The function $p_0\colon \cE\to (0,\infty)$ given by 
\be\label{eq:p0}
p_0(e):= |e|,\quad e\in\cE,
\ee
generates an intrinsic (with respect to the graph $\cG$) path metric $\varrho_0$ on $\cV$. 
\end{lemma}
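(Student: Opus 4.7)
The plan is to unravel the definition of intrinsic metric and verify the inequality \eqref{eq:intrinsicdef} by a direct one-line computation that exploits the specific relationship between the weight $b$ on edges, the measure $m$ on vertices, and the edge lengths used to build $p_0$.

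First I would observe that, by the very definition of a path pseudo metric \eqref{eq:pathmetric}, whenever $u\sim v$ the single-edge path $\{u,v\}$ is admissible in the infimum, so
\[
\varrho_0(u,v)\;\le\; p_0(e_{u,v})\;=\;|e_{u,v}|, \qquad u\sim v.
\]
This is the only real input needed; the rest is bookkeeping. Note also that $\varrho_0$ is a genuine pseudo metric (symmetric, vanishing on the diagonal, triangle inequality) as a standard consequence of being defined via an infimum over concatenable paths, so it is a legitimate candidate for \eqref{eq:intrinsicdef}.

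Next I would substitute this bound into the left-hand side of \eqref{eq:intrinsicdef} and use $b(u,v)=|e_{u,v}|^{-1}$ for $u\sim v$ together with the fact that the local finiteness in Hypothesis~\ref{hyp:graph02} guarantees the sum over neighbours is the same as the sum over incident edges (no multiple edges by Hypothesis~\ref{hyp:graph01}):
\[
\sum_{u\sim v}b(u,v)\,\varrho_0(u,v)^2 \;\le\; \sum_{u\sim v} |e_{u,v}|^{-1}\cdot |e_{u,v}|^2 \;=\; \sum_{e\in\cE_v}|e| \;=\; m(v),
\]
where the last equality is just \eqref{eq:m_weight}. This is exactly \eqref{eq:intrinsicdef}, and since the estimate holds pointwise for every $v\in\cV$, the pseudo metric $\varrho_0$ is intrinsic with respect to $\cG$.

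There is no substantive obstacle here: the statement is essentially a tautology once one recognises that the weight $b$ and the measure $m$ have been tailored to the lengths $|e|$ precisely so that the ``inverse-length times length-squared'' telescoping collapses to the sum of incident edge lengths. The only subtle point to flag in writing it up is to confirm that $\varrho_0$ is indeed finite (and hence a pseudo metric into $[0,\infty)$ rather than $[0,\infty]$), which follows from connectedness in Hypothesis~\ref{hyp:graph01}, so that every pair of vertices is joined by a finite path.
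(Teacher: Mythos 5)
Your proof is correct and follows essentially the same route as the paper: plug the bound on $\varrho_0(u,v)$ for adjacent vertices into \eqref{eq:intrinsicdef} and watch the weights $b(u,v)=|e_{u,v}|^{-1}$ cancel against $|e_{u,v}|^2$ to give $m(v)$. If anything, your version is slightly more careful, since you only use the (always valid) inequality $\varrho_0(u,v)\le|e_{u,v}|$ coming from the one-edge path, whereas the paper asserts the equality $\varrho_0(u,v)=|e_{u,v}|$ for $u\sim v$, which can fail when a shorter detour exists but is not needed for the conclusion.
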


\begin{proof}
First of all, notice that for the functions \eqref{eq:mb} the condition \eqref{eq:intrinsicdef} takes the following form
\be\label{eq:intrinsic}
\sum_{u\sim v}\frac{ \varrho(u,v)^2}{ |e_{u,v}|} \le \sum_{u\sim v}|e_{u,v}| 
\ee
for every $v\in\cV$. Clearly  \eqref{eq:intrinsic} holds with $\varrho=\varrho_0$ for all $v\in\cV$ with equality instead of inequality since 
\[
\varrho_0(u,v) = \frac{1}{b(u,v)} = |e_{u,v}|
\]
whenever $u\sim v$. 
\end{proof}

For any $v\in\cV$ and $r\ge 0$, {\em the distance ball} $B_r(v)$ with respect to a pseudo metric $\varrho$ is defined by 
\be\label{eq:ball}
B_r(v) : =\{u\in\cV\colon \varrho(u,v)\le r\}.
\ee
Finally for a set $X\subset \cV$, {\em the combinatorial neighbourhood} of $X$ is given by
\be\label{eq:comb_ball}
\Omega(X) :=\{u\in \cV\colon u\in X\ \text{or there exists}\ v\in X\ \text{such that}\ u\sim v\}.
\ee

\subsection{Self-adjointness of $\bH_0$}\label{ss:4.2}

In this and the following subsections we shall always assume that the metric graph $\cG$ satisfies Hypotheses \ref{hyp:graph01} and \ref{hyp:graph02}. We begin with the following result.

\begin{theorem}\label{th:saH0_1} 
If the weighted degree $\Deg$ is bounded on $\cV$,
\be\label{eq:4.01}
\sup_{v\in\cV}\Deg(v) = \sup_{v\in\cV} \frac{\sum_{e\in\cE_v} |e|^{-1}}{\sum_{e\in\cE_v} |e|}<\infty,
\ee
then the operator $\bH_0$ is self-adjoint. 
\end{theorem}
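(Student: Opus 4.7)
The plan is to reduce the self-adjointness of $\bH_0$ on $L^2(\cG)$ to the self-adjointness of the weighted discrete Laplacian $\rh_0$ on $\ell^2(\cV;m)$ via Corollary \ref{cor:kirchhoff}(i), and then to observe that under hypothesis \eqref{eq:4.01} the operator $\rh_0$ is actually \emph{bounded}, hence trivially self-adjoint.

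More precisely, first I would invoke the bound \eqref{eq:estDavies} (due to Davies, cf.\ \cite[Lemma 1]{dav} and \cite[Theorem 11]{kl10}), which was already quoted in Subsection \ref{ss:4.1}: it asserts that $\rh_0$ extends to a bounded self-adjoint operator on $\ell^2(\cV;m)$ whenever $\Deg$ is bounded on $\cV$, with
\[
\sup_{v\in\cV}\Deg(v)\le \|\rh_0\|\le 2\sup_{v\in\cV}\Deg(v).
\]
So assumption \eqref{eq:4.01} gives immediately that $\rh_0$ is bounded, and hence (since it is already symmetric) it is self-adjoint on $\ell^2(\cV;m)$.

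Then I would apply Corollary \ref{cor:kirchhoff}(i), which states that $\bH_0$ is self-adjoint if and only if $\rh_0$ is self-adjoint. Combining the two steps yields the conclusion. There is really no main obstacle here: both ingredients have already been established in the paper, so the theorem is essentially a one-line corollary of Theorem \ref{th:main} (specialized in Corollary \ref{cor:kirchhoff}) and the classical Davies boundedness criterion. The only conceptual point worth emphasizing is that, in contrast to the usual metric-graph literature, we do \emph{not} need $\inf_{e\in\cE}|e|>0$: boundedness of the weighted degree function $\Deg$ is a strictly weaker requirement (as will be illustrated in Example \ref{ex:tree01}), yet it already suffices to guarantee self-adjointness of $\bH_0$ through the discrete side of the correspondence.
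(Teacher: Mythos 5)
Your proposal is correct and coincides with the paper's own proof: both use the Davies bound \eqref{eq:estDavies} to conclude that $\rh_0$ is bounded (hence self-adjoint) on $\ell^2(\cV;m)$ under \eqref{eq:4.01}, and then invoke Corollary \ref{cor:kirchhoff}(i) to transfer self-adjointness to $\bH_0$. No gaps.
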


\begin{proof}
Consider the corresponding boundary operator $\rh_0$ defined by \eqref{eq:h0}. Since $\Deg$ is bounded on $\cV$, the operator $\rh_0$ is bounded on $\ell^2(\cV;m)$  (see \eqref{eq:estDavies}) and hence self-adjoint. It remains to apply Corollary \ref{cor:kirchhoff}(i).
\end{proof}

As an immediate corollary of this result we obtain the following widely known sufficient condition (cf. \cite[Theorem 1.4.19]{bk13}).
 
\begin{corollary}\label{cor:sa01}
If $\inf_{e\in\cE} |e|>0$, then the operator $\bH_0$ is self-adjoint.
\end{corollary}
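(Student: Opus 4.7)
The plan is to deduce this directly from Theorem \ref{th:saH0_1} by verifying its hypothesis: namely, that $\inf_{e\in\cE}|e|>0$ forces the weighted degree $\Deg$ to be uniformly bounded on $\cV$. Set $d_* := \inf_{e\in\cE}|e|$, which is strictly positive by assumption, and recall from Hypothesis \ref{hyp:graph02} that $\cG$ is locally finite, so for every $v\in\cV$ the set $\cE_v$ is finite and the sums defining $\Deg(v)$ are well-defined.

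Then for any $v\in\cV$ I would estimate the numerator and denominator of \eqref{eq:wgtdeg} separately. Since $|e|\ge d_*$ on every edge, we have $|e|^{-1}\le d_*^{-1}$ and hence
\[
\sum_{e\in\cE_v}|e|^{-1} \le \frac{\deg(v)}{d_*},\qquad \sum_{e\in\cE_v}|e| \ge d_*\cdot\deg(v).
\]
Dividing these two estimates, the combinatorial degree $\deg(v)$ cancels and one obtains
\[
\Deg(v) = \frac{\sum_{e\in\cE_v}|e|^{-1}}{\sum_{e\in\cE_v}|e|} \le \frac{1}{d_*^2}
\]
uniformly in $v\in\cV$, so the condition \eqref{eq:4.01} of Theorem \ref{th:saH0_1} is satisfied.

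Applying Theorem \ref{th:saH0_1} then gives self-adjointness of $\bH_0$ at once. There is no real obstacle here: the key (slightly pleasant) observation is that the $\deg(v)$ factors cancel in the ratio defining $\Deg(v)$, so one does not need any uniform control on the combinatorial degree, only the two-sided length bound provided by $\inf_{e\in\cE}|e|>0$ and $\sup_{e\in\cE}|e|<\infty$ (the latter already built into Hypothesis \ref{hyp:graph02}).
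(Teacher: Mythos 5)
Your proposal is correct and is essentially identical to the paper's own proof: both verify the hypothesis of Theorem \ref{th:saH0_1} by bounding the numerator of $\Deg(v)$ by $\deg(v)(\inf_{e\in\cE}|e|)^{-1}$ and the denominator from below by $\deg(v)\inf_{e\in\cE}|e|$, so that the combinatorial degree cancels and $\Deg(v)\le(\inf_{e\in\cE}|e|)^{-2}$. (Only the lower bound on edge lengths enters this estimate; the upper bound $\sup_{e\in\cE}|e|<\infty$ is part of Hypothesis \ref{hyp:graph02} underlying Theorem \ref{th:saH0_1} but plays no role in the cancellation.)
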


\begin{proof}
By Theorem \ref{th:saH0_1}, it suffices to check that $\Deg$ is bounded on $\cV$:
\[
\sup_{v\in\cV} \frac{\sum_{e\in\cE_v} |e|^{-1}}{\sum_{e\in\cE_v} |e|} \le \sup_{v\in\cV} \frac{\deg(v) (\inf_{e\in\cE} |e|)^{-1}}{\deg(v) \inf_{e\in\cE} |e|} = \frac{1}{(\inf_{e\in\cE} |e|)^2}<\infty. \qedhere
\]
\end{proof} 

 A few remarks are in order:

\begin{remark}
\begin{itemize}
\item[(i)] 
Numerous graphs considered both in theoretical purposes and in applications belong to this category \cite{bk13}. Prominent examples are \emph{equilateral graphs} (see, e.g., \cite{cat, pan12, pan13}) and \emph{periodic graphs} (with a finite number of edges in the period cell).
\item[(ii)] Notice that under Hypothesis \ref{hyp:graph02}, the conditions $ \inf_{e\in\cE} |e|>0$ and \eqref{eq:4.01}  are equivalent only if $\sup_{v\in\cV}\deg(v)<\infty$. It is not difficult to construct examples of graphs such that $ \inf_{e\in\cE} |e|=0$ and condition \eqref{eq:4.01} is satisfied (see Example \ref{ex:tree01} below). 
\end{itemize}
\end{remark}

\begin{example}\label{ex:tree01}
Let $\{n_k\}_{k\in \N}$ be a strictly increasing sequence of natural numbers. 
 Consider the following metric graph: Let $o$ be a distinguished vertex which has $n_1$ emanating edges. Moreover, suppose that one of those edges has length $\frac{1}{n_1}$ and the other edges have a fixed length, say $1$. Next, suppose every vertex in the first combinatorial sphere (i.e., every $v\sim o$) has $n_2$ emanating edges and again their lengths equal $1$ except one edge having length $\frac{1}{n_2}$. Continuing this procedure to infinity we end up with an infinite metric graph (called a {\em rooted tree}) such that  
 \begin{align*}
 \inf_{e\in\cE} |e| & = \inf_{k\ge 1} \frac{1}{n_k}=0, & \sup_{e\in\cE} |e| & =1. 
 \end{align*}
 It is easy to see that
\begin{align*}
\sup_{v\in\cV}\Deg(v) &= \sup_{v\in\cV} \frac{\sum_{e\in\cE_v} |e|^{-1}}{\sum_{e\in\cE_v} |e|}  = \sup_{k\ge 1} \frac{n_{k+1}-1 + n_k+n_{k+1}}{n_{k+1} - 1 + \frac{1}{n_k}+\frac{1}{n_{k+1}}} < 4.
\end{align*}
Hence, by Theorem \ref{th:saH0_1} the corresponding Hamiltonian $\bH_0$ is self-adjoint. Moreover, we shall prove below (see Lemma \ref{lem:sa01}) that in this case the corresponding Hamiltonian $\bH_\alpha$ with $\delta$ interactions is self-adjoint for any $\alpha\colon \cV\to \R$. 
\hfill $\lozenge$
\end{example}

The next result shows that we can replace uniform boundedness of the weighted degree function by the local one (in a suitable sense of course).

\begin{theorem}\label{th:gaffney}
Let $\varrho$ be an intrinsic pseudo metric on $\cV$ such that the weighted degree $\Deg$ is bounded on every distance ball in $\cV$. Then $\bH_0$ is self-adjoint.
\end{theorem}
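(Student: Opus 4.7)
The strategy is to reduce the problem to the discrete Laplacian $\rh_0$ via Corollary \ref{cor:kirchhoff}(i) and then invoke the corresponding Gaffney-type self-adjointness result for discrete weighted Laplacians with an intrinsic metric (as in \cite{hkmw13}, \cite{kl10}, \cite{kel15}). Indeed, by Corollary \ref{cor:kirchhoff}(i), $\bH_0$ is self-adjoint if and only if $\rh_0$ is self-adjoint on $\ell^2(\cV;m)$, so the entire statement reduces to essential self-adjointness of the pre-minimal discrete operator $\rh_0^0$ defined on $C_c(\cV)$ by \eqref{eq:h0}.

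For the discrete part I would verify essential self-adjointness by a cut-off argument in the spirit of Gaffney. Fix a reference vertex $v_0\in\cV$ and take any $\psi\in\dom(\rh_0^\ast)$ with $(\rh_0^\ast \mp \I)\psi = 0$. Using the intrinsic metric $\varrho$, one introduces Lipschitz cut-offs
\[
\eta_n(v) := \bigl(1 - \varrho(v_0,v)/n\bigr)_+, \qquad v\in\cV,
\]
which are supported in the distance ball $B_n(v_0)$. The intrinsic property \eqref{eq:intrinsicdef} translates directly into a uniform bound on the discrete gradient of $\eta_n$:
\[
\sum_{u \sim v} b(u,v)\,|\eta_n(u) - \eta_n(v)|^2 \;\le\; \frac{1}{n^2}\sum_{u\sim v} b(u,v)\,\varrho(u,v)^2 \;\le\; \frac{m(v)}{n^2}.
\]
Pairing the eigenvalue equation with $\eta_n^2 \psi$ and applying a discrete Leibniz/commutator identity together with Cauchy--Schwarz then produces an estimate of the form $\|\eta_n \psi\|^2_{\ell^2(\cV;m)} \le \mathrm{const}/n \cdot \|\psi\|^2_{\ell^2(\cV;m)}$, which forces $\psi \equiv 0$ after letting $n\to\infty$.

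The hypothesis on $\Deg$ enters only in legitimising these manipulations. Since we need to apply $\rh_0$ to $\eta_n \psi$ and test against $\psi$, we must ensure that $\rh_0$ restricted to functions supported in any fixed ball $B_r(v_0)$ is well-defined as a bounded operator; by the basic estimate $\|\rh_0\upharpoonright\ell^2(B_r(v_0);m)\|\le 2\sup_{v\in B_r(v_0)}\Deg(v)$ (the local analogue of \eqref{eq:estDavies}), this is precisely what boundedness of $\Deg$ on each distance ball guarantees. The main obstacle I expect is the book-keeping for the discrete Green's identity applied to $\psi\in\dom(\rh_0^\ast)$ (rather than to $\psi\in C_c(\cV)$), together with verifying that boundary contributions coming from $\eta_n$ on the annular region $B_n(v_0)\setminus B_{n/2}(v_0)$ vanish in the limit. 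Both issues are resolved in the discrete-graph literature, so the final proof should amount to verifying that the standing hypotheses (intrinsic $\varrho$, local boundedness of $\Deg$ on balls) match those of the appropriate discrete Gaffney-type theorem from \cite{hkmw13} and transferring the conclusion back to $\bH_0$ via Corollary \ref{cor:kirchhoff}(i).
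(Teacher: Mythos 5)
Your proposal is correct and follows essentially the same route as the paper: reduce to the discrete Laplacian $\rh_0$ via Corollary \ref{cor:kirchhoff}(i) and invoke the Gaffney-type self-adjointness theorem for weighted graph Laplacians with an intrinsic metric, namely \cite[Theorem 1]{hkmw13}. The paper's proof is exactly these two steps; your additional sketch of the cut-off argument behind the discrete result is a reasonable outline of how that cited theorem is proved but is not needed here.
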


\begin{proof}
By \cite[Theorem 1]{hkmw13}, the operator $\rh_0$ is self-adjoint. Hence by Corollary \ref{cor:kirchhoff}(i) so is $\bH_0$.
\end{proof}

As an immediate corollary we arrive at the following Gaffney type theorem for quantum graphs.  

\begin{corollary}\label{cor:gaffney}
Let $\varrho_0$ be a natural path metric on $\cV$ defined in Lemma \ref{lem:varrho0}. If $(\cV,\varrho_0)$ is complete as a metric space, then $\bH_0$ is self-adjoint.
\end{corollary}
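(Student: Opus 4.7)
The plan is to deduce this from Theorem \ref{th:gaffney} by taking the intrinsic pseudo metric there to be precisely $\varrho_0$, so what remains is verifying its two hypotheses for this choice.

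First, intrinsicness of $\varrho_0$ is already for free: it is exactly the content of Lemma \ref{lem:varrho0}. So the only thing to check is that the weighted degree $\Deg$ is bounded on every distance ball $B_r(v) = \{u \in \cV : \varrho_0(u,v)\le r\}$.

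The key observation is that completeness of $(\cV,\varrho_0)$ promotes this boundedness to a triviality, by forcing the balls to be \emph{finite} sets. This is where I would invoke the Hopf--Rinow-type theorem for weighted graphs from \cite{hkmw13}, already alluded to in the introduction: for the natural path metric $\varrho_0$ induced by the edge-length function $p_0(e)=|e|$, metric completeness of $(\cV,\varrho_0)$ is equivalent to the \emph{finite ball condition}, namely that every $\varrho_0$-ball $B_r(v)$ contains only finitely many vertices (equivalently, only finitely many edges of $\cG$ meet any bounded region). Once this is in hand, local finiteness of $\cG$ (Hypothesis \ref{hyp:graph02}) gives $\Deg(u)<\infty$ for each individual $u$, and hence $\sup_{u\in B_r(v)}\Deg(u)<\infty$ is just a finite supremum.

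Combining the two ingredients, $\varrho_0$ satisfies both hypotheses of Theorem \ref{th:gaffney}, so $\bH_0$ is self-adjoint. The only nonroutine step is citing the Hopf--Rinow-type theorem correctly: one must be sure that the version in \cite{hkmw13} applies to our weighted setting with vertex weight $m$ from \eqref{eq:mb} and edge weight $b$ from \eqref{eq:mb}, with $\varrho_0$ playing the role of the intrinsic metric; Lemma \ref{lem:varrho0} confirms the compatibility, so no real obstacle remains beyond quoting the statement.
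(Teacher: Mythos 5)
Your argument is correct and coincides with the paper's own proof: both invoke the Hopf--Rinow-type theorem from \cite{hkmw13} (valid here because $\cG_d$ is locally finite by Hypothesis \ref{hyp:graph02}) to convert completeness of $(\cV,\varrho_0)$ into finiteness of the distance balls, note that $\Deg$ is then trivially bounded on each ball, and conclude via Theorem \ref{th:gaffney} using the intrinsicness of $\varrho_0$ from Lemma \ref{lem:varrho0}. No gaps.
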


\begin{proof}
By Hypothesis \ref{hyp:graph02}, the discrete  graph $\cG_d=(\cV,\cE)$ is locally finite. Hence by a Hopf--Rinow type theorem \cite{hkmw13}, $(\cV,\varrho_0)$ is complete as a metric space if and only if the distance balls in  $(\cV,\varrho_0)$ are finite. The latter immediately implies that the weighted degree  $\Deg$ is bounded on every distance ball in $(\cV,\varrho_0)$. It remains to apply Theorem \ref{th:gaffney}.
\end{proof}

\begin{remark}\label{rem:4.10}
Notice that Corollary \ref{cor:gaffney} can be seen as the analog of the classical result of Gaffney \cite{gaf} (see also \cite[Chapter 11]{gri} for further details), who established self-adjointness of the Dirichlet Laplacian on a complete Riemannian manifold. Indeed, $|\cdot|$ generates a natural path metric on a metric graph $\cG=(\cV,\cE,|\cdot|)$ and it is easy to check that $\cG$ equipped with this metric is complete as a metric space if and only if $(\cV,\varrho_0)$ is complete as a metric space.

Let us also mention that Corollary \ref{cor:gaffney} proves the self-adjointness of $\bH_0$ if the metric graph $\cG$ satisfies the {\em finite ball condition} (see \cite[Assumption 1.3.5]{bk13}), which is equivalent to the completeness of $(\cV,\varrho_0)$.
\end{remark}

On the one hand, simple examples demonstrate that Corollary \ref{cor:gaffney} is sharp. Indeed, consider the second derivative on an interval $(0,\ell)$ with $\ell\in (0, \infty]$. As in Example \ref{ex:01}, let $\{x_k\}_{k\ge 0}$ be a strictly increasing sequence such that $x_k\uparrow \ell$ as $k\to \infty$. In this case Kirchhoff conditions are equivalent to the continuity of a function and its derivative at every vertex $x_k$ (see \eqref{eq:delta_line}). The corresponding  operator is self-adjoint only if $\ell=\infty$. 
However, we can improve Corollary \ref{cor:gaffney} by replacing the natural path metric $\varrho_0$ by another path metric (which is not intrinsic!) generated by the weight function $m$. 

\begin{theorem}\label{prop:gaffney}
Let $p_m\colon \cE\to (0,\infty)$ be defined by 
\be\label{eq:p_m}
p_m\colon e_{u,v} \mapsto m(u)+m(v), 
\ee 
where $m$ is given by \eqref{eq:mb}, and let $\varrho_m$ be the corresponding path metric \eqref{eq:pathmetric}.  If $(\cV,\varrho_m)$ is complete as a metric space, then 
$\bH_0$ is self-adjoint.
\end{theorem}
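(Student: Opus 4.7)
My plan is to reduce to the discrete setting and apply a Gaffney-type theorem via an auxiliary intrinsic metric. By Corollary \ref{cor:kirchhoff}(i), $\bH_0$ is self-adjoint if and only if the discrete Laplacian $\rh_0$, defined by \eqref{eq:h0}--\eqref{eq:mb} on $\ell^2(\cV;m)$, is essentially self-adjoint on $C_c(\cV)$, so the plan focuses on the latter.

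The naive strategy of applying Theorem \ref{th:gaffney} with $\varrho=\varrho_m$ fails: for adjacent $u\sim v$ one has $\varrho_m(u,v)=p_m(e_{u,v})=m(u)+m(v)$, and the intrinsic inequality
\[
\sum_{u\sim v} b(u,v)\,\varrho_m(u,v)^2 \;=\; \sum_{u\sim v}\frac{(m(u)+m(v))^2}{|e_{u,v}|} \;\le\; m(v)
\]
generally fails (for instance it forces $m(v)^2 \Deg(v)\lesssim 1$ at each vertex). My plan is instead to exhibit an auxiliary edge weight $\tilde p\colon\cE\to(0,\infty)$ whose path pseudo-metric $\tilde\varrho$ is intrinsic for $(\cV,b,m)$, i.e., $\sum_{u\sim v}b(u,v)\tilde p(e_{u,v})^2\le m(v)$, and simultaneously satisfies a lower comparison $\tilde\varrho\gtrsim \varrho_m$ \emph{along paths}. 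Given such a $\tilde p$, any $\tilde\varrho$-distance ball is contained in a $\varrho_m$-distance ball; combined with the completeness hypothesis and the Hopf--Rinow theorem of \cite{hkmw13} (applicable thanks to the local finiteness in Hypothesis \ref{hyp:graph02}), this forces $(\cV,\tilde\varrho)$ to be complete as well. Theorem \ref{th:gaffney} then delivers essential self-adjointness of $\rh_0$, and hence of $\bH_0$.

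The main technical obstacle is the construction of such a weight $\tilde p$. The intrinsic condition is a \emph{quadratic} restriction at each vertex, whereas the target lower bound $\tilde\varrho\gtrsim\varrho_m$ is essentially \emph{linear} in the vertex masses $m(v)$. At vertices where the weighted degree $\Deg(v)$ is large these two demands are in tension on individual edges, so the comparison must be recovered at the level of paths rather than edge by edge. This should be possible by exploiting the additive structure $p_m(e_{u,v})=m(u)+m(v)$: the contributions from $m(u)$ and $m(v)$ can be attributed to edges incident to $u$ and $v$ respectively, and a Cauchy--Schwarz/averaging argument over the finitely many neighbours at each vertex is expected to reconcile the quadratic constraint with the desired linear growth along any path escaping to infinity.
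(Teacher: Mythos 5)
Your reduction to $\rh_0$ via Corollary \ref{cor:kirchhoff}(i) and your observation that $\varrho_m$ fails to be intrinsic both match the paper, but the repair you propose --- an auxiliary intrinsic path metric $\tilde\varrho$ with $\tilde\varrho\gtrsim\varrho_m$ along paths, feeding into Theorem \ref{th:gaffney} --- cannot work in general. The intrinsic condition \eqref{eq:intrinsicdef} forces, for \emph{any} intrinsic pseudo metric $\varrho$ and any pair of adjacent vertices, the pointwise bound $\varrho(u,v)\le\sqrt{m(v)\,|e_{u,v}|}$, and a path uses only one outgoing edge per vertex, so no Cauchy--Schwarz averaging over the other neighbours can restore a lower bound of order $m(v)$ on that single edge. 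Concretely, take the ray $v_0\sim v_1\sim v_2\sim\cdots$ with $|e_{v_n,v_{n+1}}|=4^{-n}$ and attach to each $v_n$ one pendant edge of length $1$. Then $m(v)\ge 1$ everywhere, so \eqref{eq:4.12} holds and $(\cV,\varrho_m)$ is complete, i.e.\ the hypothesis of the theorem is satisfied; yet every intrinsic pseudo metric obeys $\varrho(v_n,v_{n+1})\le\sqrt{3}\,2^{-n}$, hence all $v_n$ lie in the ball $B_{2\sqrt3}(v_0)$, on which $\Deg(v_n)\ge 4^n/3$ is unbounded. Thus for this graph \emph{no} intrinsic pseudo metric meets the hypotheses of Theorem \ref{th:gaffney}, and in particular no intrinsic $\tilde\varrho$ dominating $\varrho_m$ exists. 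The construction you defer as a ``technical obstacle'' is the whole content of the argument, and it is impossible.

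The paper takes a different and shorter route that bypasses intrinsic metrics entirely. From completeness of $(\cV,\varrho_m)$ and the Hopf--Rinow theorem of \cite{hkmw13} it deduces that every infinite path has infinite $\varrho_m$-length; since
\[
\sum_{n=0}^N m(v_n)\ \le\ \sum_{n=0}^N p_m(e_{v_n,v_{n+1}})\ \le\ 2\sum_{n=0}^N m(v_n),
\]
this is equivalent to $\sum_{n\ge0}m(v_n)=\infty$ along every infinite path. Essential self-adjointness of $\rh_0$ then follows not from the Gaffney-type Theorem \ref{th:gaffney} but from the Keller--Lenz criterion \cite[Theorem 6]{kl12}, which asserts self-adjointness precisely under divergence of the vertex measure along all infinite paths; Corollary \ref{cor:kirchhoff}(i) finishes the proof. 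If you want to complete your argument you should replace the intrinsic-metric step by an appeal to that criterion (or prove an analogue of it), rather than trying to manufacture a weight $\tilde p$ that the quadratic constraint rules out.
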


\begin{proof}
Applying the Hopf--Rinow theorem from \cite{hkmw13} once again, $(\cV,\varrho_m)$ is complete as a metric space if and only if all infinite geodesics have infinite length, which is further equivalent to the fact that distance balls in  $(\cV,\varrho_m)$ are finite. The former statement implies, in particular, that for every infinite path $\cP = \{v_n\}_{n\ge 0}\subset \cV$ its length 
\[
|\cP| = \sum_{n\ge 0} p_m(e_{v_n,v_{n+1}}) 
\]
is infinite. However, \eqref{eq:p_m} implies the following estimate
\[
\sum_{n=0}^N m(v_n) \le \sum_{n=0}^N p_m(e_{v_n,v_{n+1}}) \le 2 \sum_{n=0}^N m(v_n),
\]
for every finite path $\cP_N=\{v_n\}_{n=0}^N$ in $\cV$. Hence for every infinite path $\cP$ we conclude that the sum 
\[
\sum_{n\ge 0} m(v_n) 
\]
is infinite. By Theorem 6 from \cite{kl12}, the latter implies that the operator $\rh_0$ is self-adjoint in $\ell^2(\cV;m)$. It remains to apply Corollary \ref{cor:kirchhoff}(i).
\end{proof}

As an immediate corollary of Theorem \ref{prop:gaffney} we obtain the following improvement of Corollary \ref{cor:sa01}.

\begin{corollary}\label{lem:sa02}
If 
\be\label{eq:4.12}
\inf_{v\in\cV} m(v) = \inf_{v\in\cV} \sum_{e\in\cE_v}|e|>0,
\ee
 then the operator $\bH_0$ is self-adjont.
\end{corollary}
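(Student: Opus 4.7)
The plan is to deduce the corollary directly from Theorem \ref{prop:gaffney} by showing that the hypothesis $\inf_{v \in \cV} m(v) > 0$ forces the metric space $(\cV, \varrho_m)$ to be complete. Set $m_0 := \inf_{v \in \cV} m(v) > 0$. By the Hopf--Rinow type theorem from \cite{hkmw13} invoked in the proof of Theorem \ref{prop:gaffney}, and in view of Hypothesis \ref{hyp:graph02} (local finiteness of $\cG_d$), completeness of $(\cV, \varrho_m)$ is equivalent to finiteness of all distance balls. So it suffices to verify the latter.

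The key computation is a lower bound on the $\varrho_m$-length of any path in terms of its combinatorial length. For a finite path $\cP_n = \{v_0, v_1, \dots, v_n\}$ in $\cG_d$, the definition \eqref{eq:p_m} gives
\[
\sum_{k=1}^{n} p_m(e_{v_{k-1}, v_k}) \;=\; \sum_{k=1}^{n} \bigl(m(v_{k-1}) + m(v_k)\bigr) \;=\; m(v_0) + 2\sum_{k=1}^{n-1} m(v_k) + m(v_n) \;\ge\; 2 n\, m_0 .
\]
Taking the infimum over all paths connecting $u$ and $v$ in \eqref{eq:pathmetric}, we obtain $\varrho_m(u,v) \ge 2 m_0 \cdot \gd_{\rm comb}(u,v)$, where $\gd_{\rm comb}$ is the combinatorial graph distance.

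Consequently, for every $v \in \cV$ and every $r \ge 0$, the distance ball $B_r(v)$ with respect to $\varrho_m$ is contained in the combinatorial ball of radius $r/(2 m_0)$ around $v$. Under Hypothesis \ref{hyp:graph02} the graph $\cG_d$ is locally finite, so combinatorial balls of finite radius contain only finitely many vertices. Hence $B_r(v)$ is finite for every $v \in \cV$ and $r \ge 0$, and $(\cV, \varrho_m)$ is complete. Applying Theorem \ref{prop:gaffney} yields the self-adjointness of $\bH_0$.

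There is essentially no obstacle here; the argument is almost mechanical once the correct lower bound $p_m(e_{u,v}) \ge 2 m_0$ on edge weights is written down. The only point that requires (trivial) care is to use local finiteness of $\cG_d$ rather than of the metric graph, in order to pass from the combinatorial ball bound to finiteness of the $\varrho_m$-ball.
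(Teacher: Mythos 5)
Your proof is correct and follows essentially the same route as the paper: both deduce completeness of $(\cV,\varrho_m)$ from $\inf_{v}m(v)>0$ via the Hopf--Rinow type theorem of \cite{hkmw13} and then invoke Theorem \ref{prop:gaffney}. The only cosmetic difference is that you verify completeness through the finite-ball criterion (comparing $\varrho_m$ with the combinatorial distance), while the paper uses the equivalent criterion that every infinite geodesic has infinite length; your computation $\varrho_m(u,v)\ge 2m_0\,\gd_{\rm comb}(u,v)$ is a slightly more explicit version of the same observation.
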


\begin{proof}
Clearly, every infinite geodesic in $(\cV,\varrho_m)$ has infinite length if \eqref{eq:4.12} is satisfied. According to Hypothesis \ref{hyp:graph02}, $\cG$ is a locally finite graph and hence combining the Hopf--Rinow type theorem \cite{hkmw13} with Theorem \ref{prop:gaffney} we finish the proof.
\end{proof}

\begin{remark}
\begin{itemize}
\item[(i)] Notice that the self-adjointness of $\rh_0$ in $\ell^2(\cV;m)$ under the assumption \eqref{eq:4.12} was first  mentioned in \cite[Corollary 9.2]{hklw12}.
\item[(ii)] Clearly, $\varrho_{0}(u,v)\le \varrho_{m}(u,v)$ for all $u,v\in\cV$ and hence every infinite geodesic in $(\cV,\varrho_0)$ with infinite length will have an infinite length in $(\cV,\varrho_m)$. However, the converse statement is not true which can be seen by simple examples.
\end{itemize}
\end{remark}

\begin{example}\label{ex:4.14}
Let $\cG\subset \R^2$ be a planar graph constructed as follows (see the figure depicted below). Let $X=\{x_k\}_{k\ge 1} \subset [0,\infty)$ be a strictly increasing sequence with $x_1=0$. We set $\cV = X\times \{-1,0,1\}$ and denote $v_{k,n} = (x_k,n)$, $k\in\N$ and $n\in\{-1,0,1\}$. Now we define the set of edges by the following rule: $v_{n,k} \sim v_{m,j}$ if either $n=m$ and $|k-j|=1$ or $k=j=0$ and $|n-m|=1$. Finally, we assign lengths as the usual Euclidean length in $\R^2$: the length of every vertical edge is equal to $1$, and the length of the horizontal edge $e_{v_{k,0}, v_{k+1,0}}$ is equal to  $x_{k+1}-x_k$.

\begin{center}
 \begin{tikzpicture}[domain=-4:8, samples=101]
 
\node[color=blue] at (-3,0) {\tiny $\bullet$};
\node at (-3.5,-0.5) {$v_{1,0}$};

\node[color=blue] at (-0.5,0) {\tiny $\bullet$};
\node at (-1,-0.5) {$v_{2,0}$};

\node[color=blue] at (1.5,0) {\tiny $\bullet$};
\node at (1,-0.5) {$v_{3,0}$};

\node[color=blue] at (2.5,0) {\tiny $\bullet$};
\node at (2,-0.5) {$v_{4,0}$};

\node[color=blue] at (3.3,0) {\tiny $\bullet$};
\node at (3,-0.5) {$v_{5,0}$};

\node[color=blue] at (-3,1.5) {\tiny $\bullet$};
\node at (-3,2) {$v_{1,1}$};

\node[color=blue] at (-0.5,1.5) {\tiny $\bullet$};
\node at (-0.5,2) {$v_{2,1}$};

\node[color=blue] at (1.5,1.5) {\tiny $\bullet$};
\node at (1.5,2) {$v_{3,1}$};

\node[color=blue] at (2.5,1.5) {\tiny $\bullet$};
\node at (2.5,2) {$v_{4,1}$};

\node[color=blue] at (3.3,1.5) {\tiny $\bullet$};
\node at (3.3,2) {$v_{5,1}$};

\node[color=blue] at (-3,-1.5) {\tiny $\bullet$};
\node at (-3,-2) {$v_{1,-1}$};

\node[color=blue] at (-0.5,-1.5) {\tiny $\bullet$};
\node at (-0.5,-2) {$v_{2,-1}$};

\node[color=blue] at (1.5,-1.5) {\tiny $\bullet$};
\node at (1.5,-2) {$v_{3,-1}$};

\node[color=blue] at (2.5,-1.5) {\tiny $\bullet$};
\node at (2.5,-2) {$v_{4,-1}$};

\node[color=blue] at (3.3,-1.5) {\tiny $\bullet$};
\node at (3.3,-2) {$v_{5,-1}$};

\draw[-] (-3,0) -- (3.5,0);
\draw[dashed, help lines] (3.5,0) -- (5.5,0);

\draw[-] (-3,0) -- (-3,1.5);
\draw[-] (-0.5,0) -- (-0.5,1.5);
\draw[-] (1.5,0) -- (1.5,1.5);
\draw[-] (2.5,0) -- (2.5,1.5);
\draw[-] (3.3,0) -- (3.3,1.5);

\draw[-] (-3,0) -- (-3,-1.5);
\draw[-] (-0.5,0) -- (-0.5,-1.5);
\draw[-] (1.5,0) -- (1.5,-1.5);
\draw[-] (2.5,0) -- (2.5,-1.5);
\draw[-] (3.3,0) -- (3.3,-1.5);

\end{tikzpicture}
\end{center}

Clearly, $(\cV,\varrho_0)$ is complete as a metric space if and only if 
\[
\sum_{k\ge 0}|e_{v_{k,0}, v_{k+1,0}}|=\sum_{k\ge 1} (x_{k+1}-x_k) =\lim_{k\to \infty} x_k =\infty.
\]
On the other hand, 
\[
m(v) = \sum_{u\sim v}|e_{u,v}|\ge 1
\]
for all $v\in\cV = X\times \{-1,0,1\}$, and hence $(\cV,\varrho_m)$ is always complete.  Therefore, the corresponding operator $\bH_0$ is always self-adjoint in view of Corollary \ref{lem:sa02}.
\hfill$\lozenge$
\end{example}

\begin{remark}
The graphs considered in Examples \ref{ex:tree01} and \ref{ex:4.14} belong to a special class of graphs, the so-called trees. More precisely, a path $\cP=\{v_0,v_1,\dots, v_n\}\subset \cV$  is called {\em a cycle} if $v_0=v_n$. A connected graph $\cG_d=(\cV,\cE)$ without cycles is called {\em a tree}. Notice that for any two vertices $u$, $v$ on a tree $\cT = (\cV, \cE)$ there is exactly one path $\cP$ connecting $u$ and $v$ and hence every path on a tree is a geodesic with respect to a path metric. 
\end{remark}

Let us finish this subsection with some sufficient conditions for $\bH_0$ to have nontrivial deficiency indices. Let $\varrho_{1/2}$ be a path metric on $\cV$ generated by the function $p_{1/2}\colon \cE\to (0,\infty)$ defined by 
\be
p_{1/2}\colon e\mapsto\sqrt{|e|}. 
\ee
If $(\cV,\varrho_{1/2})$ is not complete as a metric space, we then denote the metric completion of $(\cV,\varrho_{1/2})$ by $\overline{\cV}$  and $\cV_\infty:=\overline{\cV}\setminus\cV$. By \cite[Lemma 2.1]{dVTHT}, every function $f\colon \cV\to\R$ such that the corresponding quadratic form
\[
\gt_{\cG,0}[f] = \frac{1}{2}\sum_{u,v\in \cV} b(v,u)|f(v) - f(u)|^2 
\]
is finite, is uniformly Lipschitz with respect to the metric $\varrho_{1/2}$ and hence admits a continuation $F$ to $\overline{\cV}$ as a Lipschitz function. Following \cite{dVTHT}, we set $f_\infty:=F\upharpoonright \cV_\infty$. 

\begin{proposition}\label{th:rho1}
If $(\cV,\varrho_{1/2})$ is not complete as a metric space and there is $f\colon \cV\to \R$ such that $\gt_{\cG,0}[f]<\infty$ and $f_\infty\neq 0$, then $\bH_0$ is not a self-adjoint operator.
\end{proposition}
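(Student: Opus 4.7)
The plan is to reduce the question to non-self-adjointness of the discrete Laplacian $\rh_0$ in $\ell^2(\cV;m)$ and then exhibit two distinct self-adjoint extensions of $\rh_0^0$. By Corollary~\ref{cor:kirchhoff}(i), $\bH_0$ is self-adjoint if and only if its discrete counterpart $\rh_0=\overline{\rh_0^0}$ is, so it suffices to show that $\rh_0^0$ admits two distinct self-adjoint extensions. The natural candidates are the Friedrichs extension $\rh_0^F$, whose form domain $\mathcal{D}^F$ is the closure of $C_c(\cV)$ in the form norm $\|\cdot\|_{\ell^2(\cV;m)}^2+\gt_{\cG,0}[\cdot]$, and the Neumann-type extension $\rh_0^N$ associated with the maximal closed form on the domain $\mathcal{D}^{\max}:=\{g\in\ell^2(\cV;m)\colon \gt_{\cG,0}[g]<\infty\}$ (closedness follows from Fatou's lemma applied to the $\ell^2$-pointwise limit). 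It thus suffices to produce an element of $\mathcal{D}^{\max}\setminus\mathcal{D}^F$.

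The key step is that every $g\in\mathcal{D}^F$ must satisfy $g_\infty\equiv 0$ on $\cV_\infty$. For $g\in C_c(\cV)$ this is immediate: any Cauchy sequence $(v_k)\subset\cV$ converging to $x\in\cV_\infty$ in $\varrho_{1/2}$ must eventually leave the finite support of $g$, so $g_\infty(x)=0$. To extend this property to the form-norm closure, I would invoke the Lipschitz bound from \cite[Lemma 2.1]{dVTHT} cited just before the statement: every $h$ with $\gt_{\cG,0}[h]<\infty$ is uniformly $\sqrt{2\gt_{\cG,0}[h]}$-Lipschitz with respect to $\varrho_{1/2}$. If $g_n\to g$ in the form norm with $g_n\in C_c(\cV)$, then $\gt_{\cG,0}[g_n-g]\to 0$ and, along a subsequence, $g_n(v)\to g(v)$ pointwise (since $m(v)>0$). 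For any $x\in\cV_\infty$ and $v\in\cV$ the Lipschitz bound applied to $g_n-g$ gives
\[
|\overline{g_n-g}(x)|\le |g_n(v)-g(v)|+\sqrt{2\gt_{\cG,0}[g_n-g]}\,\varrho_{1/2}(x,v),
\]
which tends to $0$ as $n\to\infty$ (for fixed $v$ close to $x$). Since $(g_n)_\infty(x)=0$, this forces $g_\infty(x)=0$.

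It remains to modify the given $f$ into an element $\tilde f\in\mathcal{D}^{\max}$ with $\tilde f_\infty\neq 0$. Fix $x_*\in\cV_\infty$ with $f_\infty(x_*)\neq 0$ and choose a $\varrho_{1/2}$-Lipschitz cutoff $\chi\colon\overline{\cV}\to[0,1]$ that equals $1$ in a small $\varrho_{1/2}$-neighbourhood of $x_*$ and vanishes outside a slightly larger one; set $\tilde f:=\chi f$. The Leibniz rule for the discrete Dirichlet form, together with the boundedness of $f$ on a bounded $\varrho_{1/2}$-ball (via its Lipschitz property), yields $\gt_{\cG,0}[\tilde f]<\infty$, while the finiteness of the $m$-measure of bounded $\varrho_{1/2}$-balls near $x_*$ gives $\tilde f\in\ell^2(\cV;m)$. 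By continuity of the Lipschitz extension, $\tilde f_\infty(x_*)=f_\infty(x_*)\neq 0$, so $\tilde f\in\mathcal{D}^{\max}\setminus\mathcal{D}^F$. I expect the main obstacle to be precisely this truncation step, i.e.\ choosing $\chi$ so that both $\gt_{\cG,0}[\tilde f]$ and $\sum_v m(v)\tilde f(v)^2$ remain finite; granted that, $\rh_0^F\neq\rh_0^N$, so $\rh_0^0$ is not essentially self-adjoint and, by Corollary~\ref{cor:kirchhoff}(i), neither is $\bH_0^0$. The overall strategy mirrors the classical argument that the Laplacian on an incomplete space with non-trivial Dirichlet-finite boundary trace is not essentially self-adjoint.
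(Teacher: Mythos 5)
Your opening reduction to the discrete operator via Corollary \ref{cor:kirchhoff}(i) is exactly the paper's first step; after that, however, the paper simply invokes \cite[Theorem 3.1]{dVTHT} for the discrete statement, whereas you attempt to reprove it by a Friedrichs-versus-Neumann argument. The part of your argument showing that every $g$ in the Friedrichs form domain satisfies $g_\infty\equiv 0$ (via the $\sqrt{2\gt_{\cG,0}[\,\cdot\,]}$-Lipschitz bound) is sound. The fatal problem is the step you yourself flag: producing $\tilde f\in\mathcal{D}^{\max}\subset\ell^2(\cV;m)$ with $\tilde f_\infty\neq 0$. The hypothesis gives only $\gt_{\cG,0}[f]<\infty$, not $f\in\ell^2(\cV;m)$, and both auxiliary claims you rely on --- that bounded $\varrho_{1/2}$-balls near $x_*$ have finite $m$-measure, and that a Lipschitz cutoff $\chi$ yields $\gt_{\cG,0}[\chi f]<\infty$ --- fail in general. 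Concretely, take a path $v_0\sim v_1\sim\cdots$ with $|e_{v_k,v_{k+1}}|=4^{-k-1}$ and attach to each $v_k$ a pendant edge of length $1$ ending at a new vertex $w_k$. Then $(v_k)$ is $\varrho_{1/2}$-Cauchy and nonconvergent, and $f(v_k)=f(w_k)=\sum_{j<k}8^{-j}$ has finite energy and nonzero limit at the boundary point $x_*$, so the hypotheses of the proposition hold; but $m(v_k)\ge 1$, so every $\varrho_{1/2}$-ball around $x_*$ has infinite $m$-measure, and \emph{every} $g\in\ell^2(\cV;m)$ satisfies $g(v_k)\to 0$, hence $g_\infty(x_*)=0$. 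Thus in this example no element of $\mathcal{D}^{\max}$ whatsoever has nonzero boundary trace, so the trace functional cannot separate $\mathcal{D}^F$ from $\mathcal{D}^{\max}$; and indeed your cutoff fails on both counts: if $(\chi f)(w_k)$ stays bounded away from $0$ then $\|\chi f\|_{\ell^2(\cV;m)}=\infty$ because $m(w_k)=1$, while if $(\chi f)(w_k)\to 0$ then $\gt_{\cG,0}[\chi f]\ge\sum_k b(v_k,w_k)|(\chi f)(v_k)-(\chi f)(w_k)|^2=\infty$ because $b(v_k,w_k)=1$ and $(\chi f)(v_k)\to f_\infty(x_*)\ne 0$.

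So the gap is not a removable technicality: the separating element your strategy needs may simply not exist under the stated hypotheses, and non-self-adjointness must then be detected by a mechanism other than the boundary trace on the maximal form domain. This is presumably why the paper does not attempt your route and instead quotes \cite[Theorem 3.1]{dVTHT} together with Corollary \ref{cor:kirchhoff}(i) verbatim; a self-contained proof would have to reproduce the argument of that reference rather than the Friedrichs/Neumann dichotomy based on traces of $\ell^2$ finite-energy functions.
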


\begin{proof}
Follows from \cite[Theorem 3.1]{dVTHT} and Corollary \ref{cor:kirchhoff}(i).
\end{proof}

A few remarks are in order.

\begin{remark}
\begin{itemize}
\item[(i)]
The question on deficiency indices of $\rh_0$ in this case was left in \cite{dVTHT} as an open problem. 
\item[(ii)] 
Clearly, Proposition \ref{th:rho1} provides only a sufficient condition for $\bH_0$ to have nontrivial deficiency indices.
\end{itemize}
\end{remark}

\begin{example}\label{ex:4.14cntd}
Let us slightly modify the metric graph considered in Example \ref{ex:4.14} by shrinking the vertical edges. 
It is not difficult to show (see, e.g., \cite{car00,car11}) that the corresponding operator $\bH_0$ is not self-adjoint if the graph $\cG$ has finite total length,
\be\label{eq:finitevolume}
\sum_{e\in\cE}|e| <\infty.
\ee
On the other hand, the latter is further equivalent to the fact that $(\cV;\varrho_m)$ is not complete as a metric space. Thus, Theorem \ref{prop:gaffney} provides a self-adjointness criterion in this case. 

Let us also mention that we expect that the deficiency indices of the operator $\bH_0$ in the case \eqref{eq:finitevolume} are equal to one. \hfill$\lozenge$
\end{example}

\begin{remark}
Taking into account the above example, it is a rather natural guess that Theorem \ref{prop:gaffney} provides a self-adjointness criterion not only in the special case considered in Example \ref{ex:4.14cntd} but also for arbitrary graphs. However, for radially symmetric trees, the operator $\bH_0$ is not self-adjoint exactly when the corresponding tree has finite total length, that is, \eqref{eq:finitevolume} holds true (see \cite[\S 3.4]{sol02} and also \cite{car00}). Moreover, it is easy to check that in this case \eqref{eq:finitevolume} is not equivalent to non-completeness of $(\cV;\varrho_m)$.
\end{remark}
\subsection{Uniform positivity and the essential spectrum of $\bH_0$}\label{ss:4.3} 

For any vertex set  $X\subset \cV$, {\em the boundary } $\partial X$ of $X$ is defined by 
\be
\partial X:= \{(u,v)\in X\times (\cV\setminus X)\colon u\sim v\}.
\ee
For every subset $\wt{\cV}\subseteq \cV$ one defines {\em the isoperimetric constant} 
\be\label{eq:isoper}
C(\wt{\cV}) := \inf_{X\subset \wt{\cV}} \frac{\#(\partial X)}{m(X)},
\ee
where 
\begin{align}
\#(\partial X) & = \sum_{(u,v)\in\partial X} 1, & m(X) & = \sum_{v\in X} m(v) = \sum_{v\in X}\sum_{e\in\cE_v} |e|.
\end{align}
Moreover, we need {\em the isoperimetric constant at infinity}
\be\label{eq:isoperess}
C_{\ess}(\cV) := \sup_{X\subset \cV\ \text{is finite}} C(\cV\setminus X).
\ee

\begin{theorem}\label{th:H0positive}
Suppose that the operator $\bH_0$ is self-adjoint. Then:
\begin{itemize}
\item[(i)] $\bH_0$ is uniformly positive whenever $C(\cV)>0$.
\item[(ii)]  $\inf\sigma_{\ess}(\bH_0)>0$  if $C_{\ess}(\cV)>0$. 
\item[(iii)] The spectrum of $\bH_0$ is purely discrete if the number $\#\{e\in\cE\colon |e|>\varepsilon\}$ is finite for every $\varepsilon>0$ and $C_{\ess}(\cV)=\infty$.
\end{itemize}
\end{theorem}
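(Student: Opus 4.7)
The natural strategy is to translate each of the three assertions into the corresponding statement about the discrete Laplacian $\rh_0$ via Corollary~\ref{cor:kirchhoff}, and then invoke Cheeger-type isoperimetric estimates available for weighted discrete Laplacians (e.g.\ \cite{bkw15, kel15, kl12}). The standard discrete Cheeger inequality is stated for the \emph{weighted} isoperimetric constant
\[
\alpha(\cV):=\inf_{X\subset \cV\ \text{finite}}\frac{\sum_{(u,v)\in\partial X} b(u,v)}{m(X)},\qquad \alpha_{\ess}(\cV):=\sup_{Y\subset\cV\ \text{finite}}\alpha(\cV\setminus Y),
\]
whereas the constants $C(\cV)$, $C_{\ess}(\cV)$ in \eqref{eq:isoper}, \eqref{eq:isoperess} use the unweighted edge count $\#(\partial X)$. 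The first step is therefore to compare the two. Since $b(u,v)=|e_{u,v}|^{-1}\ge (d^*)^{-1}$ with $d^*:=\sup_{e\in\cE}|e|<\infty$ by Hypothesis~\ref{hyp:graph02}, one has $\sum_{(u,v)\in\partial X}b(u,v)\ge \#(\partial X)/d^*$, hence
\[
\alpha(\cV)\ge \frac{C(\cV)}{d^*},\qquad \alpha_{\ess}(\cV)\ge \frac{C_{\ess}(\cV)}{d^*}.
\]

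For (i): if $C(\cV)>0$ then $\alpha(\cV)>0$, and the discrete Cheeger inequality (e.g.\ \cite[Theorem~3.1]{bkw15}) yields $\rh_0\ge \tfrac{1}{2}\alpha(\cV)^2 \cdot \rI$ on $\ell^2(\cV;m)$; in particular $\rh_0$ is positive definite. Corollary~\ref{cor:kirchhoff}(ii) then forces $\bH_0$ to be positive definite as well, which is the same as uniform positivity.

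For (ii): the analogous Cheeger estimate at infinity gives $\inf\sigma_{\ess}(\rh_0)\ge \tfrac12\alpha_{\ess}(\cV)^2>0$ whenever $C_{\ess}(\cV)>0$, and Corollary~\ref{cor:kirchhoff}(iii) then transfers this to $\inf\sigma_{\ess}(\bH_0)>0$.

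For (iii): by Corollary~\ref{cor:kirchhoff}(iv) the spectrum of $\bH_0$ is purely discrete iff both $\#\{e\in\cE\colon |e|>\varepsilon\}<\infty$ for every $\varepsilon>0$ and the spectrum of $\rh_0$ is purely discrete. The first condition is part of the hypothesis. For the second, observe that $C_{\ess}(\cV)=\infty$ forces $\alpha_{\ess}(\cV)=\infty$ via the comparison above, so applying the Cheeger estimate on each exterior region $\cV\setminus Y$ (with $Y$ finite) and taking $Y$ larger and larger shows $\inf\sigma_{\ess}(\rh_0)=\infty$; since $\rh_0\ge 0$, this means $\sigma_{\ess}(\rh_0)=\emptyset$, i.e.\ the spectrum of $\rh_0$ is purely discrete.

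The main obstacle is aligning the author's unweighted isoperimetric quantities with the weighted Cheeger constants appearing in the discrete-graph literature; this is handled cleanly by the pointwise bound $b\ge 1/d^*$ granted by Hypothesis~\ref{hyp:graph02}. Once this comparison is in place, all three items follow by straightforward application of the translation dictionary in Corollary~\ref{cor:kirchhoff} to the corresponding discrete results.
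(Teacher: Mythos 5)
There is a genuine gap, and it sits exactly at the point you flag as ``the main obstacle''. The Cheeger inequality you invoke, $\inf\sigma(\rh_0)\ge\tfrac12\alpha(\cV)^2$ with $\alpha(\cV)=\inf_X \sum_{(u,v)\in\partial X}b(u,v)/m(X)$, is not what \cite[Theorem~3.1]{bkw15} says and is false in general for weighted Laplacians whose measure $m$ is unrelated to $b$. The whole point of \cite{bkw15} is that for (possibly unbounded) operators of this type the boundary must be measured as ${\rm Area}(\partial X)=\sum_{(u,v)\in\partial X}b(u,v)\varrho(u,v)$ for an \emph{intrinsic} pseudo metric $\varrho$, i.e.\ one satisfying \eqref{eq:intrinsicdef}. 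Dropping the factor $\varrho(u,v)$, as your $\alpha(\cV)$ does, amounts to choosing $\varrho\equiv 1$ on edges, and that choice is intrinsic only when $\Deg(v)\le 1$ for all $v$ --- which fails precisely in the regime of short edges that this paper is about. A concrete counterexample to your inequality: take the $3$-regular tree with all edge lengths equal to $\ell$. Then $b\equiv\ell^{-1}$, $m\equiv 3\ell$, so $\alpha(\cV)\ge\tfrac{1}{3\ell^2}$ and $\tfrac12\alpha(\cV)^2\ge\tfrac{1}{18\ell^4}$, while $\rh_0=\ell^{-2}\Delta_{\rm norm}$ gives $\inf\sigma(\rh_0)=\ell^{-2}\bigl(1-\tfrac{2\sqrt2}{3}\bigr)\approx 0.057\,\ell^{-2}$, which is smaller than $\tfrac{1}{18\ell^4}$ for all $\ell<1$. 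So the quantitative estimate your argument rests on cannot hold, and items (i)--(iii) do not follow as written.

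The fix (and the paper's actual route) is shorter than your workaround: by Lemma~\ref{lem:varrho0} the natural path metric $\varrho_0$ with $\varrho_0(u,v)=|e_{u,v}|$ for $u\sim v$ \emph{is} intrinsic, and for it $b(u,v)\varrho_0(u,v)=|e_{u,v}|^{-1}\,|e_{u,v}|=1$, so ${\rm Area}(\partial X)=\#(\partial X)$ exactly. Hence the isoperimetric constants $C(\cV)$ and $C_{\ess}(\cV)$ of \eqref{eq:isoper}--\eqref{eq:isoperess} coincide with the Cheeger constants of \cite[Theorems~3.1 and~3.3]{bkw15}, no comparison factor $d^*$ is needed, and one gets $\inf\sigma(\rh_0)\ge\tfrac12 C(\cV)^2$ and $\inf\sigma_{\ess}(\rh_0)\ge\tfrac12 C_{\ess}(\cV)^2$ directly; combined with Corollary~\ref{cor:kirchhoff}(ii)--(iv) (which you use correctly) this yields all three items. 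Your translation step via Corollary~\ref{cor:kirchhoff} and the logic of part (iii) ($C_{\ess}=\infty$ forces $\sigma_{\ess}(\rh_0)=\emptyset$) are fine once the correct Cheeger estimate is in place.
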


\begin{proof}
Let $\varrho_0$ be a natural path metric on $\cV$ (see Lemma \ref{lem:varrho0}). Noting that $\varrho_0$ is an intrinsic metric on $\cV$, let us apply the Cheeger estimates from \cite{bkw15} to the discrete Laplacian $\rh_0$ given by \eqref{eq:h0}, \eqref{eq:mb}. First of all (see \cite[Section 2.3]{bkw15}), observe that the weighted area with respect to $\varrho_0$ is given by 
\[
{\rm Area}(\partial X) = \sum_{(u,v)\in\partial X} b(u,v)\varrho_0(u,v) = \sum_{(u,v)\in\partial X} \frac{1}{|e_{u,v}|}|e_{u,v}| = \sum_{(u,v)\in\partial X} 1 = \#(\partial X).
\]
Hence in this case the Cheeger estimate for discrete Laplacians (see Theorems 3.1 and 3.3 in \cite{bkw15}) implies the following estimates
\begin{align}\label{eq:Cheeger}
\inf \sigma(\rh_0) & \ge \frac{1}{2}C(\cV)^2, & \inf\sigma_{\ess}(\rh_0) & \ge \frac{1}{2} C_{\ess}(\cV)^2.
\end{align}
Combining these estimates with Corollary \ref{cor:kirchhoff}(ii)--(iii), we prove  (i) and (ii), respectively. 

Applying \cite[Theorem 3.3]{bkw15} once again, we see that the spectrum of $\rh_0$ is purely discrete if $C_{\ess}(\cV)=\infty$. Corollary \ref{cor:kirchhoff}(iv) finishes the proof of (iii). 
\end{proof}

Let $B_r(u)$ be a distance ball with respect to the natural path metric $\varrho_0$. Following \cite{hkw13} (see also \cite{kel15}), we define
\be
\mu:=\liminf_{r\to \infty} \frac{1}{r}\log m(B_r(v))
\ee
for a fixed $v\in\cV$, and 
\be
\underline{\mu}:=\liminf_{r\to \infty} \frac{1}{r}\inf_{v\in\cV}\log \frac{m(B_r(v))}{m(B_1(v))}.
\ee
Notice that $\mu$ does not depend on $v\in\cV$ if $\cV=\cup_{r\ge0} B_r(v)$.

\begin{theorem}\label{th:H0positive2}
Let $(\cV,\varrho_0)$ be complete as a metric space. Then:
\begin{itemize}
\item[(i)] $\inf\sigma(\bH_0)=0$ if $\underline{\mu}=0$. 
\end{itemize}
If in addition $m(\cV)=\infty$, then
\begin{itemize}
\item[(ii)] $\inf\sigma_{\ess}(\bH_0)=0$ if ${\mu}=0$. 
\item[(iii)] The spectrum of $\bH_0$ is not discrete if $\mu<\infty$. 
\end{itemize}
\end{theorem}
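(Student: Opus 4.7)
The plan is to reduce every assertion to an analogous statement about the discrete Laplacian $\rh_0$ in $\ell^2(\cV;m)$ and then invoke the volume growth (Brooks type) estimates of \cite{hkw13}. The bridge between $\bH_0$ and $\rh_0$ is supplied by Corollary \ref{cor:kirchhoff}; the key geometric input is that by Lemma \ref{lem:varrho0} the natural path metric $\varrho_0$ is intrinsic with respect to $\rh_0$, and moreover the weighted volume used in \eqref{eq:h0}--\eqref{eq:mb} is precisely the $m$ introduced by \eqref{eq:m_weight}, so $m(B_r(v))$ in our formulation coincides with the intrinsic weighted volume used in \cite{hkw13}.

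First I would note that completeness of $(\cV,\varrho_0)$ yields self-adjointness. Indeed, Corollary \ref{cor:gaffney} shows $\bH_0$ is self-adjoint, and then Corollary \ref{cor:kirchhoff}(i) gives self-adjointness of $\rh_0$ as an operator in $\ell^2(\cV;m)$. Thus the spectral notions $\inf\sigma$, $\inf\sigma_{\ess}$ and discreteness of spectrum are all well defined for $\rh_0$, and the natural path metric $\varrho_0$ is a complete intrinsic metric by Lemma \ref{lem:varrho0} together with the Hopf--Rinow argument used in Corollary \ref{cor:gaffney}. This is the setting in which the volume growth estimates for discrete Laplacians are available.

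Next I would invoke the Brooks type bounds of \cite{hkw13}, which in the present notation give
\[
\inf\sigma(\rh_0)\le \tfrac{1}{4}\,\underline{\mu}^{\,2},\qquad
\inf\sigma_{\ess}(\rh_0)\le \tfrac{1}{4}\,\mu^{2}\quad\text{whenever }m(\cV)=\infty.
\]
For (i): if $\underline\mu=0$ the first bound forces $\inf\sigma(\rh_0)=0$, so $\rh_0$ is not positive definite; by Corollary \ref{cor:kirchhoff}(ii) neither is $\bH_0$, and since $\bH_0\ge 0$ this means $\inf\sigma(\bH_0)=0$. For (ii): under $m(\cV)=\infty$ the assumption $\mu=0$ gives $\inf\sigma_{\ess}(\rh_0)=0$; then Corollary \ref{cor:kirchhoff}(iii) (applied contrapositively) yields $\inf\sigma_{\ess}(\bH_0)=0$. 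For (iii): if $\mu<\infty$ then $\inf\sigma_{\ess}(\rh_0)\le \mu^2/4<\infty$, so in particular $\sigma_{\ess}(\rh_0)\neq\emptyset$ and the spectrum of $\rh_0$ is not discrete; Corollary \ref{cor:kirchhoff}(iv) then prevents the spectrum of $\bH_0$ from being discrete, regardless of the edge-length condition.

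The only non-routine point is the bookkeeping behind the volume growth inequalities: one has to make sure that the definitions of the distance balls $B_r(v)$, of the weighted volume $m(B_r(v))$, and of the intrinsic metric used in \cite{hkw13} coincide with those adopted here. This is the place where Lemma \ref{lem:varrho0} and the identity $m(X)=\sum_{v\in X}\sum_{e\in\cE_v}|e|$ do the work; once that identification is made, the three assertions follow by a direct application of Corollary \ref{cor:kirchhoff}(ii)--(iv) combined with the three Brooks type bounds. I expect no further obstacle.
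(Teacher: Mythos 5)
Your proposal is correct and follows essentially the same route as the paper: self-adjointness via Corollary \ref{cor:gaffney}, the volume growth bounds of \cite{hkw13} for $\rh_0$, and then the transfer to $\bH_0$ via Corollary \ref{cor:kirchhoff}(ii)--(iv). The only discrepancy is the constant in the Brooks-type bounds (the paper quotes $\tfrac18\underline{\mu}^2$ and $\tfrac18\mu^2$ rather than $\tfrac14$), which is immaterial since only the vanishing or finiteness of the bound is used.
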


\begin{proof}
By Corollary \ref{cor:gaffney}, the operator $\bH_0$ is self-adjoint. The proof follows from the growth volume estimates on the spectrum of $\rh_0$. More precisely, the following bounds were established in \cite{hkw13} (see also \cite{fo,kel15}):
\begin{align*}
\inf \sigma(\rh_0) & \le \frac{1}{8}\underline{\mu}^2, & \inf \sigma_{\ess}(\rh_0) & \le \frac{1}{8}{\mu}^2.
\end{align*}
It remains to apply Corollary \ref{cor:kirchhoff}(ii)-(iv).
\end{proof}

We finish this section with a remark.

\begin{remark}\label{rem:Cheeger}
Connections between $\inf \sigma(\bH_0)$ and $\inf \sigma(\rh_0)$ and also between $\inf\sigma_{\ess}(\bH_0)$ and $\inf\sigma_{\ess}(\rh_0)$ by means of Theorem \ref{th:lsbTheta} and Theorem \ref{th:ess-impl-ess} are rather complicated since they  involve the corresponding Weyl function, which in our case has the form \eqref{eq:tiM}. In particular, it would be a rather complicated task to use these connections and then apply the Cheeger-type bounds for $\rh_0$ to estimate $\inf \sigma(\bH_0)$ and $\inf\sigma_{\ess}(\bH_0)$. For example, the following upper estimate, which easily follows from \eqref{eq:H0},
\[
\inf \sigma(\bH_0) \le  \inf\sigma(\bH^F) = \frac{\pi^2}{\sup_{e\in\cE}|e|^2}
\]
seems to be unrelated to $\inf \sigma(\rh_0)$. 
\end{remark}

\section{Spectral properties of quantum graphs with $\delta$-couplings}\label{sec:delta}

 In this section we are going to investigate spectral properties of the Hamiltonian $\bH_\alpha$ with $\delta$-couplings \eqref{eq:bcalpha} at the vertices. Namely, let $\alpha\colon\cV\to \R$ and the operator $\bH_\gA$ be defined in $L^2(\cG)$ as the closure of \eqref{eq:Halpha}. By Theorem \ref{th:main}, its spectral properties correlate with the corresponding properties of the discrete operator $\rh_\gA$ defined in $\ell^2(\cV;m)$ by \eqref{eq:h_a}.  In this section we shall always assume Hypotheses \ref{hyp:graph01} and \ref{hyp:graph02}.

\subsection{Self-adjointness and lower semiboundedness}\label{ss:5.1}

We begin with the study of the self-adjointness of the operator $\bH_\gA$.  
Our first result can be seen as a straightforward extension of Theorem \ref{th:saH0_1}.  

\begin{lemma}\label{lem:sa01}
If the weighted degree function $\Deg$ defined by \eqref{eq:wgtdeg} is bounded on $\cV$, that is, \eqref{eq:4.01} is satisfied, then the operator $\bH_\alpha$ is self-adjoint for any $\alpha\colon \cV\to \R$. Moreover, in this case the operator $\bH_\gA$ is bounded from below if and only if
\be\label{eq:Asemibd}
\inf_{v\in\cV} \frac{\gA(v)}{m(v)} >-\infty.
\ee
\end{lemma}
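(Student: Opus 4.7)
The plan is to reduce everything to the discrete Laplacian $\rh_\gA$ by invoking Theorem \ref{th:main}, specifically parts (i) and (ii), which translate self-adjointness and lower semiboundedness of $\bH_\gA$ into the corresponding properties of $\rh_\gA$ acting in $\ell^2(\cV;m)$. So it suffices to establish both claims at the discrete level.

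The key observation is that the expression \eqref{eq:h_a} admits the decomposition $\tilde\tau_{\cG,\gA} = \tilde\tau_{\cG,0} + V$, where $V$ is the multiplication operator in $\ell^2(\cV;m)$ induced by the function $v\mapsto \gA(v)/m(v)$ defined on its maximal domain $\dom(V)=\{f\in\ell^2(\cV;m)\colon (\gA/m)f\in\ell^2(\cV;m)\}$. The operator $V$ is self-adjoint, and $C_c(\cV)$ is a core for $V$ by the standard dominated convergence/truncation argument for multiplication operators. Under the hypothesis $\sup_{v\in\cV}\Deg(v)<\infty$, Davies' estimate \eqref{eq:estDavies} (see also Theorem \ref{th:saH0_1}) yields that $\rh_0$ is a bounded self-adjoint operator on $\ell^2(\cV;m)$. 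Consequently, $\rh_0 + V$ is self-adjoint on $\dom(V)$ by the Kato--Rellich theorem (with relative bound zero), and $C_c(\cV)$ remains a core. Hence the closure $\rh_\gA = \overline{\rh_\gA^0}$ of the pre-minimal operator defined on $C_c(\cV)$ coincides with $\rh_0 + V$, which is self-adjoint. Theorem \ref{th:main}(i) then gives self-adjointness of $\bH_\gA$.

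For lower semiboundedness, note that $\rh_0 \ge 0$ (this is visible from the quadratic form \eqref{eq:hAform} with $\gA\equiv 0$) and $\rh_0$ is bounded, so
\[
\rh_\gA \;=\; \rh_0 + V \;\ge\; V \;\ge\; \big(\inf_{v\in\cV} \gA(v)/m(v)\big)\cdot\rI,
\]
which shows that condition \eqref{eq:Asemibd} is sufficient for $\rh_\gA$ to be bounded from below. Conversely, if $\rh_\gA$ is lower semibounded, then $V = \rh_\gA - \rh_0$ is a difference of a lower semibounded operator and a bounded self-adjoint one, hence lower semibounded; since $V$ is a multiplication operator, this forces $\inf_{v\in\cV}\gA(v)/m(v) > -\infty$. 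Applying Theorem \ref{th:main}(ii), the same equivalence holds for $\bH_\gA$.

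The only nonroutine step is the identification of the closure of $\rh_\gA^0$ with the operator sum $\rh_0 + V$ on $\dom(V)$; this is the main (mild) technical obstacle, but it follows from the boundedness of $\rh_0$ combined with the fact that $C_c(\cV)$ is a core for the multiplication operator $V$. Everything else is a direct application of already established facts.
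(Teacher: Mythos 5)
Your proposal is correct and follows essentially the same route as the paper: reduce to the discrete operator via Theorem \ref{th:main}(i)--(ii), write $\rh_\gA=\rh_0+A$ with $A$ the maximally defined multiplication by $\gA/m$, use Davies' bound \eqref{eq:estDavies} to see that $\rh_0$ is bounded, and conclude by stability of self-adjointness and lower semiboundedness under bounded perturbations. Your extra care in identifying $\overline{\rh_\gA^0}$ with the operator sum $\rh_0+A$ on $\dom(A)$ (via $C_c(\cV)$ being a core for $A$ and the equivalence of graph norms after a bounded perturbation) correctly fills in a step the paper states without comment.
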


\begin{proof}
The operator of multiplication $A$ defined in $\ell^2(\cV,m)$ on the maximal domain $\dom(A) = \ell^2(\cV;\frac{\alpha^2}{m})$ by
\be\label{eq:Amultipl}
\begin{array}{cccc}
A\colon & \dom(A) & \to &\ell^2(\cV;m) \\
 & f & \mapsto & \frac{\alpha}{m}f
\end{array}
\ee
is clearly self-adjoint. If $\Deg$ is bounded on $\cV$, then the operator $\rh_0$ is bounded and self-adjoint in $\ell^2(\cV;m)$ (see \eqref{eq:estDavies}). It remains to note that $\rh_\alpha = \rh_0 + A$ and hence $\rh_\alpha$ is a self-adjoint operator since the self-adjointness is stable under bounded perturbations. Moreover, $\rh_\alpha$ is bounded from below if and only if so is $A$. The latter is clearly equivalent to \eqref{eq:Asemibd}. Theorem \ref{th:main}(i)-(ii) completes the proof.
\end{proof}

As an immediate corollary we arrive at the following result. 
 
\begin{corollary}\label{cor:sa02}
If $\inf_{e\in\cE} |e|>0$, then the operator $\bH_\alpha$ is self-adjont for any $\alpha\colon \cV\to \R$. Moreover, $\bH_\alpha$ is bounded from below if and only if $\alpha$ satisfies \eqref{eq:Asemibd}.
\end{corollary}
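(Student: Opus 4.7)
The plan is to reduce this corollary to Lemma \ref{lem:sa01} by verifying that the hypothesis $\inf_{e\in\cE}|e|>0$ forces the weighted degree function $\Deg$ to be bounded on $\cV$. This is essentially the same calculation used in the proof of Corollary \ref{cor:sa01}, only now packaged so as to trigger both conclusions of Lemma \ref{lem:sa01}.

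First, set $d_\ast := \inf_{e\in\cE}|e|$, which is strictly positive by assumption, and recall that by Hypothesis \ref{hyp:graph02} we have $d^\ast := \sup_{e\in\cE}|e|<\infty$. For each $v\in\cV$ the finiteness of $\deg(v)$ (again by Hypothesis \ref{hyp:graph02}) lets us estimate numerator and denominator of \eqref{eq:wgtdeg} edgewise:
\[
\sum_{e\in\cE_v}|e|^{-1} \le \deg(v)\,d_\ast^{-1},\qquad \sum_{e\in\cE_v}|e|\ge \deg(v)\,d_\ast.
\]
Dividing, the $\deg(v)$ factors cancel and we obtain
\[
\Deg(v)=\frac{\sum_{e\in\cE_v}|e|^{-1}}{\sum_{e\in\cE_v}|e|}\le \frac{1}{d_\ast^{2}},
\]
uniformly in $v\in\cV$. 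Hence $\sup_{v\in\cV}\Deg(v)<\infty$, so condition \eqref{eq:4.01} of Lemma \ref{lem:sa01} is met.

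Applying Lemma \ref{lem:sa01} then yields directly that $\bH_\alpha$ is self-adjoint for every $\alpha\colon\cV\to\R$, and that $\bH_\alpha$ is bounded from below if and only if $\inf_{v\in\cV}\alpha(v)/m(v)>-\infty$, which is exactly \eqref{eq:Asemibd}. There is no substantive obstacle here: the whole content of the corollary is the one-line majorization of $\Deg$ by $d_\ast^{-2}$, after which Lemma \ref{lem:sa01} does all the work. The only thing worth emphasising is that we do \emph{not} need any bound on $\deg$ itself, since the combinatorial degree cancels in the ratio defining $\Deg$; this is why the conclusion holds even for locally finite graphs with unbounded vertex degrees.
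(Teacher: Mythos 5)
Your proof is correct and coincides with the paper's own argument: the paper likewise bounds $\Deg(v)\le (\inf_{e\in\cE}|e|)^{-2}$ via the same edgewise majorization (with the $\deg(v)$ factors cancelling) and then invokes Lemma \ref{lem:sa01} for both conclusions. Nothing is missing.
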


\begin{proof}
As in the proof of Corollary \ref{cor:sa01}, we get 
\[
 \sup_{v\in\cV} \Deg(v) \le \frac{1}{(\inf_{e\in\cE} |e|)^2}<\infty.
\]
It remains to apply Lemma \ref{lem:sa01}.
\end{proof}

\begin{remark}
A few remarks are in order.
\begin{itemize}
\item[(i)]
Using the form approach, the self-adjointness claim in Corollary \ref{cor:sa02} was proved in \cite[Section I.4.5]{bk13} under the additional assumption that $\frac{\alpha}{\deg}\colon\cV\to\R$ is bounded from below,  
\be\label{eq:BK}
\inf_{v\in\cV}\frac{\alpha(v)}{\deg(v)}>-\infty.
\ee
If $0<\inf_{e\in\cE}|e|\le \sup_{e\in\cE}|e|<\infty$, then it is easy to see that \eqref{eq:BK} is equivalent to \eqref{eq:Asemibd}.
\item[(ii)] 
Let us also mention that the graphs constructed in Examples \ref{ex:tree01} and \ref{ex:4.14} do not satisfy the condition of Corollary \ref{cor:sa02}, however, they satisfy \eqref{eq:4.01} and hence, by Lemma \ref{lem:sa01}, the corresponding Hamiltonian $H_\alpha$ is self-adjoint for any $\alpha\colon \cV\to \R$.
\end{itemize}
\end{remark}

The next result allows us to replace the boundedness assumption on the weighted degree by the local boundedness, however, now we need to assume some semiboundedness on $\alpha$. We begin with the following result.

\begin{proposition}\label{prop:sa03}
If the operator $\bH_0$ with Kirchhoff vertex conditions is self-adjoint in $L^2(\cG)$, then the operator $\bH_\gA$ with $\delta$-couplings on $\cV$ is self-adjoint whenever the function $\gA\colon \cV\to \R$ satisfies \eqref{eq:Asemibd}.
\end{proposition}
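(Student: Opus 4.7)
The approach is to convert everything to the discrete side and then apply a known preservation-of-essential-self-adjointness result. By Theorem~\ref{th:main}(i), self-adjointness of $\bH_\gA$ in $L^2(\cG)$ is equivalent to self-adjointness of the discrete Laplacian $\rh_\gA$ in $\ell^2(\cV;m)$. Applied with $\gA\equiv 0$, the same theorem translates the hypothesis ``$\bH_0$ is self-adjoint'' into ``$\rh_0^0$ is essentially self-adjoint on $C_c(\cV)$''. Thus the problem reduces to showing essential self-adjointness of $\rh_\gA^0$ on $C_c(\cV)$, whereupon a final application of Theorem~\ref{th:main}(i) will close the argument.

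Next I would shift away the lower bound. Setting $q:=\gA/m$, the hypothesis \eqref{eq:Asemibd} gives $C:=-\inf_{v\in\cV}q(v)<\infty$, and multiplication by the constant $C$ defines a bounded self-adjoint operator on $\ell^2(\cV;m)$. A direct computation from \eqref{eq:h_a} shows
\[
\tau_{\cG,\gA+Cm}f(v)=\tau_{\cG,\gA}f(v)+Cf(v),\qquad v\in\cV,
\]
so $\rh_{\gA+Cm}=\rh_\gA+C\,\rI_{\ell^2(\cV;m)}$, and self-adjointness of $\rh_\gA$ is equivalent to that of $\rh_{\gA+Cm}$. Since $\gA+Cm\ge 0$, I may henceforth assume without loss of generality that $\gA\ge 0$. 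Under this assumption $\rh_\gA^0$ is symmetric and nonnegative on $C_c(\cV)$, and by \eqref{eq:hAform} its quadratic form is
\[
\gt_{\rh_\gA}[f]=\frac{1}{2}\sum_{u,v\in\cV}b(u,v)|f(u)-f(v)|^2+\sum_{v\in\cV}\gA(v)|f(v)|^2\ge 0.
\]

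To finish, I would invoke the corresponding preservation result from the Keller--Lenz theory of discrete graph Laplacians (see \cite{kl10}, \cite{kel15}): if $\rh_0^0$ is essentially self-adjoint on $C_c(\cV)$ in $\ell^2(\cV;m)$ and the potential $q\ge 0$, then $\rh_0^0+M_q$ is essentially self-adjoint on $C_c(\cV)$. This is the natural discrete analog of the classical Kato/Wüst statement that a nonnegative potential perturbation of an essentially self-adjoint nonnegative operator remains essentially self-adjoint on the same core. Applying this with $q=\gA/m\ge 0$, the operator $\rh_\gA$ is self-adjoint, and Theorem~\ref{th:main}(i) then delivers self-adjointness of $\bH_\gA$.

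The main obstacle is precisely the last step, i.e.\ verifying that the Keller--Lenz preservation theorem applies verbatim to the weighted Laplacian $\rh_0$ on $\ell^2(\cV;m)$ with our specific weights $m$ and $b$ given by \eqref{eq:mb}. Should one prefer a self-contained route, a standard alternative is a form-theoretic argument: one first closes $\gt_{\rh_\gA}$ to obtain a nonnegative self-adjoint extension (a Friedrichs-type realization) that majorizes $\overline{\rh_\gA^0}$, and then checks that $C_c(\cV)$ remains a form core after the perturbation, using that the potential term in $\gt_{\rh_\gA}$ is controlled by truncating $\gA$ on finite subsets of $\cV$ and exploiting that $C_c(\cV)$ is already a form core for $\rh_0$ by hypothesis.
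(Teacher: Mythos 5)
Your reduction to the discrete side via Theorem \ref{th:main}(i) (equivalently Corollary \ref{cor:kirchhoff}(i)) and the normalization to $\gA\ge 0$ by adding a constant are both correct and match the paper's point of departure. The problem is the decisive step. You invoke a ``preservation result from the Keller--Lenz theory'': if $\rh_0^0$ is essentially self-adjoint on $C_c(\cV)$ and $q\ge 0$, then $\rh_0^0+M_q$ is essentially self-adjoint on $C_c(\cV)$. No theorem of this form is stated in \cite{kl10} or \cite{kel15}, and the abstract principle you offer as its model is not a theorem: a nonnegative symmetric perturbation of an essentially self-adjoint nonnegative operator on a common core need not remain essentially self-adjoint in general (Kato's theorem for $-\Delta+V$ rests on Kato's distributional inequality, a feature of that particular pair, and W\"ust's theorem requires relative bound at most one, which the multiplication operator does not have relative to $\rh_0$ here). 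You flag this yourself as ``the main obstacle,'' and it is: as written the argument is incomplete. Your fallback sketch does not repair it, because closing the form $\gt_{\rh_\gA}$ and checking that $C_c(\cV)$ is a \emph{form} core only identifies the Friedrichs-type extension of $\rh_\gA^0$; it does not show that $\overline{\rh_\gA^0}$ itself is self-adjoint, and the latter is what Theorem \ref{th:main}(i) requires, since $\rh_\gA$ is by definition the closure of the pre-minimal operator on $C_c(\cV)$.

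For comparison, the paper closes this step by a different device: it regards $\rh_0$ (self-adjoint and nonnegative by the hypothesis together with Corollary \ref{cor:kirchhoff}(i)) and the maximal operator $A$ of multiplication by $\frac{\alpha}{m}$ (self-adjoint and bounded below by \eqref{eq:Asemibd}) as two lower semibounded self-adjoint operators for which $C_c(\cV)$ is a common core, and then appeals to Kato's treatment of sums of lower semibounded forms and operators \cite[Ch.~VI.1.6]{kato} to conclude that the closure of $\rh_0+A$ restricted to $C_c(\cV)$ is a lower semibounded self-adjoint operator; Theorem \ref{th:main}(i) then finishes the proof. If you wish to keep your route, you must either supply a genuine reference (or a proof, e.g.\ via a discrete Kato inequality) for the nonnegative-potential preservation statement in the weighted graph setting, or carry out the form-sum argument and additionally verify that the operator it produces coincides with $\overline{\rh_\gA^0}$.
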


\begin{proof}
By Corollary \ref{cor:kirchhoff}(i), the discrete Laplacian $\rh_0$ given by \eqref{eq:h0}, \eqref{eq:mb} is a nonnegative self-adjoint operator in $\ell^2(\cV;m)$. On the other hand, \eqref{eq:Asemibd} implies that the multiplication operator $A$ defined by \eqref{eq:Amultipl} is a self-adjoint lower semibounded operator in $\ell^2(\cV;m)$. Noting that $C_c(\cV)$ is a core for both $\rh_0$ and $A$ since the graph is locally finite, we conclude that the operator $\rh_\alpha$ defined as a closure of the sum of $\rh_0$ and $A$ is a lowersemibounded self-adjoint operator in $\ell^2(\cV;m)$ (see \cite[Chapter VI.1.6]{kato}). It remains to apply Theorem \ref{th:main}(i).  
\end{proof}

\begin{remark}
 It follows from the proof of Proposition \ref{prop:sa03} and Theorem \ref{th:main}(ii) that the operator $\bH_\alpha$ is lower semibounded in this case.
\end{remark}

Combining Proposition \ref{prop:sa03} with the self-adjointness results from Section \ref{ss:4.2}, we can extend Corollary \ref{cor:sa02} to a much wider setting. Let us present only one result in this direction.

\begin{corollary} \label{lem:sa04}
Let $\varrho_m$ be the path metric \eqref{eq:p_m}, \eqref{eq:pathmetric} on $\cV$. If $(\cV,\varrho_m)$ is complete as a metric space and $\gA\colon\cV\to \R$ satisfies \eqref{eq:Asemibd}, then $\bH_\alpha$ is a lower semibounded self-adjoint operator. 

In particular, if the weight function $m$ satisfies \eqref{eq:4.12} and $\inf_{v\in\cV}\alpha(v)>-\infty$, then $\bH_\alpha$ is a lower semibounded self-adjoint operator.
\end{corollary}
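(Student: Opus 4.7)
The plan is to combine two results already established in the excerpt: Theorem \ref{prop:gaffney}, which gives self-adjointness of the Kirchhoff Hamiltonian $\bH_0$ under metric completeness of $(\cV,\varrho_m)$, and Proposition \ref{prop:sa03}, which lifts self-adjointness of $\bH_0$ to self-adjointness of $\bH_\alpha$ whenever the coupling constants satisfy \eqref{eq:Asemibd}. Lower semiboundedness comes for free from the remark following Proposition \ref{prop:sa03} (which in turn rests on Theorem \ref{th:main}(ii) together with the fact that the proof of Proposition \ref{prop:sa03} produces a lower semibounded discrete Laplacian $\rh_\alpha$ on $\ell^2(\cV;m)$).

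Concretely, for the first statement I would argue: completeness of $(\cV,\varrho_m)$ and Theorem \ref{prop:gaffney} imply $\bH_0=\bH_0^\ast$ in $L^2(\cG)$; hypothesis \eqref{eq:Asemibd} together with Proposition \ref{prop:sa03} then yields $\bH_\alpha=\bH_\alpha^\ast$ and lower semiboundedness. No extra ingredient is needed.

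For the ``in particular'' clause, it suffices to verify that \eqref{eq:4.12} together with $\inf_{v\in\cV}\alpha(v)>-\infty$ implies both hypotheses of the first statement. First, if $c:=\inf_{v\in\cV}m(v)>0$, then for every infinite path $\cP=\{v_n\}_{n\ge 0}\subset\cV$ the $\varrho_m$-length satisfies
\[
\sum_{n\ge 0} p_m(e_{v_n,v_{n+1}}) = \sum_{n\ge 0}\bigl(m(v_n)+m(v_{n+1})\bigr) \ge 2c\cdot \#\cP = \infty,
\]
so by the Hopf--Rinow type theorem from \cite{hkmw13} (applicable because $\cG_d$ is locally finite by Hypothesis \ref{hyp:graph02}), $(\cV,\varrho_m)$ is complete as a metric space. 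This is exactly the argument used in the proof of Corollary \ref{lem:sa02}. Second, if $-C:=\inf_{v\in\cV}\alpha(v)>-\infty$, then $\alpha(v)/m(v)\ge -C/c>-\infty$ for all $v\in\cV$, which is \eqref{eq:Asemibd}. The particular case therefore reduces to the general one.

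There is no real obstacle here; the statement is essentially a bookkeeping combination of Theorem \ref{prop:gaffney} and Proposition \ref{prop:sa03}, with the only mild care being the elementary verification that the weighted bound $\alpha/m$ remains bounded below when both $\alpha$ and $1/m$ are bounded below.
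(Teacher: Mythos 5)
Your proposal is correct and follows exactly the route the paper takes: its proof reads ``Straightforward from Proposition \ref{prop:sa03}, Theorem \ref{prop:gaffney} and Corollary \ref{lem:sa02},'' and you have simply filled in the same combination, including the completeness argument from Corollary \ref{lem:sa02} and the elementary check that $\inf_v \alpha(v)>-\infty$ and $\inf_v m(v)>0$ give \eqref{eq:Asemibd}. Nothing is missing.
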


\begin{proof}
Straightforward from Proposition \ref{prop:sa03}, Theorem \ref{prop:gaffney} and Corollary \ref{lem:sa02}. 
\end{proof}

\begin{remark}
Let us stress that both conditions (completeness of $(\cV,\varrho_m)$ and \eqref{eq:Asemibd}) are important. Indeed, 1-D Schr\"odinger operators with $\delta$-type interactions (see Example \ref{ex:01}) immediately provide counterexamples. First of all, in this setting completeness of $(\cV,\varrho_m)$ means that we consider a Schr\"odinger operator on an unbounded interval (either on the whole line $\R$ or on a semi-axis). Clearly, in the case of a compact interval the minimal operator is not self-adjoint even in the case of trivial couplings $\alpha\equiv 0$. On the other hand, it was proved in \cite{AKM_10} that in the case when all $\delta$-interactions are attractive ($\alpha_k<0$ for all $k\in\N$), the operator $H_\alpha$ given by \eqref{eq:1Ddelta} is bounded from below if and only if 
\be
\sup_{n\in\N} \sum_{x_k\in [n,n+1]} |\alpha_k| <\infty.
\ee
In the case $\inf_{k\in\N} (x_{k+1}-x_{k}) >0$ the latter is equivalent to $\inf_{k\in\N}\alpha_k>-\infty$.
\end{remark}

\subsection{Negative spectrum: CLR-type estimates}\label{ss:5.2}

Let $\alpha\colon \cV\to [0,\infty)$ be a nonnegative function on $\cV$. The main focus of this section is to obtain the estimates on the number of negative eigenvalues $\kappa_-(\bH_{-\alpha})$ of the operator $\bH_{-\alpha}$ in terms of the interactions $\alpha\colon \cV\to [0,\infty)$.  Note that by Theorem \ref{th:main}(iv), 
\be\label{eq:kappa-}
\kappa_-(\bH_{-\alpha}) = \kappa_-(\rh_{-\alpha}),
\ee
where $\rh_{-\alpha}$ is the (self-adjoint) discrete Laplacian defined either by \eqref{eq:discr_lapl} in $\ell^2(\cV)$ or by \eqref{eq:h_a} in $\ell^2(\cV;m)$. 

Suppose that the discrete Laplacian $\rh_0$ defined by \eqref{eq:h_a} with $\alpha\equiv 0$ is a self-adjoint operator in $\ell^2(\cV;m)$ (see Section \ref{ss:4.2}). 
It is well known (cf., e.g.,  \cite{fuk10}) that in this case $\rh_0$ generates a symmetric Markovian semigroup $\E^{-t \rh_0}$ (one can easily check that the Beurling--Deny conditions \cite{dav89, fuk10} are satisfied). Let us consider the corresponding quadratic form in $\ell^2(\cV;m)$:
\be\label{eq:Q0}
\gt_0[f]:= \frac{1}{2}\sum_{u,v\in\cV} b(v,u)|f(v) - f(u)|^2,\qquad f\in \dom(\gt_0):=\dom(\rh_0^{1/2}),
\ee
which is a regular Dirichlet form since $\cG$ is locally finite (see \cite{fuk10, kl12}). Recall that the functions $m$ and $b$ are given by \eqref{eq:mb}.

The following theorem is a particular case of \cite[Theorems 1.2--1.3]{ls97} (see also \cite[Theorem 2.1]{fls}). As it was already mentioned, $\rh_0$ generates a symmetric Markovian semigroup $\E^{-t \rh_0}$ in $\ell^2(\cV;m)$. Noting that $\rh_{-\alpha} f = \rh_0f - Af$ for all $f\in C_c(\cV)$, where $A$ is a multiplication operator  \eqref{eq:Amultipl},  and then applying \cite[Theorems 1.2--1.3]{ls97} (see also \cite[Theorem 2.1]{fls}) to the operator $\rh_{0}$, we arrive at the following result.

\begin{theorem}[\cite{ls97}]\label{th:leso}
Assume that $\rh_0$ is a self-adjoint operator in $\ell^2(\cV;m)$.
 Then the following conditions are equivalent:
\begin{itemize}
\item[(i)] There are constants $D>2$ and $K>0$ such that
\be\label{eq:Sob}
\|f\|^2_{\ell^{q}(\cV;m)}:=\left(\sum_{v\in\cV} |f(v)|^{q} m(v)\right)^{{2}/{q}} \le K \gt_0[f]
\ee
for all $f\in\dom(\gt_0)$ with $q=\frac{2D}{D-2}$. 
\item[(ii)]
There are constants $C>0$ and $D>2$ such that for all $\alpha\colon \cV\to [0,\infty)$ belonging to $\ell^{D/2}(\cV;m^{1-D/2})$ the form 
\[
\gt_{-\gA}[f] = \gt_0[f] - \sum_{v\in\cV}\gA(v)|f(v)|^2,\qquad \dom(\gt_{-\gA}):=\dom(\gt_0),
\] 
is bounded from below and closed in $\ell^2(\cV;m)$ and, moreover, 
the negative spectrum of $\rh_{- \alpha}$ is discrete and the following estimate holds
\be\label{eq:CLRabstr}
\kappa_-(\rh_{-\alpha}) \le C  \sum_{v\in\cV} \left(\frac{\alpha(v)}{m(v)}\right)^{D/2} m(v).
\ee
\end{itemize}
\end{theorem}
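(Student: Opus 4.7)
The plan is to derive this as a direct application of the abstract Levin--Solomyak framework, verifying only that our setting falls within its hypotheses. First I would confirm that $\gt_0$ in \eqref{eq:Q0} is a regular Dirichlet form on $\ell^2(\cV;m)$: the Markov property $\gt_0[(f\wedge 1)\vee 0] \le \gt_0[f]$ is immediate from the representation as a sum of squared differences of values, $C_c(\cV)$ serves as a core since $\cG$ is locally finite, and self-adjointness of $\rh_0$ is part of the hypotheses. Consequently $\E^{-t\rh_0}$ is a symmetric Markovian semigroup on $\ell^2(\cV;m)$, putting us into the exact abstract setting considered in [ls97].

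Next I would identify the perturbation correctly. In $\ell^2(\cV;m)$ one has the decomposition $\rh_{-\gA} = \rh_0 - A$, where $A$ is multiplication by $\gA/m$. The natural weighted Lebesgue space for a potential on $\ell^2(\cV;m)$ is $\ell^p(\cV;m)$, and the elementary identity
\[
\sum_{v\in\cV}\left(\frac{\gA(v)}{m(v)}\right)^{D/2} m(v) \;=\; \sum_{v\in\cV} \gA(v)^{D/2} m(v)^{1 - D/2}
\]
shows that the membership $\gA\in \ell^{D/2}(\cV; m^{1-D/2})$ used in statement (ii) is equivalent to $\gA/m \in \ell^{D/2}(\cV;m)$, and the right-hand side of \eqref{eq:CLRabstr} is precisely $C\,\|\gA/m\|^{D/2}_{\ell^{D/2}(\cV;m)}$.

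With this dictionary in place, the equivalence (i)$\Leftrightarrow$(ii) becomes a direct translation of the abstract theorem: for the generator of any symmetric Markovian semigroup, a Sobolev embedding $\dom(\gt_0^{1/2}) \hookrightarrow \ell^q(\cV;m)$ with $q = 2D/(D-2)$ is equivalent to the conjunction of closability and lower-semiboundedness of the perturbed form $\gt_0 - (V\cdot,\cdot)$, together with the CLR bound $\kappa_-(\rh_0 - V) \le C\,\|V\|_{\ell^{D/2}(\cV;m)}^{D/2}$, for every $0\le V\in \ell^{D/2}(\cV;m)$. Setting $V = \gA/m$ then yields the asserted estimate, and Theorem \ref{th:main}(iv) transfers it to $\bH_{-\gA}$ via \eqref{eq:kappa-}.

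The main (and essentially only) obstacle is verifying that the perturbed form $\gt_{-\gA}$ is well defined, closed, and lower-semibounded under $\gA/m \in \ell^{D/2}(\cV;m)$. This follows by combining (i) with H\"older's inequality to obtain
\[
\sum_{v\in\cV} \gA(v)|f(v)|^2 \;\le\; \|\gA/m\|_{\ell^{D/2}(\cV;m)}\,\|f\|^2_{\ell^q(\cV;m)} \;\le\; K\,\|\gA/m\|_{\ell^{D/2}(\cV;m)}\, \gt_0[f],
\]
and then truncating $\gA/m$ outside a large finite subset of $\cV$ to make the relative form-bound arbitrarily small, so the KLMN theorem gives closability. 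The (i)$\Rightarrow$(ii) direction afterwards proceeds along the classical Lieb--Rozenblum path via ultracontractivity of $\E^{-t\rh_0}$ (which is equivalent to \eqref{eq:Sob}), while the converse implication (ii)$\Rightarrow$(i), extracting a Sobolev inequality from CLR-type bounds for all admissible potentials, is the nontrivial contribution of [ls97] that I would simply invoke.
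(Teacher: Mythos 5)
Your proposal is correct and follows essentially the same route as the paper, which simply observes that $\rh_0$ generates a symmetric Markovian semigroup on $\ell^2(\cV;m)$, writes $\rh_{-\gA}=\rh_0-A$ with $A$ the multiplication by $\gA/m$, and invokes Theorems~1.2--1.3 of \cite{ls97} (the dictionary $\gA\in\ell^{D/2}(\cV;m^{1-D/2})\Leftrightarrow \gA/m\in\ell^{D/2}(\cV;m)$ being exactly the translation needed). The extra detail you supply on the H\"older/KLMN verification of closedness and lower semiboundedness is consistent with, and slightly more explicit than, what the paper records.
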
 

\begin{remark}
\begin{itemize}
\item[(i)] The constants $K$ and $C$ in Theorem \ref{th:leso} are connected by $K^D\le C\le \E^{D-1} K^D$ (see \cite{fls}).
\item[(ii)] Since $C_c(\cV)$ is a core for both $\rh_0$ and $A$ whenever $\rh_0$ is essentially self-adjoint, it follows from Theorem \ref{th:leso} that the operator $\rh_{-\alpha}$ is bounded from below and self-adjoint for all $\alpha\in \ell^{D/2}(\cV;m^{1-D/2})$ if \eqref{eq:Sob} is satisfied.
\end{itemize}
\end{remark}

Combining Theorem \ref{th:main}(iv) with Theorem \ref{th:leso}, we immediately arrive at the following CLR-type estimate for quantum graphs with $\delta$-couplings at vertices.

\begin{theorem}\label{cor:clr01}
Assume that $\rh_0$ is a self-adjoint operator in $\ell^2(\cV;m)$. 
Then the following conditions are equivalent:
\begin{itemize}
\item[(i)] There are constants $D>2$ and $K>0$ such that \eqref{eq:Sob} holds 
for all $f\in\dom(\gt_0)$ with $q=\frac{2D}{D-2}$. 
\item[(ii)]
There are constants $C>0$ and $D>2$ such that for all $\alpha\colon \cV\to [0,\infty)$ belonging to $\ell^{D/2}(\cV;m^{1-D/2})$ the operator $\bH_{-\gA}$ is self-adjoint, bounded from below, its negative spectrum is discrete and the following estimate holds
\be
\kappa_-(\bH_{-\lambda \alpha}) \le C \lambda^{D/2} \sum_{v\in\cV} \left(\frac{\alpha(v)}{m(v)}\right)^{D/2} m(v),\quad \lambda>0.
\ee
\end{itemize}
The constants $K$ and $C$ are connected by $K^D\le C\le \E^{D-1} K^D$.
\end{theorem}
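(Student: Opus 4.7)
The plan is to reduce Theorem~\ref{cor:clr01} to its discrete counterpart, Theorem~\ref{th:leso}, via the correspondence $\bH_{-\alpha}\leftrightarrow\rh_{-\alpha}$ supplied by Theorem~\ref{th:main}. Since the standing hypothesis ``$\rh_0$ is self-adjoint in $\ell^2(\cV;m)$'' is common to both theorems, Theorem~\ref{th:leso} applies verbatim on the discrete side, and the quadratic form $\gt_0$ figuring in condition~(i) is literally the same object in both results (recall \eqref{eq:hAform}). Thus I would treat condition~(i) as a fixed input and concentrate on matching condition~(ii) of Theorem~\ref{cor:clr01} with condition~(ii) of Theorem~\ref{th:leso}.

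First I would establish the ``(i)$\Rightarrow$(ii)'' direction. Assuming the Sobolev-type inequality \eqref{eq:Sob}, Theorem~\ref{th:leso} gives, for every nonnegative $\alpha\in\ell^{D/2}(\cV;m^{1-D/2})$, a lower semibounded closed form $\gt_{-\alpha}$ whose self-adjoint operator is $\rh_{-\alpha}$ with discrete negative spectrum and the CLR bound \eqref{eq:CLRabstr}. Self-adjointness of $\rh_{-\alpha}$ itself follows as noted in the remark after Theorem~\ref{th:leso} (the set $C_c(\cV)$ is a common core for $\rh_0$ and the multiplication operator $A$ of \eqref{eq:Amultipl}). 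Transferring to the metric graph: by Theorem~\ref{th:main}(i) the operator $\bH_{-\alpha}$ is self-adjoint, by Theorem~\ref{th:main}(ii) it is lower semibounded, by Theorem~\ref{th:main}(v) its negative spectrum is discrete, and by Theorem~\ref{th:main}(iv) the counting functions match:
\[
\kappa_-(\bH_{-\alpha}) \;=\; \kappa_-(\rh_{-\alpha}) \;\le\; C\sum_{v\in\cV}\Big(\tfrac{\alpha(v)}{m(v)}\Big)^{D/2}m(v).
\]
Replacing $\alpha$ by $\lambda\alpha$ pulls out the factor $\lambda^{D/2}$ and yields the asserted inequality.

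For the converse ``(ii)$\Rightarrow$(i)'', I would run the same dictionary in reverse. The hypothesis in Theorem~\ref{cor:clr01}(ii) gives, via Theorem~\ref{th:main}(iv), the bound $\kappa_-(\rh_{-\lambda\alpha})\le C\lambda^{D/2}\sum_v(\alpha(v)/m(v))^{D/2}m(v)$ for every nonnegative $\alpha\in\ell^{D/2}(\cV;m^{1-D/2})$ and every $\lambda>0$. This is condition~(ii) of Theorem~\ref{th:leso} (taking $\lambda=1$ suffices), so the Sobolev inequality \eqref{eq:Sob} follows with the same exponent $q=2D/(D-2)$. The constants are linked by $K^D\le C\le \E^{D-1}K^D$ simply because this relation is part of Theorem~\ref{th:leso}, and no new constant enters during the transfer through Theorem~\ref{th:main}.

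I do not anticipate a serious obstacle: the argument is essentially a translation through the unitary/bijective correspondence of Theorem~\ref{th:main}. The one point requiring a little care is to justify self-adjointness and closedness of $\rh_{-\alpha}$ before invoking Theorem~\ref{th:main}(i)--(v), which those items take as a hypothesis; this is where I would explicitly cite the remark after Theorem~\ref{th:leso} (the common core $C_c(\cV)$) so that $\rh_{-\alpha}=\overline{\rh_0-A}$ is self-adjoint and lower semibounded, after which the translation to $\bH_{-\alpha}$ is automatic.
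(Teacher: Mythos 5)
Your proposal is correct and follows essentially the same route as the paper, which proves the theorem in one line by combining Theorem \ref{th:main}(iv) with Theorem \ref{th:leso}; your version simply spells out the details (using items (i), (ii), (v) of Theorem \ref{th:main} for self-adjointness, semiboundedness and discreteness, and the remark after Theorem \ref{th:leso} for the self-adjointness of $\rh_{-\alpha}$). The extra care you take about verifying self-adjointness of $\rh_{-\alpha}$ before invoking Theorem \ref{th:main} is exactly the point the paper delegates to that remark, so nothing is missing.
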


Of course, the most difficult part is to check the validity of the Sobolev-type inequality \eqref{eq:Sob}. However, there are several particular cases of interest when \eqref{eq:Sob} is known to be true (see \cite{gh}, \cite{sc}, \cite{var} and references therein).

\begin{corollary}\label{cor:clr02}
Let the metric graph $\cG=(\cV,\cE,|\cdot|)$ be such that the discrete graph $\cG_d=(\cV,\cE)$ is a Cayley graph of a group of polynomial growth $D$ with $D\ge 3$. 
If $\alpha\colon \cV\to [0,\infty)$ belongs to $\ell^{D/2}(\cV;m^{1-D/2})$, then 
\be
\kappa_-(\bH_{-\lambda \alpha}) \le C(\cG) \lambda^{D/2} \sum_{v\in\cV} \left(\frac{\alpha(v)}{m(v)}\right)^{D/2} m(v),\quad \lambda>0,
\ee
with some constant $C(\cG)$, which depends only on $\cG$.
\end{corollary}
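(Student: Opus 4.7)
The plan is to apply Theorem \ref{cor:clr01} directly, so the bulk of the work reduces to verifying the weighted Sobolev-type inequality \eqref{eq:Sob} with exponent $q = 2D/(D-2)$ for the Dirichlet form $\gt_0$ associated with $\rh_0$. (Self-adjointness of $\rh_0$, required by Theorem \ref{cor:clr01}, can be secured from the results of Section \ref{ss:4.2}; e.g., completeness of $(\cV,\varrho_m)$ gives it via Corollary \ref{lem:sa02}.)

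First, I would invoke the classical fact, due essentially to Varopoulos (cf.\ the references \cite{var}, \cite{sc}, \cite{gh} already cited in the proof of Theorem \ref{cor:clr01}), that a Cayley graph of a finitely generated group of polynomial growth $D \ge 3$ satisfies the \emph{unweighted} discrete Sobolev inequality
\[
\Big(\sum_{v\in\cV}|f(v)|^{q}\Big)^{2/q} \;\le\; K_0 \cdot \frac12\sum_{u\sim v}|f(u)-f(v)|^{2}, \qquad q=\frac{2D}{D-2},
\]
for all finitely supported $f$, with a constant $K_0 = K_0(\cG_d)$ depending only on the combinatorial Cayley graph.

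Second, I would transfer this to the weighted form $\gt_0[f]=\tfrac12\sum_{u,v} b(u,v)|f(u)-f(v)|^2$ using Hypothesis \ref{hyp:graph02} and vertex-transitivity. Set $\ell^\ast := \sup_{e\in\cE}|e|<\infty$ and let $d=\deg(v)$ be the (constant) combinatorial degree of the Cayley graph. Then for every $v\in\cV$ and every adjacent pair $u\sim v$,
\[
m(v)=\sum_{e\in\cE_v}|e|\;\le\; d\,\ell^\ast, \qquad b(u,v)=\frac{1}{|e_{u,v}|}\;\ge\;\frac{1}{\ell^\ast}.
\]
Combining these with the unweighted Sobolev inequality gives
\[
\Big(\sum_{v\in\cV}|f(v)|^{q}m(v)\Big)^{2/q} \;\le\; (d\,\ell^\ast)^{2/q}\Big(\sum_{v\in\cV}|f(v)|^{q}\Big)^{2/q} \;\le\; (d\,\ell^\ast)^{2/q}K_0\,\ell^\ast\,\gt_0[f],
\]
so \eqref{eq:Sob} holds with $K=(d\,\ell^\ast)^{2/q}K_0\,\ell^\ast$, a constant depending only on $\cG$.

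Finally, Theorem \ref{cor:clr01}(ii) applied to $\lambda\alpha$ in place of $\alpha$ yields simultaneously that $\bH_{-\lambda\alpha}$ is self-adjoint and lower semibounded with discrete negative spectrum, and that the asserted CLR bound holds with a constant $C(\cG)\le \E^{D-1}K^D$ depending only on $\cG$. The main obstacle is the Sobolev inequality in the weighted form $\gt_0$; vertex-transitivity of the Cayley graph together with the two-sided comparison between $|e|$ and $\ell^\ast$ is precisely what allows the classical unweighted estimate to be imported without losing dependence on the metric.
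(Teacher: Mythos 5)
Your proposal is correct and follows essentially the same route as the paper: reduce to Theorem \ref{cor:clr01}, import the unweighted discrete Sobolev inequality for Cayley graphs of polynomial growth from \cite{var}, and transfer it to the weighted form $\gt_0$ and the norm $\|\cdot\|_{\ell^q(\cV;m)}$ using exactly the two bounds $b(u,v)\ge (\sup_{e}|e|)^{-1}$ and $m(v)\le \deg(v)\sup_{e}|e|$. Your explicit remark that the self-adjointness hypothesis of Theorem \ref{cor:clr01} must be secured separately is a reasonable addition, though the paper leaves this implicit.
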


\begin{proof}
By Theorem \ref{cor:clr01}, we only need to show that \eqref{eq:Sob} holds true. The argument is similar to \cite[Theorem 3.7]{ls97}. Indeed, by \cite[Theorem VI.5.2]{var}, since $\cG_d$ is a Cayley graph of the group of polynomial growth, there is a $C>0$ such that 
\be\label{eq:varop_group}
\|f\|_{\ell^q(\cV)} \le C \sum_{v\in \cV}\sum_{u\sim v} |f(v) - f(u)|^2, 
\ee
for all $f\in C_c(\cV)$ with $q = \frac{2D}{D-2}$. Since $ \sup_{e\in\cE}|e| <\infty$ (see Hypothesis \ref{hyp:graph02}), we get
\begin{align*}
\gt_0[f]=\frac{1}{2}\sum_{u,v\in\cV} b(v,u)|f(v) - f(u)|^2 =& \frac{1}{2}\sum_{v\in \cV}\sum_{u\sim v} \frac{1}{|e_{u,v}|}|f(v) - f(u)|^2 \\
&\ge \frac{1}{2\sup_{e\in\cE}|e|}\sum_{v\in \cV}\sum_{u\sim v} |f(v) - f(u)|^2,
\end{align*}
for all $f\in C_c(\cV)$. Combining this inequality with \eqref{eq:varop_group} and noting that 
\begin{align*}
\|f\|^q_{\ell^q(\cV;m)}=&\sum_{v\in\cV} |f(v)|^q m(v) = \sum_{v\in\cV} |f(v)|^q \sum_{e\in\cE_v}|e| \\
&\le \sup_{e\in\cE}|e| \sum_{v\in\cV} |f(v)|^q\deg(v) \le \|f\|^q_{\ell^q(\cV)} \sup_{e\in\cE}|e| \sup_{v\in\cV}\deg(v),
\end{align*}
we get \eqref{eq:Sob}. 
\end{proof}

\begin{remark}
Notice that in Corollary \ref{cor:clr02} we did not make any additional assumptions on the weight function $m$. Namely, we only assumed that the edges lengths satisfy \eqref{eq:dstar}.
\end{remark}

In particular, in the case $\cG_d = \Z^N$ we get the following estimate.
\begin{corollary}\label{eq:clr03}
Let $\cG_d=\Z^N$ with $N\ge 3$. Also, assume that \eqref{eq:dstar} is satisfied. If $\alpha\colon \cV\to [0,\infty)$ belongs to $\ell^{\frac{N}{2}}(\Z^N;m^{1-N/2})$, then 
\be
\kappa_-(\bH_{-\lambda \alpha}) \le C_N \lambda^{N/2}  \sum_{v\in\cV} \left(\frac{\alpha(v)}{m(v)}\right)^{N/2} m(v),\quad \lambda>0,
\ee
with some constant $C_N$, which depends only on $N$ and $m$.
\end{corollary}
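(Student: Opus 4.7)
The plan is to derive Corollary \ref{eq:clr03} as an immediate specialization of Corollary \ref{cor:clr02} to the lattice case, so no new analytic input is required beyond recognizing $\Z^N$ as a Cayley graph of polynomial growth. First I would fix the combinatorial structure: the discrete graph $\cG_d=\Z^N$ is naturally the Cayley graph of the abelian group $\Z^N$ with respect to the standard symmetric generating set $\{\pm e_1,\dots,\pm e_N\}$, where $e_j$ denotes the $j$-th unit vector. This Cayley graph has polynomial growth of degree exactly $N$, since the cardinality of every word ball of radius $r$ is $\Theta(r^N)$ (this is elementary and recorded, e.g., in \cite{var}). Because the hypothesis of Corollary \ref{eq:clr03} reads $N\ge 3$, the growth exponent $D:=N$ satisfies the requirement $D\ge 3$ of Corollary \ref{cor:clr02}.

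Next I would note that by hypothesis the metric graph $\cG=(\cV,\cE,|\cdot|)$ with $\cV=\Z^N$ fulfils Hypotheses \ref{hyp:graph01} and \ref{hyp:graph02}, since local finiteness is automatic (the combinatorial degree is the constant $2N$), connectedness and simplicity hold for the standard realization, and \eqref{eq:dstar} is assumed. Therefore Corollary \ref{cor:clr02} applies verbatim with $D=N$ and produces
\[
\kappa_-(\bH_{-\lambda\alpha})\le C(\cG)\,\lambda^{N/2}\sum_{v\in\cV}\left(\frac{\alpha(v)}{m(v)}\right)^{N/2}m(v),\qquad \lambda>0,
\]
whenever $\alpha\in\ell^{N/2}(\Z^N;m^{1-N/2})$.

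It remains only to track the dependence of the constant. Inspecting the proof of Corollary \ref{cor:clr02}, the constant $C(\cG)$ is built from (i) the Varopoulos Sobolev constant in \eqref{eq:varop_group}, which for the Cayley graph of $\Z^N$ with the standard generators depends only on $N$; (ii) the uniform bound $\sup_{e\in\cE}|e|<\infty$, i.e., a quantity determined by the edge-length data encoded in $m$; and (iii) the maximal combinatorial degree $\sup_{v\in\cV}\deg(v)=2N$. Combining these, one obtains a constant $C_N$ depending only on $N$ and on the weight $m$, which is the claimed dependence. Since this is simply a concatenation of already-proven facts, there is no genuine obstacle; the only care needed is to verify that the Varopoulos estimate is applied with its constant intrinsic to the Cayley graph $\Z^N$ (hence depending only on $N$), and that the passage from $\ell^q(\cV)$ to $\ell^q(\cV;m)$ in the last display of the proof of Corollary \ref{cor:clr02} goes through without introducing any dependence beyond $\sup_{e\in\cE}|e|$.
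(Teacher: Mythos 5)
Your proposal is correct and coincides with the paper's intended argument: the paper states this corollary without separate proof precisely because it is the immediate specialization of Corollary \ref{cor:clr02} to the Cayley graph $\Z^N$ of polynomial growth degree $D=N\ge 3$. Your additional tracking of the constant through the Varopoulos inequality, $\sup_{e\in\cE}|e|$, and $\sup_{v}\deg(v)=2N$ correctly accounts for the stated dependence of $C_N$ on $N$ and $m$.
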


It was first noticed by G.\ Rozenblum and M.\ Solomyak (see \cite[Theorem 3.1]{roso09} and also \cite{roso10}) that in contrast to Schr\"odinger operators on $\R^N$, in the case $\cG_d=\Z^N$ for every $q\in (0,D/2)$ the following holds 
\be
\kappa_-(\rh_{-\lambda\alpha}) = \OO(\lambda^{q}),\quad \lambda\to +\infty,
\ee
whenever $\inf_{e\in\cE}|e|>0$ and $\alpha\in \ell^q_w(\cV)$, that is, 
\[
\#\{v\in\cV\colon |\alpha(v)|>n\} = \OO(n^{-q})
\]
as $n\to \infty$ or equivalently $\ti{\alpha}_n = \OO(n^{-1/q})$ as $n\to \infty$, where $\{\ti{\alpha}_n\}_{n\in\N}$ is a re-arrangement of $\{\alpha(v)\}_{v\in\cV}$ in a decreasing order. Define
\[
\|\alpha\|_{\ell^q_w}:= \sup_n n^{1/q} \ti{\alpha}_n.
\]
It turns out that the later holds in a wider setting and hence we arrive at the following result.

\begin{proposition}\label{th:6.5}
Assume the conditions of Theorem \ref{cor:clr01}. If $\cG$ satisfies \eqref{eq:4.01}, then for every $q\in (0,D/2)$
\be\label{eq:6.10}
\kappa_-(\bH_{-\lambda\alpha}) \le  C\lambda^{q} \|\alpha\|_{\ell^q_w}^q,\quad \lambda>0,
\ee
whenever $\alpha\in \ell^q_w(\cV)$. Here the constant $C$ depends only on $q$, $D$ and $\cV$.
\end{proposition}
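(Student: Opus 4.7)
The plan is to reduce via Theorem~\ref{th:main}(iv) to estimating $\kappa_-(\rh_{-\lambda\alpha})=\kappa_-(\rh_0-\lambda A)$ on $\ell^2(\cV;m)$, where $A$ will denote multiplication by $\alpha/m$, and then, in the spirit of Rozenblum--Solomyak, to split $\alpha$ at a suitable height and handle the large and small values separately.

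A preliminary observation is that \eqref{eq:4.01} forces a uniform lower bound $m(v)\ge c>0$: applying AM--HM to the edges at $v$ gives $m(v)\cdot\sum_{e\in\cE_v}|e|^{-1}\ge \deg(v)^2\ge 1$, hence $\Deg(v)\ge 1/m(v)^2$ and $m(v)\ge c:=(\sup_v\Deg(v))^{-1/2}$; moreover, by \eqref{eq:estDavies}, $\rh_0$ is bounded and self-adjoint, so the Sobolev-type hypothesis of Theorem~\ref{cor:clr01} survives any rescaling of $\rh_0$. For a threshold $t>0$ to be chosen, I would split $\alpha=\alpha_1+\alpha_2$ with $\alpha_1:=\alpha\,\chi_{\{\alpha>t\}}$, and correspondingly $A=A_1+A_2$. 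The standard form-splitting (obtained by intersecting the maximal non-negativity subspaces of $\tfrac12\rh_0-\lambda A_i$, $i=1,2$) then yields
\[
\kappa_-(\rh_0-\lambda A)\le \kappa_-\bigl(\tfrac{1}{2}\rh_0-\lambda A_1\bigr)+\kappa_-\bigl(\tfrac{1}{2}\rh_0-\lambda A_2\bigr).
\]

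The first term can be controlled by a rank count: $A_1$ is a non-negative operator of rank $N:=\#\{v\colon\alpha(v)>t\}\le t^{-q}\|\alpha\|_{\ell^q_w}^q$, so on $\ker A_1$ the form $\tfrac12\rh_0-\lambda A_1$ is non-negative and hence $\kappa_-(\tfrac{1}{2}\rh_0-\lambda A_1)\le N$. For the second term, applying Theorem~\ref{cor:clr01} (with $\lambda$ replaced by $2\lambda$ to absorb the factor $\tfrac12$) together with $m(v)\ge c$ would give
\[
\kappa_-\bigl(\tfrac{1}{2}\rh_0-\lambda A_2\bigr)\le C_1\lambda^{D/2}\sum_{v\in\cV}\frac{\alpha_2(v)^{D/2}}{m(v)^{D/2-1}}\le C_2\lambda^{D/2}\sum_{v\in\cV}\alpha_2(v)^{D/2}.
\]
Passing to the decreasing rearrangement $\tilde\alpha_n\le\|\alpha\|_{\ell^q_w}n^{-1/q}$ and splitting the remaining sum at $M:=(\|\alpha\|_{\ell^q_w}/t)^q$ (so that $\tilde\alpha_n\le t$ for $N<n\le M$ while the weak-$\ell^q$ bound is sharper for $n>M$), the condition $D/(2q)>1$ yields $\sum_{v}\alpha_2(v)^{D/2}\le M t^{D/2}+C_3\|\alpha\|_{\ell^q_w}^{D/2}M^{1-D/(2q)}\le C_4\,t^{D/2-q}\|\alpha\|_{\ell^q_w}^q$.

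Combining the two contributions one obtains $\kappa_-(\rh_0-\lambda A)\le(t^{-q}+C_5\lambda^{D/2}t^{D/2-q})\|\alpha\|_{\ell^q_w}^q$, and the choice $t=1/\lambda$ balances both terms at $O(\lambda^q\|\alpha\|_{\ell^q_w}^q)$, which together with Theorem~\ref{th:main}(iv) gives \eqref{eq:6.10}. The only technical subtlety I anticipate is the form-splitting inequality, but this is classical; all other ingredients follow from Theorems~\ref{th:main}(iv) and~\ref{cor:clr01} together with the elementary rearrangement estimates above.
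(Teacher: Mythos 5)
Your proof is correct and takes essentially the same route as the paper: reduce to the discrete operator via Theorem \ref{th:main}(iv) and then run the Rozenblum--Solomyak splitting argument for $\kappa_-(\rh_{-\lambda\alpha})$. The paper's own proof simply cites \cite[Theorem 3.1]{roso10} and asserts that the argument extends ``line by line'' to graphs satisfying \eqref{eq:4.01}; your write-up supplies exactly those details (the uniform lower bound $m(v)\ge c$ forced by \eqref{eq:4.01}, the rank bound for the large part of $\alpha$, the CLR bound for the truncated part, and the optimization $t=1/\lambda$), and they check out.
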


\begin{proof}
By Theorem \ref{th:main}(iv), we only need to show that 
\be\label{eq:6.11}
\kappa_-(\rh_{-\lambda\alpha}) \le  C\lambda^{q} \|\alpha\|_{\ell^q_w}^q,\quad \lambda>0.
\ee
The validity of \eqref{eq:6.11} was established in \cite[Theorem 3.1]{roso10} under the additional assumptions $\inf_{e\in\cE}|e|>0$ and $\sup_{v\in\cV}\deg(v)<\infty$. In fact, this proof (see also \cite[\S 3]{roso09}) can be extended line by line to the case of graphs $\cG$ satisfying \eqref{eq:4.01}.
\end{proof}

\begin{remark}
For a further discussion of eigenvalue estimates for discrete operators and quantum graphs on the lattice $\Z^N$ we refer to \cite{roso11}.
\end{remark}

\begin{remark}\label{rem:5.14}
To a large extent, the behavior of the negative spectrum of $\rh_{- \alpha}$ is determined by the behavior of the following function 
\be
g(t):= \sup_{u,v\in\cV} |P(t;u,v)| = \|\E^{-t \rh_0}\|_{\ell^1\to \ell^\infty},
\ee
where $P(t;\cdot,\cdot):=\E^{-t \rh_0}(\cdot,\cdot)$ is the heat kernel (see \cite{roso97, roso10} and also \cite{fls, mova10, mova12}). 
In particular, the exponents $d$ and $D$ determined by
\begin{align}\label{eq:dimensions}
g(t) & = \OO(t^{-d/2}),\quad t\to 0, & g(t) & = \OO(t^{-D/2}),\quad t\to +\infty,
\end{align}
and called the local dimension and the global dimension, respectively, are very important in the analysis of $\kappa_-(\rh_{-\alpha})$ (see Section 2 in \cite{roso97}). By \cite[Theorem II.5.2]{var}, \eqref{eq:Sob} is equivalent to the following estimate 
\be\label{eq:decay}
g(t) \le C t^{-D/2},\quad t>0,
\ee
with some positive constant $C>0$. On the other hand,  $d=0$ if \eqref{eq:4.01} holds, that is, if $\rh_0$ is a bounded operator and, moreover, $\ell^1(\cV)\subset \ell^2(\cV)\subset \ell^\infty(\cV)$. It is precisely this fact which allows to prove Proposition \ref{th:6.5}. Note that $d=D=N$ for Schr\"odinger operators on $\R^N$ and hence the estimates of the type \eqref{eq:6.10} have no analogues in this case. 
\end{remark}

Equality \eqref{eq:kappa-} together with Remark \ref{rem:5.14} indicate that there is a close connection between the heat semigroups $\E^{-t\rh_0}$ and $\E^{-t\bH_0}$. In fact, the following holds true.

\begin{theorem}\label{lem:rozsol}
Assume that $\rh_0$ and $\bH_0$ are self-adjoint operators in $\ell^2(\cV;m)$ and $L^2(\cG)$, respectively. Then the following statements are equivalent
\begin{itemize}
\item[(i)] $\|\E^{-t\rh_0}\|_{\ell^1\to \ell^\infty} \le  C_1t^{-D/2}$ holds for all $t>0$ with some $C_1>0$ and $D>2$,
\item[(ii)]  $\|\E^{-t\bH_0}\|_{L^1\to L^\infty} \le  C_2t^{-D/2}$ holds for all $t>0$ with some $C_2>0$ and $D>2$.
\end{itemize}
Here the constants $C_1$ and $C_2$ might be different.
\end{theorem}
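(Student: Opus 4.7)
The plan is to invoke the classical Varopoulos equivalence (see, e.g., \cite[Theorem~II.5.2]{var}, already used in Theorem~\ref{th:leso}) to translate each heat kernel bound into a Sobolev-type inequality, and then to show directly that the discrete and continuous Sobolev inequalities on $\cV$ and $\cG$ are equivalent, using the form identity from Remark~\ref{rem:3.8} together with a short analysis of piecewise linear functions on $\cG$. Setting $q := \frac{2D}{D-2}$, under Varopoulos condition (i) is equivalent to $\|f\|_{\ell^q(\cV;m)}^2 \le K_1 \gt_0[f]$ for all $f \in \dom(\gt_0)$, while condition (ii) is equivalent to $\|F\|_{L^q(\cG)}^2 \le K_2 \gt_{\bH_0}[F]$ for all $F \in \dom(\gt_{\bH_0})$. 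It thus suffices to prove these two Sobolev inequalities are equivalent.

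For (ii) $\Rightarrow$ (i), I would pick $f \in C_c(\cV)$ and let $F \in \cL_{cont}$ be the continuous piecewise linear extension with $F\upharpoonright\cV = f$. Remark~\ref{rem:3.8} then gives $\gt_{\bH_0}[F] = \gt_0[f]$. The elementary two-sided bound
\begin{equation*}
c_q \bigl(|a|^q + |b|^q\bigr) \le \int_0^1 |a + (b-a)s|^q\, ds \le C_q \bigl(|a|^q + |b|^q\bigr),
\end{equation*}
obtained by compactness and homogeneity on the unit sphere of $\R^2$, yields the norm comparison
\begin{equation*}
\|F\|_{L^q(\cG)}^q \asymp \sum_{e\in\cE}|e|\bigl(|f(e_o)|^q+|f(e_i)|^q\bigr) = \sum_{v\in\cV}|f(v)|^q m(v) = \|f\|_{\ell^q(\cV;m)}^q.
\end{equation*}
Applying the continuous Sobolev inequality to such $F$ then delivers its discrete counterpart.

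For (i) $\Rightarrow$ (ii), I would take $F \in \dom(\gt_{\bH_0})$ supported on finitely many edges (a form core) and decompose $F = F_{lin} + F_0$, where $F_{lin}$ is the piecewise linear interpolant of $F\upharpoonright\cV$ and $F_0 := F - F_{lin}$ vanishes at every vertex. Since $F_{lin}'$ is constant on each edge while $\int_e F_0'\, dx = F_0(e_i) - F_0(e_o) = 0$, the cross term in $\gt_{\bH_0}$ vanishes and $\gt_{\bH_0}[F] = \gt_{\bH_0}[F_{lin}] + \gt_{\bH_0}[F_0]$. The linear part is controlled by the previous direction together with the discrete Sobolev inequality, yielding $\|F_{lin}\|_{L^q(\cG)}^2 \lesssim \gt_{\bH_0}[F_{lin}]$. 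For the vertex-vanishing part, the pointwise 1D estimate $|F_0(x)|^2 \le \min(x,|e|-x)\|F_0'\|_{L^2(e)}^2$ on each edge integrates to $\|F_0\|_{L^q(e)}^q \le C_q |e|^{q/2+1}\|F_0'\|_{L^2(e)}^q$. Summing over edges, using $\sup_{e\in\cE}|e|<\infty$ (Hypothesis~\ref{hyp:graph02}) and the elementary inequality $\sum_e a_e^{q/2} \le (\sum_e a_e)^{q/2}$ (valid for $a_e\ge 0$ whenever $q \ge 2$) with $a_e := \|F_0'\|_{L^2(e)}^2$, one obtains $\|F_0\|_{L^q(\cG)}^2 \lesssim \gt_{\bH_0}[F_0]$. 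A triangle inequality $\|F\|_{L^q(\cG)}^2 \le 2\bigl(\|F_{lin}\|_{L^q(\cG)}^2 + \|F_0\|_{L^q(\cG)}^2\bigr)$ then closes the estimate.

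The main technical obstacle is the treatment of the vertex-vanishing part $F_0$, which is invisible to the discrete operator; however it is well controlled edgewise using the 1D Friedrichs inequality together with the uniform upper bound $\sup_{e\in\cE}|e|<\infty$ provided by Hypothesis~\ref{hyp:graph02}.
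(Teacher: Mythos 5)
Your proposal is correct and follows essentially the same route as the paper: Varopoulos's theorem reduces both heat kernel bounds to Sobolev inequalities, the decomposition $F=F_{\rm lin}+F_0$ with orthogonality of the Dirichlet forms handles the continuous side, the identity $\gt_{\bH_0}[F_{\rm lin}]=\gt_{\rh_0}[F_{\rm lin}]$ from Remark~\ref{rem:3.8} links the two settings, and the two-sided comparison of $\|F_{\rm lin}\|_{L^q(\cG)}$ with $\|f\|_{\ell^q(\cV;m)}$ for linear functions completes both directions. The only differences are cosmetic: you obtain the norm comparison by a compactness/homogeneity argument where the paper uses an explicit H\"older and Cauchy--Schwarz computation, and you spell out the edgewise Friedrichs estimate for $F_0$ that the paper dismisses as ``easy to see.''
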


\begin{proof}
By Varopoulos's theorem (see \cite[Theorem II.5.2]{var}), (i) and (ii) are equivalent to the validity of the corresponding Sobolev type inequalities. Namely, (i) is equivalent to \eqref{eq:Sob} and (ii) is equivalent to the inequality
\be\label{eq:Sob_cont}
\left(\int_{\cG} |f(x)|^q dx\right)^{2/q} \le C \int_{\cG} |f'(x)|^2 dx,\qquad f\in H^1(\cG),
\ee
where $H^1(\cG)$ is the Sobolev space on $\cG$, which coincides with the form domain of the operator $\bH_0$, and $q=\frac{2D}{D-2}$ and $D>2$. Hence it suffices to show that \eqref{eq:Sob} is equivalent to \eqref{eq:Sob_cont}.

First observe that every $f\in H^1(\cG)$ admits a unique decomposition $f=f_{\rm lin} + f_0$, where $f_{\rm lin}\in H^1(\cG)$ is piecewise linear on $\cG$ and $f_0\in H^1(\cG)$ takes zero values at the vertices $\cV$. It is easy to check that 
\[
\gt_{\bH_0}[f]=\int_{\cG} |f'(x)|^2 dx = \int_{\cG} |f_{\rm lin}'(x)|^2 dx + \int_{\cG} |f_0'(x)|^2 dx = \gt_{\bH_0}[f_{\rm lin}] + \gt_{\bH_0}[f_0]. 
\]
Moreover, we have (see Remark \ref{rem:3.8}):
\[
 \gt_{\bH_0}[f_{\rm lin}] = \gt_{\rh_0}[f_{\rm lin}],\qquad f_{\rm lin}\in H^1(\cG)\cap \cL.
\]

Next it is easy to see that \eqref{eq:Sob_cont} holds for all $f=f_0\in H^1(\cG)$ with $q>2$ and with a constant $C(\cG)$ which depends only on $\sup_{e\in\cE}|e|$ and $q>2$. Noting that every piecewise linear function $f=f_{\rm lin}\in H^1(\cG)\cap \cL$ satisfies
\begin{align*}
\|f\|^q_{L^q(\cG)} = & \sum_{e\in\cE}\int_{e} |f(x)|^q dx \le \sum_{e\in\cE} |e| \max_{x\in e}|f(x)|^q \\
&\le \sum_{e\in\cE} |e|(|f_e(e_i)|^q + |f_e(e_o)|^q) = 2\sum_{v\in\cV}|f(v)|^qm(v) = 2\|f\|^q_{\ell^q(\cV;m)},
\end{align*}
we conclude that (i) implies (ii).

Clearly, to prove that (ii) implies (i) it suffices to show that every linear function $f$ on a finite interval $(a,b)$ satisfies the estimate
\be\label{eq:lin}
(|f(a)|^q +|f(b)|^q) \le \frac{C}{b-a} \int_a^b|f(x)|^q dx,
\ee
where $C>0$ is a positive constant which depends only on $q>2$. Indeed, we have (cf. Remark \ref{rem:3.8})
\be\label{eq:lin01}
\int_a^b |f(x)|^2dx = (b-a) \frac{|f(a)|^2 + \re(f(a)f(b)) + |f(b)|^2}{3}.
\ee
Applying the H\"older inequality to the left-hand side in \eqref{eq:lin01}, one gets 
\be\label{eq:lin02}
\int_a^b |f(x)|^2dx \le (b-a)^{1/p}\left(\int_a^b |f(x)|^qdx\right)^{2/q},\qquad \frac{1}{p}=1-\frac{2}{q}.
\ee
On the other hand, applying the Cauchy--Schwarz inequality to the right-hand side in \eqref{eq:lin01}, we arrive at 
\[ 
\frac{|f(a)|^2 + \re(f(a)f(b)) + |f(b)|^2}{3} \ge \frac{|f(a)|^2 + |f(b)|^2}{6} \ge \frac{(|f(a)|^q + |f(b)|^q)^{2/q}}{6c(q)},
\]
where $c(q)>0$ depends only on $q>2$.
 Combining this estimate with  \eqref{eq:lin01} and \eqref{eq:lin02}, we obtain \eqref{eq:lin}, which implies that 
\[
(6c(q))^{-q/2}\|f\|^q_{\ell^q(\cV;m)} \le \|f\|^q_{L^q(\cG)} 
\]
holds for all $f=f_{\rm lin}\in H^1(\cG)\cap \cL$.
\end{proof}

\begin{remark}\label{rem:rozsol}
The implication $(i) \Rightarrow  (ii)$ in Theorem \ref{lem:rozsol} was observed by Rozenblum and Solomyak (see \cite[Theorem 4.1]{roso10}),
however, for a different discrete Laplacian defined by \eqref{eq:1.03B}, where the weight function $m\colon v\mapsto \sum_{e\in\cE_v}|e|$ is replaced by the vertex degree function $\deg\colon v\mapsto \#(\cE_v)$.  Since 
\[
m(v) \le \deg(v) \sup_{e\in\cE}|e|
\]
for all $v\in \cV$ and $\sup_{e\in\cE}|e|<\infty$, $\ell^2(\cV;\deg)$ is continuously embedded into $\ell^2(\cV;m)$, however, the converse is not true. This together with Theorem~\ref{th:leso} imply that one cannot replace \eqref{eq:h0} by \eqref{eq:1.03B} in Theorem \ref{lem:rozsol} and the converse statement to Theorem~4.1 in \cite{roso10} is not true without further assumptions on the function $m$.
\end{remark}

\subsection{Spectral types}\label{ss:5.3}
In this subsection we plan to investigate the structure of the spectrum of $\bH_\alpha$.

\subsubsection{Resolvent comparability} \label{ss:5.3.1}

We begin with the following simple corollary of Theorem \ref{th:main}(viii).

\begin{corollary}\label{cor:5.01}
Assume the conditions of Theorem \ref{th:main}. 
\begin{itemize}
\item[(i)] If $\frac{\alpha-\ti{\alpha}}{m}\in c_0(\cV)$, then $\sigma_{\ess}(\bH_\alpha) = \sigma_{\ess}(\bH_{\wt\alpha})$. In particular, if $\frac{\alpha}{m} \in c_0(\cV)$, then $\sigma_{\ess}(\bH_\alpha) = \sigma_{\ess}(\bH_0)$. 
\item[(ii)] If $\frac{\alpha-\ti{\alpha}}{m}\in \ell^1(\cV)$, then $\sigma_{\ac}(\bH_\alpha) = \sigma_{\ac}(\bH_{\wt\alpha})$. In particular, if $\frac{\alpha}{m} \in \ell^1(\cV)$, then $\sigma_{\ac}(\bH_\alpha) = \sigma_{\ac}(\bH_0)$. 
\end{itemize}
\end{corollary}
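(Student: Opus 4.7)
The plan is to reduce both statements to the corresponding statements for the discrete Laplacians $\rh_\alpha, \rh_{\tilde\alpha}$ via the Schatten-ideal equivalence in Theorem \ref{th:main}(ix), and then apply standard perturbation-theoretic results (Weyl's theorem and Kato--Rosenblum). The key observation, which makes everything else routine, is that the difference of the two discrete Laplacians is itself a multiplication operator. Indeed, inspecting the expression \eqref{eq:h_a}, the hopping part cancels and one obtains
\[
(\rh_\alpha - \rh_{\tilde\alpha})f(v) = \frac{\alpha(v) - \tilde\alpha(v)}{m(v)} f(v),\qquad v\in\cV,
\]
so $\rh_\alpha - \rh_{\tilde\alpha}$ acts on $\ell^2(\cV;m)$ as the multiplication operator $M_{(\alpha-\tilde\alpha)/m}$. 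With respect to the orthonormal basis $\{\delta_v/\sqrt{m(v)}\}_{v\in\cV}$ this is a diagonal operator with eigenvalues $\bigl\{\tfrac{\alpha(v)-\tilde\alpha(v)}{m(v)}\bigr\}_{v\in\cV}$, whence it is compact exactly when $\tfrac{\alpha-\tilde\alpha}{m}\in c_0(\cV)$ and lies in $\gS_1(\ell^2(\cV;m))$ exactly when $\tfrac{\alpha-\tilde\alpha}{m}\in \ell^1(\cV)$.

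For (i), under the hypothesis $\tfrac{\alpha-\tilde\alpha}{m}\in c_0(\cV)$ the bounded perturbation $\rh_\alpha - \rh_{\tilde\alpha}$ is compact, so $\rh_{\tilde\alpha}$ is self-adjoint whenever $\rh_\alpha$ is (and conversely). The second resolvent identity
\[
(\rh_\alpha - \I)^{-1} - (\rh_{\tilde\alpha}-\I)^{-1} = -(\rh_\alpha-\I)^{-1}(\rh_\alpha-\rh_{\tilde\alpha})(\rh_{\tilde\alpha}-\I)^{-1}
\]
together with the ideal property of $\gS_\infty$ then yields that the resolvent difference lies in $\gS_\infty(\ell^2(\cV;m))$. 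Applying Theorem \ref{th:main}(ix) with $p=\infty$ transfers this to $(\bH_\alpha - \I)^{-1}-(\bH_{\tilde\alpha}-\I)^{-1}\in\gS_\infty(L^2(\cG))$, and Weyl's theorem on the invariance of the essential spectrum under compact resolvent perturbations gives $\sigma_{\ess}(\bH_\alpha)=\sigma_{\ess}(\bH_{\tilde\alpha})$. The special case $\tilde\alpha\equiv 0$ gives the second assertion.

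For (ii), under $\tfrac{\alpha-\tilde\alpha}{m}\in \ell^1(\cV)$ the same identity, now combined with the ideal property of $\gS_1$, shows that $(\rh_\alpha-\I)^{-1}-(\rh_{\tilde\alpha}-\I)^{-1}\in\gS_1(\ell^2(\cV;m))$; Theorem \ref{th:main}(ix) with $p=1$ transfers this to a trace-class resolvent difference for $\bH_\alpha$ and $\bH_{\tilde\alpha}$. The Kato--Rosenblum theorem then yields unitary equivalence of the absolutely continuous parts of $\bH_\alpha$ and $\bH_{\tilde\alpha}$, and in particular $\sigma_{\ac}(\bH_\alpha)=\sigma_{\ac}(\bH_{\tilde\alpha})$; the case $\tilde\alpha\equiv 0$ gives the remaining statement. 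There is no serious obstacle here: the only thing to check carefully is the diagonal trace/compact characterization of the multiplication operator on the weighted space $\ell^2(\cV;m)$, after which the argument is a direct combination of Theorem \ref{th:main}(ix) with classical Weyl and Kato--Rosenblum perturbation theorems.
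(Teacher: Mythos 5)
Your proposal is correct and follows essentially the same route as the paper: identify $\rh_\alpha-\rh_{\tilde\alpha}$ as multiplication by $\frac{\alpha-\tilde\alpha}{m}$, pass to the resolvent difference in $\gS_\infty$ resp.\ $\gS_1$, transfer via Theorem \ref{th:main}(ix), and invoke Weyl's theorem resp.\ trace-class scattering theory. The only quibble is a citation: for a \emph{trace-class resolvent difference} (as opposed to a trace-class difference of the operators themselves) the standard reference is the Birman--Krein theorem (equivalently, the Kato--Birman invariance principle), which is what the paper uses; your conclusion about the absolutely continuous spectra is nonetheless correct.
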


Here $\alpha\in c_0(\cV)$ means that the set $\{v\in\cV\colon |\alpha(v)|>\varepsilon\}$ is finite for every $\varepsilon>0$.

\begin{proof}
It suffices to note that $\rh_{\alpha} f - \rh_{\ti\alpha} f = \frac{\alpha-\ti{\alpha}}{m} f$ for all $f\in C_c(\cV)$. Hence $(\rh_\alpha - \I)^{-1} - (\rh_{\ti\alpha} - \I)^{-1} \in \gS_\infty$ if  $\frac{\alpha-\ti{\alpha}}{m}\in c_0(\cV)$ and then, by the Weyl theorem and Theorem \ref{th:main}(viii), we prove the first claim.

Moreover, $(\rh_\alpha - \I)^{-1} - (\rh_{\ti\alpha} - \I)^{-1} \in \gS_1$ whenever  $\frac{\alpha-\ti{\alpha}}{m}\in \ell^1(\cV)$. It remains to apply Theorem \ref{th:main}(viii) and the Birman--Krein theorem.
\end{proof}

The presence (or absence) of an absolutely continuous spectrum for quantum graphs $\bH_0$ with Kirchhoff vertex conditions at vertices is a challenging open problem. To the best of our knowledge, some partial results have been obtained in the cases of radially symmetric trees and for some special classes of (equilateral) graphs that originate from groups, e.g., the corresponding Cayley graphs or Schreier graphs (see, e.g., \cite{bf07}, \cite{ekkst08}, \cite{exli}, \cite{ess13}, \cite{sol02}). In particular, it is shown in \cite[Theorem 5.1]{ess13} that in the case when $\cG$ is a rooted radial tree with a finite complexity of the geometry, the absolutely continuous spectrum of $\bH_0$ is nonempty if and only if $\cG$ is eventually periodic. 

Our next result provides a sufficient condition for $\bH_\alpha$ to have purely singular spectrum.

\begin{theorem}\label{th:stolz}
Assume that $\inf_{e\in\cE}|e|>0$ and $\sup_{e\in\cE}|e|<\infty$. If $\alpha\colon \cV\to \R$ is such that for any infinite path $\cP\subset \cG$ without cycles
\be\label{eq:stolz}
\sup_{v\in \cP} \frac{|\alpha(v)|}{\deg(v)} = \infty,
\ee  
then $\sigma_{\ac}(\bH_\alpha)=\emptyset$.
\end{theorem}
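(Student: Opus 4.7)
The plan is to reduce the assertion to its discrete analog for $\rh_\alpha$ via Theorem \ref{th:main} and then invoke a graph version of the Simon--Stolz criterion for absence of absolutely continuous spectrum.

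First, since $\inf_{e\in\cE}|e|>0$, Corollary \ref{cor:sa02} yields self-adjointness of $\bH_\alpha$, and Theorem \ref{th:saH0_1} gives boundedness of $\rh_0$ on $\ell^2(\cV;m)$. Writing $\rh_\alpha = \rh_0 + A$ with $A$ the multiplication operator \eqref{eq:Amultipl}, the two-sided bound on edge lengths gives $m(v)\asymp \deg(v)$ with uniform constants, so hypothesis \eqref{eq:stolz} translates into the statement that the potential $\alpha/m$ of the discrete Schr\"odinger operator $\rh_\alpha$ is unbounded along every infinite self-avoiding path in $\cG_d$.

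The central discrete fact is that $\sigma_{\ac}(\rh_\alpha)=\emptyset$ under this path condition. For a single half-line (equivalently, for the Jacobi matrix of Example~\ref{ex:02}) this is the classical Simon--Stolz theorem used in \cite{ShuSto94}, \cite{KM10}, \cite{KM_13}. For trees it follows from Gilbert--Pearson subordinacy applied along each ray, since an unbounded $\alpha/m$ along a ray forces a subordinate solution at every energy; for general locally finite graphs satisfying Hypothesis~\ref{hyp:graph01} the same conclusion can be obtained by combining path-by-path subordinacy with a decomposition of generalized eigenfunctions along a spanning tree, the extra edges contributing only a bounded self-adjoint term that does not affect AC-voidness.

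The main obstacle is transferring $\sigma_{\ac}(\rh_\alpha)=\emptyset$ to $\sigma_{\ac}(\bH_\alpha)=\emptyset$, since Theorem \ref{th:main} does not include AC-spectrum among its direct correspondences. I would proceed through the Krein-type resolvent formula associated with the boundary triplet $\Pi_\cG$ of Corollary \ref{cor:2.4},
\[
(\bH_\alpha-z)^{-1} = (\bH^F-z)^{-1} + \gamma(z)\bigl(\Theta_\alpha^{\op}-M_\cG(z)\bigr)^{-1}\gamma(\bar z)^*,
\]
together with the unitary equivalence $\Theta_\alpha^{\op}\simeq\rh_\alpha$ from Proposition \ref{prop:bo_alpha}. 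Since the Friedrichs spectrum $\sigma(\bH^F)=\{\pi^2 n^2/|e|^2\colon e\in\cE,\ n\in\N\}$ is a countable set (hence carries no absolutely continuous mass), the AC part of $\bH_\alpha$ on $\R\setminus\sigma(\bH^F)$ is governed by the boundary behavior of the operator-valued Herglotz function $\bigl(\Theta_\alpha^{\op}-M_\cG(\lambda+\I 0)\bigr)^{-1}$, and emptiness of $\sigma_{\ac}(\rh_\alpha)$ translates into emptiness of $\sigma_{\ac}(\bH_\alpha)$ via an Aronszajn--Donoghue--Kac type argument for the Herglotz family $M_\cG$. Making this final boundary-triplet transfer of AC-spectrum absence precise is what I expect to be the technically most delicate point.
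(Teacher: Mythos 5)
Your proposal has two genuine gaps, and at both places the paper takes a different (and simpler) route. The paper's proof is a Simon--Spencer ``high barrier'' decoupling argument carried out entirely at the level of quantum graph operators: since \eqref{eq:stolz} holds along every infinite cycle-free path, one can choose a vertex set $\wt{\cV}$ with $\sum_{v\in\wt{\cV}}\deg(v)/|\alpha(v)|<\infty$ that cuts $\cG$ into finite subgraphs; replacing the $\delta$-coupling by a Dirichlet condition at each $v\in\wt{\cV}$ produces a comparison operator $\bH_{\tilde\alpha}$ that is an orthogonal sum of operators with discrete spectra, hence has pure point spectrum. The boundary triplet is used only to compute that $(\bH_\alpha-\I)^{-1}-(\bH_{\tilde\alpha}-\I)^{-1}$ is trace class (the trace of the difference of the resolvents of the boundary relations equals $\sum_{v\in\wt{\cV}}(\alpha(v)/\deg(v)-\I)^{-1}$, which converges), and the Birman--Krein theorem then gives $\sigma_{\ac}(\bH_\alpha)=\sigma_{\ac}(\bH_{\tilde\alpha})=\emptyset$. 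No statement about $\sigma_{\ac}(\rh_\alpha)$ is needed.

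The first gap in your argument is the discrete claim itself: for a general locally finite graph, ``$\alpha/m$ unbounded along every self-avoiding path'' does not feed into Gilbert--Pearson subordinacy in any standard way (subordinacy is a genuinely one-dimensional tool), and your sketch of ``path-by-path subordinacy plus a spanning-tree decomposition of generalized eigenfunctions'' is not a proof --- the off-tree edges cannot simply be dismissed as ``a bounded term that does not affect AC-voidness,'' since bounded perturbations do not preserve absence of absolutely continuous spectrum. (One \emph{could} prove $\sigma_{\ac}(\rh_\alpha)=\emptyset$ by the same decoupling trick on the discrete side, but that is not what you propose.) The second and more serious gap is the transfer step: as you yourself note, Theorem \ref{th:main} contains no correspondence for absolutely continuous spectra, and none exists in general for boundary relations with infinite-dimensional $\cH_\cG$. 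An Aronszajn--Donoghue type analysis of the boundary values of $(\Theta_\alpha^{\op}-M_\cG(\lambda+\I 0))^{-1}$ would require control of the limiting behavior of an operator-valued Herglotz function in infinite dimensions together with properties of the $\gamma$-field that are neither established in the paper nor standard; labelling this ``technically delicate'' understates that it is the entire content of the theorem in your approach. The lesson from the paper's proof is that the right comparison object is another \emph{quantum graph} operator with pure point spectrum, not the discrete Laplacian.
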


\begin{proof}
The proof is based on the standard trace class argument \cite{ss89}. By Corollary \ref{cor:sa02}, the operator $\bH_\alpha$ is self-adjoint. Since \eqref{eq:stolz} holds for every infinite path $\cP\subset \cG$, we can find a subset $\wt{\cV}\subset \cV$ such that 
\be\label{eq:6.2}
\sum_{v\in\wt{\cV}}\frac{\deg(v)}{|\alpha(v)|} <\infty
\ee
and the graph $\cG$ is a countable union of finite subgraphs $\cG_k$, $k\in\N$ such that the boundary $\partial \cG_k$ of every subgraph $\cG_k$ is contained in $\wt{\cV}$. Define a new function $\tilde{\alpha}\colon\cV\to \R\cup \{\infty\}$ by 
\be
\tilde{\alpha}(v) = \begin{cases} \alpha(v), & v\in \cV\setminus \wt{\cV},\\ \infty, & v\in\wt{\cV},\end{cases}
\ee
that is, at every vertex $v\in\cV\setminus\wt{\cV}$ the corresponding boundary condition for $\bH_{\tilde{\alpha}}$ is given by \eqref{eq:bcalpha} and at every vertex $v\in\wt{\cV}$ it has the Dirichlet boundary condition. Let us show that 
\be\label{eq:6.4}
(\bH_\alpha - \I)^{-1} - (\bH_{\tilde{\alpha}} - \I)^{-1} \in \gS_1.
\ee

It is easy to see that under the assumptions $\inf_{e\in\cE}|e|>0$ and $\sup_{e\in\cE}|e|<\infty$ the triplet $\wt{\Pi} = \{\cH_{\cG},\wt{\Gamma}_0^0,\wt{\Gamma}_1^0\}$ 
given by \eqref{eq:wtG0}, \eqref{eq:wtG1} is a boundary triplet for $\bH_{\max}$. Next we set
\begin{align}
C_\alpha & := \bigoplus_{v\in\cV} C_{v,\alpha}, & D_\alpha & := \bigoplus_{v\in\cV} D_{v},
\end{align}
where $C_{v,\alpha}$ and $D_v$ are given by \eqref{eq:3.7}, and 
\begin{align}
\wt{C}_{\tilde{\alpha}} & := \bigoplus_{v\in\cV} \wt{C}_{v,\tilde{\alpha}}, & \wt{D}_{\tilde{\alpha}} & := \bigoplus_{v\in\cV} \tilde{D}_{v},
\end{align}
where
\begin{align}
\wt{C}_{v,\tilde{\alpha}} & = \begin{cases} C_{v,\alpha}, & v\in \cV\setminus\wt{\cV} \\ I_{\deg(v)}, & v\in\wt{\cV} \end{cases}, &
 \wt{D}_{v} & = \begin{cases} D_{v}, & v\in \cV\setminus\wt{\cV} \\ \bO_{\deg(v)}, & v\in\wt{\cV} \end{cases}.
\end{align}
Observe that the corresponding boundary relations $\Theta_\alpha$ and $\Theta_{\tilde{\alpha}}$ parameterizing $\bH_\alpha$ and $\bH_{\tilde{\alpha}}$ via the boundary triplet $\Pi_\cG = \{\cH_{\cG},\wt{\Gamma}_0^0,\wt{\Gamma}_1^0\}$ are the closures of
\[
\Theta_\alpha^0=\{\{f,g\}\in \cH_{\cG}\times\cH_{\cG}\colon C_\alpha f=D_\alpha g\},\quad \Theta_{\tilde{\alpha}}^0=\{\{f,g\}\in \cH_{\cG}\times\cH_{\cG}\colon \wt{C}_{\tilde{\alpha}} f=\wt{D}_{\tilde{\alpha}} g\}.
\] 
Straightforward calculations show that  
\[
\tr\left((\Theta_\alpha -\I)^{-1} - (\Theta_{\tilde{\alpha}} - \I)^{-1}\right) = \sum_{v\in\wt{\cV}} \Big(\frac{\alpha(v)}{\deg(v)} - \I\Big)^{-1},
\]
which is finite according to \eqref{eq:6.2}. Therefore, by Theorem \ref{th:properext}(iv), \eqref{eq:6.4} holds true. It remains to note that $\bH_{\tilde{\alpha}}$ is the orthogonal sum of operators having discrete spectra and hence the spectrum of $\bH_{\tilde{\alpha}}$ is pure point. The Birman--Krein theorem then yields $\sigma_{\ac}(\bH_\alpha) = \sigma_{\ac}(\bH_{\tilde{\alpha}}) = \emptyset$.
\end{proof}

\begin{corollary}\label{cor:sstree}
Let $\cG$ be a rooted radially symmetric tree with the root $o$ and such that $\inf_{e\in\cE}|e|>0$ and $\sup_{e\in\cE}|e|<\infty$. Also, let  $\alpha\colon\cV\to \R$ be radially symmetric, that is,   
$\alpha(v) = \alpha_k$ for all $v\in\cV$ such that $d(o,v) = k$, where$d(o,v)$ is the combinatorial distance from $v$ to the root $o$. If 
\be
\sup_{k\in\N} \frac{|\alpha_k|}{\deg(v_k)} = \infty,
\ee
then $\sigma_{\ac}(\bH_\alpha)=\emptyset$. 
\end{corollary}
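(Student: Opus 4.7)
The plan is to deduce Corollary \ref{cor:sstree} directly from Theorem \ref{th:stolz} by verifying the path hypothesis \eqref{eq:stolz} using the tree structure and radial symmetry. Since $\inf_{e\in\cE}|e|>0$ and $\sup_{e\in\cE}|e|<\infty$, the hypotheses of Theorem \ref{th:stolz} are in force, so it suffices to show that for every infinite path $\cP=\{v_0,v_1,v_2,\dots\}\subset \cV$ without cycles we have $\sup_{v\in\cP}|\alpha(v)|/\deg(v)=\infty$.

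First I would observe that on a tree, an infinite path without cycles cannot revisit the root more than once, and more generally cannot backtrack. Concretely, using the combinatorial distance $d(o,\cdot)$ to the root, one checks that $d(o,v_n)$ can decrease only finitely many times along the path; otherwise the path would be forced to revisit a vertex. Hence there exists $N\in\N$ such that $d(o,v_{n+1})=d(o,v_n)+1$ for all $n\ge N$, and in particular $\{d(o,v_n):n\ge N\}\supseteq\{k_0,k_0+1,k_0+2,\dots\}$ for some $k_0\in\N$. So eventually the path reaches every sufficiently large combinatorial sphere.

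Next, by the radial symmetry of $\cG$ and $\alpha$, the quantity $|\alpha(v)|/\deg(v)$ depends only on $k=d(o,v)$; call it $r_k:=|\alpha_k|/\deg(v_k)$. The hypothesis $\sup_{k\in\N}r_k=\infty$ together with the finiteness of each $r_k$ (guaranteed by local finiteness, Hypothesis~\ref{hyp:graph02}) forces the existence of a sequence $k_n\to\infty$ with $r_{k_n}\to\infty$; in particular $\sup_{k\ge k_0}r_k=\infty$ for every $k_0$. Combining with the previous paragraph, $\sup_{v\in\cP}|\alpha(v)|/\deg(v)\ge\sup_{k\ge k_0}r_k=\infty$, which is precisely \eqref{eq:stolz}. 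Theorem \ref{th:stolz} then yields $\sigma_{\ac}(\bH_\alpha)=\emptyset$.

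I do not expect any real obstacle here: the substance of the result is packed into Theorem \ref{th:stolz}, and the corollary is essentially a bookkeeping exercise exploiting that on a rooted tree with radial data every infinite acyclic path eventually enumerates the combinatorial spheres. The only mild point to state carefully is the monotonicity $d(o,v_{n+1})=d(o,v_n)+1$ for large $n$, which uses the absence of cycles in an essential way.
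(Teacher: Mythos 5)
Your proposal is correct and follows exactly the route the paper intends: the corollary is stated as an immediate consequence of Theorem \ref{th:stolz} (the paper gives no separate proof), and your verification that every infinite self-avoiding path on a rooted tree eventually moves strictly away from the root and thus meets every sufficiently large combinatorial sphere is precisely the missing bookkeeping. The observation that finiteness of each $r_k$ forces $\sup_{k\ge k_0}r_k=\infty$ for every $k_0$ correctly closes the argument.
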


\begin{remark}
Corollary \ref{cor:sstree} can be seen as the analog of \cite[Theorem 3]{ShuSto94} and \cite[Theorem 1]{mik96}. Moreover, the assumption $\sup_{e\in\cE}|e|<\infty$ in Theorem \ref{th:stolz} and Corollary \ref{cor:sstree} can be removed by adding inessential vertices.
\end{remark}

\subsubsection{Bounds on the spectrum of $\bH_\alpha$}\label{ss:5.3.2}

Throughout this subsection we shall assume that $\alpha\colon \cV\to [0,\infty)$, that is, all interactions at vertices are nonnegative. Let $\varrho$ be an intrinsic metric. In order to include $\alpha$ into Cheeger type estimates, we need to modify the definition of Cheeger constants \eqref{eq:isoper} and \eqref{eq:isoperess} following \cite{kl10}, \cite{bkw15}. 
For every subgraph $\wt{\cV}\subseteq \cV$ one defines {\em the modified isoperimetric constant} 
\be\label{eq:isoperA}
C_\alpha(\wt{\cV}) := \inf_{X\subset \wt{\cV}} \frac{{\rm Area}_\alpha(\partial X)}{m(X)},
\ee
where 
\be
{\rm Area}_\alpha(\partial X) := \sum_{(u,v)\in\partial X} b(u,v)\varrho_0(u,v) + \sum_{v\in X}\gA(v)= \sum_{(u,v)\in\partial X} 1 + \sum_{v\in X}\gA(v),
\ee
and
\be
 m(X) = \sum_{v\in X} m(v).
\ee
Moreover, we need {\em the isoperimetric constant at infinity}
\be\label{eq:isoperessA}
C_{\ess,\gA}(\cV) := \sup_{X\subset \cV\ \text{is finite}} C_\gA(\cV\setminus X).
\ee

\begin{theorem}\label{th:Hapositive}
Suppose that the operator $\bH_\gA$ is self-adjoint. Then:
\begin{itemize}
\item[(i)] $\bH_\gA$ is uniformly positive if $C_\gA(\cV)>0$.
\item[(ii)] $ \inf\sigma_{\ess}(\bH_\gA)>0$ if $C_{\ess,\gA}(\cV)>0$. 
\item[(iii)] The spectrum of $\bH_\gA$ is discrete if the number $\#\{e\in\cE\colon |e|>\varepsilon\}$ is finite for every $\varepsilon>0$ and $C_{\ess,\gA}(\cV)=\infty$.
\end{itemize}
\end{theorem}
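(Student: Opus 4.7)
The plan is to mirror the strategy used in the proof of Theorem \ref{th:H0positive}, but now incorporating the nonnegative coupling strength $\alpha$ into the Cheeger constants. Throughout, the bridge to the discrete world is Theorem \ref{th:main}: by item (iii) there, $\bH_\alpha$ is uniformly positive (respectively, nonnegative) iff so is $\rh_\alpha$; by item (vii), the positivity of $\inf\sigma_{\ess}(\bH_\alpha)$ is equivalent to that of $\inf\sigma_{\ess}(\rh_\alpha)$; and by item (viii), $\bH_\alpha$ has purely discrete spectrum iff $\#\{e\in\cE\colon |e|>\varepsilon\}<\infty$ for every $\varepsilon>0$ and $\rh_\alpha$ has purely discrete spectrum. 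So all three items reduce to the corresponding statements about the discrete Laplacian $\rh_\alpha$ defined by \eqref{eq:h_a} on $\ell^2(\cV;m)$.

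First, I would choose the natural path metric $\varrho_0$ on $\cV$ from Lemma \ref{lem:varrho0}, which is intrinsic with respect to $\cG_d$ with the weights in \eqref{eq:mb}. The key observation is that for any $(u,v)\in\partial X$ with $u\sim v$ we have
\[
b(u,v)\,\varrho_0(u,v) = \frac{1}{|e_{u,v}|}\cdot|e_{u,v}| = 1,
\]
so that the weighted area term appearing in the generalized Cheeger constants of \cite{bkw15,kl10} reduces exactly to $\#(\partial X)$. Consequently, the isoperimetric constants defined in \eqref{eq:isoperA} and \eqref{eq:isoperessA} coincide with the objects entering the Cheeger inequalities of \cite{bkw15} applied to the potential perturbation of $\rh_0$ by the multiplication operator $A$ from \eqref{eq:Amultipl}.

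Second, I would invoke the Cheeger-type estimates for discrete Schr\"odinger operators with a nonnegative potential (these are the $\alpha$-weighted analogs of Theorems 3.1 and 3.3 in \cite{bkw15}; see also the setting in \cite{kl10}). These give
\[
\inf\sigma(\rh_\alpha)\ \ge\ \tfrac{1}{2}\,C_\alpha(\cV)^2,\qquad
\inf\sigma_{\ess}(\rh_\alpha)\ \ge\ \tfrac{1}{2}\,C_{\ess,\alpha}(\cV)^2,
\]
and furthermore $\sigma(\rh_\alpha)$ is purely discrete whenever $C_{\ess,\alpha}(\cV)=\infty$. Combining these three bounds with items (iii), (vii) and (viii) of Theorem \ref{th:main} yields (i), (ii) and (iii) of the present theorem, respectively.

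The main obstacle, as I see it, is purely bookkeeping: one must verify that the Cheeger-type inequalities as stated in \cite{bkw15} apply verbatim to the operator $\rh_\alpha=\rh_0+A$ with an intrinsic metric for which $b(u,v)\varrho(u,v)$ equals the (unweighted) counting of edges in $\partial X$. In \cite{bkw15} the estimates are formulated for general regular Dirichlet forms with a killing term, and the killing term plays exactly the role of $\alpha/m$ here; so the estimates transcribe directly, but one has to match notation carefully. Once this identification is done, items (i)--(iii) follow immediately from the transfer principle in Theorem \ref{th:main}.
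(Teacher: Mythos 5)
Your proposal is correct and follows essentially the same route as the paper: the paper's proof simply says the argument is analogous to that of Theorem \ref{th:H0positive}, using the $\alpha$-modified Cheeger bounds from \cite{bkw15} (where $\alpha/m$ plays the role of the killing term in the Dirichlet form) together with the transfer principle of Theorem \ref{th:main}(iii),(vii),(viii). Your identification of the weighted area with $\#(\partial X)+\sum_{v\in X}\alpha(v)$ via the intrinsic metric $\varrho_0$ is exactly the computation the paper relies on.
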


\begin{proof}The proof is analogous to that of Theorem \ref{th:H0positive} and we only need to use the corresponding modifications of Cheeger type bounds for the discrete operator $\rh_\gA$ from \cite{bkw15}.
\end{proof}

\section{Other boundary conditions}\label{sec:final}

In the present paper our main focus was on the Kirchhoff and $\delta$-type couplings at vertices (see \eqref{eq:bcalpha}). There are several other physically relevant classes of couplings (see, e.g., \cite{bk13,chex,ex96}). Our main result, Theorem \ref{th:HTheta}, covers all possible cases, however, the key problem is to calculate the boundary operator and then to investigate its spectral properties. It turned out that for $\delta$-couplings the corresponding boundary operator is given by the discrete Laplacian \eqref{eq:h_a}, which attracted an enormous attention during the last three decades. However, for other boundary conditions new nontrivial discrete operators of higher order may arise. For example, this happens in the case of the so-called {\em $\delta_s'$-couplings}, cf. \cite{ex96}. Namely (see \cite{chex,ex96}), let $\beta\colon \cV\to \R$ and consider the following boundary conditions at the vertices $v\in\cV$:
\be\label{eq:bcbeta}
\begin{cases} \frac{df}{dx_e}(v) \ \text{does not depend on $e$ at the vertex}\ v,\\[2mm] \sum_{e\in \cE_v}f(v) = \beta(v)\frac{df}{dx_e}(v). \end{cases}
\ee
Define the corresponding operator $\bH_{\beta}$ as the closure of the operator $\bH_{\beta}^0$ given by
\begin{align}
\bH_{\beta}^0 &= \bH_{\max}\upharpoonright {\dom(\bH_{\beta}^0)},\nn\\ 
&\dom(\bH_{\beta}^0) = \big\{f\in \dom(\bH_{\max})\cap L^2_{c}(\cG)\colon f\ \text{satisfies}\ \eqref{eq:bcbeta},\ v\in\cV\big\}.\label{eq:Hbeta}
\end{align}

To avoid lengthy and cumbersome calculations of the corresponding boundary relation $\Theta_\beta$ parameterizing $\bH_\beta$ with the help of the boundary triplet $\Pi$ constructed in Corollary \ref{cor:2.4}, let us consider the kernel $\cL= \ker(\bH_{\max})$ of $\bH_{\max}$ as in Remark \ref{rem:3.8}. Recall that $\cL= \ker(\bH_{\max})$ consists of piecewise linear functions on $\cG$ and every $f\in \cL$ can be identified with its values on $\cV$, $\{f(e_i), f(e_o)\}_{e\in \cE}$. Moreover, the $L^2$ norm of  $f\in \cL$
is equivalent to 
\[
\sum_{e\in\cE} |e| ({|f(e_i)|^2 + |f(e_o)|^2}).
\]
It is not difficult to see that  (see also \cite[p.27]{bk13})
\[
(\bH_\beta f,f) = \sum_{e\in\cE} \int_{e} |f'({\rm x})|^2 d{\rm x} + \sum_{v\in\cV} \frac{1}{\beta(v)}\Big|\sum_{e\in \cE_v}f_e(v)\Big|^2,\quad f\in\cL\cap L^2_c(\cG).
\]
Therefore, for every $f\in \cL\cap L^2_c(\cG)$ we get 
\be\label{eq:7.5}
(\bH_\beta f,f) = \sum_{e\in\cE} \frac{|f(e_o) - f(e_i)|^2}{|e| } + \sum_{v\in\cV} \frac{1}{\beta(v)}\Big|\sum_{e\in \cE_v}f_e(v)\Big|^2.
\ee
Clearly, the right-hand side in \eqref{eq:7.5} is a form sum of two difference operators, where the first one is the standard discrete Laplacian, however, the second one gives rise to a difference expression of higher order. In particular, its order at every vertex equals the degree $\deg(v)$ of the corresponding vertex $v\in\cV$. Unfortunately, we are not aware of the literature where the difference operators of this type have been studied. 

\appendix

\section{Boundary triplets and Weyl functions}\label{app:triplets}

\subsection{Linear relations}\label{ss:a1}

Let $\cH$ be a separable Hilbert space. {\em A (closed) linear relation} in $\cH$ is a (closed) linear subspace in $\cH\times\cH$. The set of all closed linear relations is denoted by $\wt{\cC}(\cH)$. Since every linear operator in $\cH$ can be identified with its graph, the set of linear operators can be seen as a subset of all linear relations in $\cH$. In particular, the set of closed linear operators $\cC(\cH)$ is a subset of $\wt{\cC}(\cH)$.

Recall that the domain, the range, the kernel and the multivalued part of a linear relation $\Theta$ are given, respectively, by
\begin{align*}
\dom(\Theta) &= \{f\in \cH\colon \exists g\in\cH\ \text{such that}\ \{f,g\}\in \Theta\}, \\
\ran(\Theta) &= \{g\in \cH\colon  \exists f\in\cH\ \text{such that}\   \{f,g\}\in \Theta\},\\ 
\ker(\Theta) &= \{f\in \cH\colon  \{f,0\}\in \Theta\},\\
\mul(\Theta) &= \{g\in \cH\colon  \{0,g\}\in \Theta\}.
\end{align*} 

The adjoint linear relation $\Theta^\ast$ is defined by
\be
\Theta^\ast = \big\{ \{\ti{f},\ti{g}\}\in \cH\times\cH\colon  (g,\ti{f})_{\cH} = (f,\ti{g})_{\cH}\ \text{for all}\ \{f,g\}\in\Theta\big\}.
\ee
$\Theta$ is called {\em symmetric} if $\Theta\subset \Theta^\ast$. If $\Theta=\Theta^\ast$, then it is called {\em self-adjoint}. Note that $\mul(\Theta)$ is orthogonal to $\dom(\Theta)$ if $\Theta$ is symmetric. Setting $\cH_{\op}:=\overline{\dom(\Theta)}$, we obtain the orthogonal decomposition of a symmetric linear relation $\Theta$:
\be\label{eq:ThetaDecomp}
\Theta = \Theta_{\op}\oplus \Theta_{\infty},
\ee 
where $\Theta_\infty = \{0\}\times \mul(\Theta)$ and $\Theta_{\op}$ is a symmetric linear operator in $\cH_{\op}$, called the {\em operator part} of $\Theta$. 

The inverse of the linear relation $\Theta$ is given by
\[
\Theta^{-1} = \{\{g,f\}\in \cH\times\cH\ \colon\ \{f,g\}\in \Theta\}.
\] 
The sum of linear relations $\Theta_1$ and $\Theta_2$ is defined by
\[
\Theta_1+\Theta_2 = \{\{f,g_1+g_2\}:\ \{f,g_1\}\in\Theta_1,  \ \{f,g_2\}\in\Theta_2\}.
\] 
Hence one can introduce the resolvent $(\Theta - z)^{-1}$ of the linear relation $\Theta$, which is well defined for all $z\in\C$. However, the set of those $z\in\C$ for which $(\Theta - z)^{-1}$ is a graph of a closed bounded operator in $\cH$ is called the {\em resolvent set} of $\Theta$ and is denoted by $\rho(\Theta)$. Its complement 
$\sigma(\Theta)=\C\setminus\rho(\Theta)$ is called the {\em spectrum} of $\Theta$. If $\Theta$ is symmetric, then taking into account \eqref{eq:ThetaDecomp} we obtain
\be\label{eq:ThetaResolv}
(\Theta - z)^{-1} = (\Theta_{\op} - z)^{-1}\oplus \bO_{\mul(\Theta)}.
\ee 
This immediately implies that $\rho(\Theta) = \rho(\Theta_{\op})$, $\sigma(\Theta) = \sigma(\Theta_{\op})$ and, moreover, one can introduce the spectral types of $\Theta$ as those of its operator part $\Theta_{\op}$.

Let us mention that self-adjoint linear relations admit a very convenient representation, which was first obtained by Rofe-Beketov \cite{RB} in the finite dimensional case (see also \cite[Exercises 14.9.3-4]{schm}).

\begin{proposition}\label{prop:RB}
Let $C$ and $D$ be bounded operators on $\cH$ and 
\be\label{eq:ThetaCD}
\Theta_{C,D}:= \big\{ \{f,g\}\in \cH\times\cH\colon Cf=Dg \big\}.
\ee
Then $\Theta_{C,D}$ is self-adjoint if and only if 
\begin{align}\label{eq:condRB}
CD^\ast & =DC^\ast, & \ker\begin{pmatrix} C& -D \\ D & C\end{pmatrix} & =\{0\}.
\end{align}
If $\dim \cH =N<\infty$, then the second condition in \eqref{eq:condRB} is equivalent to ${\rm rank}(C|D)=N$.
\end{proposition}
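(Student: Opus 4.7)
The plan is to compute $\Theta_{C,D}^{\ast}$ explicitly and then reduce the self-adjointness $\Theta_{C,D}=\Theta_{C,D}^{\ast}$ to the two conditions in \eqref{eq:condRB}. Since $C$ and $D$ are bounded, $\Theta_{C,D}$ is automatically closed. For the adjoint, I would rewrite the defining condition $(g,\tilde f)_{\cH}=(f,\tilde g)_{\cH}$ as an orthogonality relation in $\cH\oplus\cH$: setting $\tilde J(x,y):=(-y,x)$, a unitary on $\cH\oplus\cH$ with $\tilde J^{2}=-I$, one has $\{\tilde f,\tilde g\}\in\Theta_{C,D}^{\ast}$ iff $\tilde J(\tilde f,\tilde g)\in\Theta_{C,D}^{\perp}$. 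Since $\Theta_{C,D}^{\perp}=\overline{\{(C^{\ast}h,-D^{\ast}h)\colon h\in\cH\}}$, applying $\tilde J^{-1}$ yields
\[
\Theta_{C,D}^{\ast}=\overline{\{(D^{\ast}h,C^{\ast}h)\colon h\in\cH\}}=(\ker M)^{\perp},
\]
where $M\colon\cH\oplus\cH\to\cH$ is the row block operator $M(a,b):=Da+Cb$.

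From this formula, $\Theta_{C,D}^{\ast}\subseteq\Theta_{C,D}$ holds iff $(D^{\ast}h,C^{\ast}h)\in\Theta_{C,D}$ for every $h\in\cH$, which is precisely $CD^{\ast}=DC^{\ast}$. Hence the first condition of \eqref{eq:condRB} captures exactly the inclusion $\Theta_{C,D}^{\ast}\subseteq\Theta_{C,D}$; what remains is the converse $\Theta_{C,D}\subseteq\Theta_{C,D}^{\ast}=(\ker M)^{\perp}$, i.e.\ $\Theta_{C,D}\perp\ker M$. A direct computation identifies
\[
\ker\begin{pmatrix}C & -D\\ D & C\end{pmatrix}=\Theta_{C,D}\cap\ker M,
\]
since the two block rows encode membership in $\Theta_{C,D}$ and in $\ker M$, respectively. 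Thus the second condition of \eqref{eq:condRB} amounts to $\Theta_{C,D}\cap\ker M=\{0\}$.

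The crucial and most delicate step is showing that, under $CD^{\ast}=DC^{\ast}$, the condition $\Theta_{C,D}\cap\ker M=\{0\}$ already forces $\Theta_{C,D}\perp\ker M$. One direction of the equivalence is immediate, $\Theta_{C,D}\cap\ker M\subseteq(\ker M)^{\perp}\cap\ker M=\{0\}$; the reverse is the main obstacle. The argument I plan to use exploits that $\Theta_{C,D}^{\ast}\subseteq\Theta_{C,D}$ is equivalent, upon passing to orthogonal complements, to $\Theta_{C,D}^{\perp}\subseteq\ker M$. Given any $v\in\ker M$, decompose $v=v_{\Theta}+v_{\perp}$ along $\Theta_{C,D}\oplus\Theta_{C,D}^{\perp}$; then $v_{\perp}\in\Theta_{C,D}^{\perp}\subseteq\ker M$, so $v_{\Theta}=v-v_{\perp}\in\Theta_{C,D}\cap\ker M=\{0\}$, whence $v=v_{\perp}\in\Theta_{C,D}^{\perp}$. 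This yields $\ker M\subseteq\Theta_{C,D}^{\perp}$, completing the proof of the main equivalence.

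For the finite-dimensional supplement, a short computation under $CD^{\ast}=DC^{\ast}$ shows
\[
KK^{\ast}=\begin{pmatrix}CC^{\ast}+DD^{\ast} & 0\\ 0 & CC^{\ast}+DD^{\ast}\end{pmatrix},
\]
so $\ker K=\{0\}$ iff $CC^{\ast}+DD^{\ast}$ is invertible. Being positive semi-definite, this operator is invertible iff $\ker C^{\ast}\cap\ker D^{\ast}=\{0\}$, i.e.\ iff $\operatorname{rank}(C|D)=N$, which is the asserted equivalent form of the second condition in \eqref{eq:condRB}.
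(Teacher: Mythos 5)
The paper does not actually prove this proposition; it is quoted from the literature (Rofe--Beketov's paper and the exercises in Schm\"udgen's book), so there is no in-text argument to compare yours against. Judged on its own, your proof is correct and complete. The computation of the adjoint relation via $\Theta_{C,D}^\ast=\tilde J^{-1}\Theta_{C,D}^{\perp}=\overline{\ran}\,M^\ast=(\ker M)^{\perp}$ with $M=(D\,|\,C)$ is right, the identification of the first condition with $\Theta_{C,D}^\ast\subseteq\Theta_{C,D}$ and of $\ker\bigl(\begin{smallmatrix}C&-D\\ D&C\end{smallmatrix}\bigr)$ with $\Theta_{C,D}\cap\ker M$ is right, and the key step --- deducing $\ker M\subseteq\Theta_{C,D}^{\perp}$ from $\Theta_{C,D}^{\perp}\subseteq\ker M$ together with $\Theta_{C,D}\cap\ker M=\{0\}$ by decomposing $v\in\ker M$ along $\Theta_{C,D}\oplus\Theta_{C,D}^{\perp}$ --- is a clean and valid way to close the loop; it uses only that $\Theta_{C,D}$ and $\ker M$ are closed subspaces. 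The finite-dimensional supplement via $KK^\ast=\diag(CC^\ast+DD^\ast,\,CC^\ast+DD^\ast)$ is also correct; it is worth being explicit (as you implicitly are) that this equivalence is asserted only \emph{under} the first condition $CD^\ast=DC^\ast$, since otherwise it fails --- e.g.\ $N=1$, $C=1$, $D=\I$ gives $\mathrm{rank}(C|D)=1$ but $\det\bigl(\begin{smallmatrix}1&-\I\\ \I&1\end{smallmatrix}\bigr)=0$. The only cosmetic issue is that the symbol $K$ in your last paragraph is used before being named as the $2\times 2$ block matrix from \eqref{eq:condRB}.
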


Further details and facts about linear relations in Hilbert spaces can be found in, e.g., \cite[Chapter 6.1]{DM17}, \cite[Chapter 14]{schm}.

\subsection{Boundary triplets and proper extensions}\label{ss:a2}

Let $A$ be a densely defined closed symmetric operator in a
separable Hilbert space $\gH$ with equal deficiency indices
$\mathrm{n}_\pm(A)=\dim \cN_{\pm \I} \leq \infty,\ \ \cN_z:=\ker(A^*-z)$.

\begin{definition}[\cite{Gor84}]\label{def_ordinary_bt}
A triplet $\Pi=\{\cH,\gG_0,\gG_1\}$ is called {\em a  boundary
triplet} for the adjoint operator $A^*$ if $\cH$ is a Hilbert
space and $\Gamma_0,\Gamma_1\colon  \dom(A^*)\rightarrow \cH$ are
bounded linear mappings such that the abstract Green's identity
\begin{equation}\label{eq:green_f}
(A^*f,g)_\gH - (f,A^*g)_\gH = (\gG_1f,\gG_0g)_\cH - (\gG_0f,\gG_1g)_\cH
\end{equation}
holds for all $f,g\in\dom(A^*)$  and the mapping 
\be\label{eq:BTGamma}
\begin{array}{cccc}
\gG\colon &  \dom(A^*) & \to & \cH\times\cH \\
                & f    &           \mapsto & \{\Gamma_0f,\Gamma_1 f\}
\end{array}
\ee
 is surjective.
\end{definition}

A boundary triplet for $A^*$ exists if and only if the deficiency indices of $A$ are equal (see, e.g.,  \cite[Prop.7.4]{DM17}, \cite[Prop. 14.5]{schm}). Moreover, $\mathrm{n}_\pm(A) = \dim(\cH)$ and $A=A^*\upharpoonright \ker(\Gamma)$. Note also that the boundary triplet for $A^*$ is not unique. 

An extension $\ti{A}$ of $A$ is called {\em proper} if $\dom(A)\subset \dom(\ti{A}) \subset \dom(A^\ast)$. The set of all proper extensions is denoted by $\Ext(A)$. 

\begin{theorem}[\cite{DM91,mmm92b}]\label{th:properext}
Let $\Pi = \{\cH,\Gamma_0,\Gamma_1\}$ be a boundary triplet for $A^\ast$. Then the mapping $\Gamma$ defines a bijective correspondence between $\Ext(A)$ and the set of all linear relations in $\cH$:
\be\label{eq:ATheta}
\Theta \mapsto A_\Theta:= A^\ast\upharpoonright {\{f\in\dom(A^\ast)\colon \ \Gamma f=\{\Gamma_0f,\Gamma_1f\}\in \Theta \}}.
\ee
Moreover, the following holds:
\begin{itemize}
\item[(i)] $A_\Theta^\ast = A_{\Theta^\ast}$.
\item[(ii)] $A_\Theta\in \cC(\gH)$ if and only if $\Theta\in \wt{\cC}(\cH)$.
\item[(iii)] $A_\Theta$ is symmetric if and only if $\Theta$ is symmetric and ${\rm n}_\pm(A_\Theta) = {\rm n}_\pm(\Theta)$ holds. In particular, $A_\Theta$ is self-adjoint if and only if $\Theta$ is self-adjoint.
\item[(iv)] If $A_\Theta = A_\Theta^\ast$ and $A_{\wt\Theta} = A_{\wt\Theta}^\ast$, then for every $p\in(0,\infty]$ the following equivalence holds
\[ 
          (A_{\Theta}-\I)^{-1} - (A_{\wt{\Theta}}-\I)^{-1}\in{\mathfrak S}_p(\gH) \ 
\Longleftrightarrow \          
                  (\Theta-\I)^{-1} - (\widetilde{\Theta}-\I)^{-1}\in{\mathfrak S}_p(\cH).
\]
      If additionally $\dom(\Theta) = \dom(\widetilde{\Theta})$, then 
\[
         \overline{\Theta - \widetilde{\Theta}} \in \mathfrak{S}_p(\cH)\ \Longrightarrow \ 
        (A_{\Theta}-\I)^{-1} - (A_{\wt{\Theta}}-\I)^{-1}\in{\mathfrak S}_p(\gH).
        \]    
\end{itemize} 
\end{theorem}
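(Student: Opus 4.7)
\textbf{Proof proposal for Theorem \ref{th:properext}.}

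The plan is to reduce everything to the structural identity that the boundary map $\Gamma=(\Gamma_0,\Gamma_1)^\top\colon \dom(A^*)\to\cH\times\cH$ is a bounded surjection (with respect to the graph norm of $A^*$) whose kernel is exactly $\dom(A)$. The kernel computation uses Green's identity \eqref{eq:green_f}: if $\Gamma f=0$, then $(A^*f,g)_\gH=(f,A^*g)_\gH$ for every $g\in\dom(A^*)$, so $f\in\dom((A^*)^*)=\dom(A)$; the reverse inclusion is trivial. Since the graph space $(\dom(A^*),\|\cdot\|_{A^*})$ is a Hilbert space and $\Gamma$ is bounded and surjective, the open mapping theorem makes it a topological quotient map onto $\cH\times\cH$ with kernel $\dom(A)$. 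This immediately yields the bijective correspondence $\Theta\mapsto A_\Theta$ between linear subspaces of $\cH\times\cH$ and intermediate extensions $A\subset A_\Theta\subset A^*$, and shows (ii): $A_\Theta$ is closed in $\gH$ if and only if $\dom(A_\Theta)=\Gamma^{-1}(\Theta)$ is closed in the graph topology, which by the quotient property holds if and only if $\Theta$ is closed.

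For (i), take $g\in\gH$; then $g\in\dom(A_\Theta^*)$ with $A_\Theta^*g=h$ means $(A^*f,g)_\gH=(f,h)_\gH$ for all $f\in\dom(A_\Theta)$. Since $\dom(A_\Theta)\supset\dom(A)$, this forces $g\in\dom(A^*)$ and $h=A^*g$. Inserting this into \eqref{eq:green_f} gives
\[
(\Gamma_1 f,\Gamma_0 g)_\cH-(\Gamma_0 f,\Gamma_1 g)_\cH=0 \qquad \text{for all }f\in\dom(A_\Theta),
\]
which, since $\Gamma$ maps $\dom(A_\Theta)$ onto $\Theta$, is exactly the condition $\{\Gamma_0 g,\Gamma_1 g\}\in\Theta^*$; conversely that condition together with Green's identity shows $g\in\dom(A_\Theta^*)$. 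Hence $A_\Theta^*=A_{\Theta^*}$. Item (iii) follows from (i): $A_\Theta\subset A_\Theta^*=A_{\Theta^*}$ is equivalent to $\Theta\subset\Theta^*$. For the deficiency index statement I would use the quotient identification $\dom(A^*)/\dom(A)\cong\cH\oplus\cH$ carried by $\Gamma$, which restricts to an isomorphism $\dom(A_{\Theta^*})/\dom(A_\Theta)\cong \Theta^*/\Theta$; then, because $A_\Theta$ is symmetric with adjoint $A_{\Theta^*}$, the formula $\mathrm n_+(A_\Theta)+\mathrm n_-(A_\Theta)=\dim(A_{\Theta^*}/A_\Theta)=\dim(\Theta^*/\Theta)=\mathrm n_+(\Theta)+\mathrm n_-(\Theta)$ yields the total index, and separating $\pm$ requires a Cayley-transform argument: the Cayley transform $C_A$ of $A$ is an isometry $\cN_{-\I}\to\cN_{\I}$, and proper symmetric extensions correspond bijectively to isometries between closed subspaces of $\cN_{-\I}$ and $\cN_{\I}$; under the boundary triplet this matches exactly the analogous Cayley description of symmetric relations in $\cH$, from which $\mathrm n_\pm(A_\Theta)=\mathrm n_\pm(\Theta)$ separately.

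For (iv), I would invoke the Krein-type resolvent formula that accompanies every boundary triplet. Let $A_0:=A^*\!\restriction\!\ker\Gamma_0$, let $\gamma(z)\colon \cH\to\cN_z$ be the $\gamma$-field and $M(z)$ the Weyl function. For $z\in\rho(A_\Theta)\cap\rho(A_{\widetilde\Theta})$ one has the identity
\[
(A_\Theta-z)^{-1}-(A_{\widetilde\Theta}-z)^{-1}=\gamma(z)\bigl[(\widetilde\Theta-M(z))^{-1}-(\Theta-M(z))^{-1}\bigr]\gamma(\bar z)^*,
\]
so the equivalence reduces to a two-sided Schatten estimate. The nontrivial direction (resolvent difference in $\mathfrak S_p(\gH)$ implies the $\cH$-side is in $\mathfrak S_p(\cH)$) uses that $\gamma(z)\colon\cH\to\cN_z$ is a bounded bijection with bounded inverse, so one can sandwich by its inverse; the reverse direction is immediate from boundedness of $\gamma(z)$ and $\gamma(\bar z)^*$. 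Finally, the implication under $\dom(\Theta)=\dom(\widetilde\Theta)$ follows from the resolvent identity
\[
(\Theta-\I)^{-1}-(\widetilde\Theta-\I)^{-1}=(\widetilde\Theta-\I)^{-1}\,\overline{(\widetilde\Theta-\Theta)}\,(\Theta-\I)^{-1},
\]
combined with the $\mathfrak S_p$ ideal property. I expect the main obstacle to be the separate equality of $\mathrm n_+$ and $\mathrm n_-$ in (iii), since the naive quotient argument only yields the sum; one really has to run the Cayley transform through the boundary triplet identification and check that the isometric piece on the $A$-side matches the isometric piece of the Cayley transform of $\Theta$ on the $\cH$-side.
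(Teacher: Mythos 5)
The paper itself contains no proof of this theorem: it is quoted from the literature (the remark following it in Appendix A points to \cite{DM17}, \cite{schm} for (i)--(ii), to \cite{mmm92b} for (iii) and to \cite{DM91} for (iv)), so there is no internal argument to compare yours against and I judge the proposal on its own. Your treatment of the bijection and of items (i), (ii) and the first assertion of (iii) is correct and is the standard one: $\Gamma$ is a bounded surjection of $(\dom(A^*),\|\cdot\|_{A^*})$ onto $\cH\times\cH$ with kernel $\dom(A)$, hence a topological quotient map, and closedness, adjoints and symmetry transport along it. (The inclusion $\dom(A)\subset\ker\Gamma$ is not quite ``trivial'' --- one must feed the surjectivity of $\Gamma$ back into Green's identity to conclude $\Gamma_0f=\Gamma_1f=0$ --- but this is routine.)

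Two genuine gaps remain. First, the equality ${\rm n}_\pm(A_\Theta)={\rm n}_\pm(\Theta)$ of the \emph{separate} deficiency indices, which you rightly single out as the hard point, is only gestured at: the quotient argument yields the sum, and the Cayley-transform matching you propose requires exhibiting a concrete identification of $\cN_{\I}\oplus\cN_{-\I}$ with $\cH\oplus\cH$ under which the isometric parameter of $A_\Theta$ becomes the Cayley transform of $\Theta$; this can be done, but it is exactly the content of the cited result and is not established in your sketch. A cleaner route is the identity $\ker(A_{\Theta^*}-z)=\gamma(z)\ker(\Theta^*-M(z))$ for $z=\pm\I$, combined with the lemma that $\dim\ker(\Theta^*-B)={\rm n}_\pm(\Theta)$ for every bounded $B$ with $\pm\im B$ uniformly positive. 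Second, in (iv) the Krein formula produces the difference $(\Theta-M(\I))^{-1}-(\wt\Theta-M(\I))^{-1}$, whereas the theorem asserts the equivalence with $(\Theta-\I)^{-1}-(\wt\Theta-\I)^{-1}$; sandwiching by $\gamma(\I)$ and its one-sided inverses is fine, but you still need the (standard, nontrivial) step showing that these two differences lie in $\gS_p(\cH)$ simultaneously --- this is where the actual work of \cite[Theorem 2]{DM91} sits, and your argument silently identifies $M(\I)$ with $\I$. The final implication under $\dom(\Theta)=\dom(\wt\Theta)$ via the second resolvent identity and the two-sided ideal property of $\gS_p$ is fine.
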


Notice that according to \eqref{eq:ThetaResolv}, the deficiency indices of a symmetric linear relation $\Theta$ can be defined as the deficiency indices of its operator part $\Theta_{\op}$. Moreover, a self-adjoint linear relation $\Theta$ is said to belong to the von Neumann--Schatten ideal $\gS_p$ if its operator part $\Theta_{\op}$ belongs to $\gS_p(\cH_{\op})$.

\begin{remark}
The proof of Theorem \ref{th:properext}(i)--(ii) can be found in, e.g., \cite[Prop.~7.8]{DM17}, \cite[Prop.~14.7]{schm}; (iii) was obtained in \cite[Prop.~3]{mmm92b}, see also \cite[Prop.~7.14]{DM17}; for the proof of item (iv) see \cite[Theorem 2]{DM91}. 
\end{remark}

\subsection{Weyl functions and extensions of semibounded operators}\label{ss:a3}

With every boundary triplet $\Pi = \{\cH,\Gamma_0,\Gamma_1\}$ one can associate two linear operators
\begin{align*}
A_0 & :=A^* \upharpoonright\ker(\Gamma_0), & A_1 & := A^*\upharpoonright\ker(\Gamma_1).
\end{align*}
Clearly, \eqref{eq:ATheta} implies $A_0 = A_{\Theta_0}$ and $A_1 = A_{\Theta_1}$, where $\Theta_0 = \{0\}\times \cH$ and $\Theta_1 = \cH\times\{0\}$. It easily follows from Theorem \ref{th:properext}(iii) that $A_0 = A_0^\ast$ and $A_1 = A_1^\ast$.

\begin{definition}[{\cite{DM91}}]\label{def_Weylfunc}
Let  $\Pi=\{\cH,\gG_0,\gG_1\}$ be a boundary triplet for $A^*$.
The operator-valued function 
$M\colon \rho(A_0)\to  \cB(\cH)$ defined by
\begin{equation}\label{II.1.3_01}
M(z):=\Gamma_1(\Gamma_0\upharpoonright\cN_z)^{-1}, \qquad
z\in\rho(A_0),
\end{equation}
is called  {\em the Weyl function} corresponding to the boundary triplet $\Pi$.
\end{definition}

The Weyl function is well defined and holomorphic on $\rho(A_0)$.
Moreover, it is a Herglotz--Nevanlinna function  (see \cite[\S 1]{DM91}, \cite[\S 7.4.2]{DM17} and also \cite[\S 14.5]{schm}).

Assume now that $A\in \cC(\gH)$ is a lower semibounded operator, i.e., $A\ge a\,{\rm
I}_{\gH}$ with some $a\in\R$. Let $a_0$ be  the largest lower bound for $A$, 
\[
a_0 := \inf_{f\in \dom(A)\setminus\{0\}}\frac{(Af,f)_\gH}{\|f\|_\gH^2}.
\]
The Friedrichs extension of $A$ is denoted by $A_F$. If $\Pi =
\{\cH,\Gamma_0,\Gamma_1\}$ is a boundary triplet for $A^*$ such that $A_0=A_F$, then the corresponding
Weyl function $M$ is holomorphic on $\C\setminus [a_0,\infty)$. 
Moreover, $M$ is strictly increasing on $(-\infty,a_0)$ (that is, for all $x$, $y\in (-\infty,a_0)$, $M(x)-M(y)$ is positive definite whenever $x>y$) and the following strong resolvent limit exists (see \cite{DM91}) 
   \be\label{eq:srMa}
M(a_0):=s-R-\lim_{x\uparrow a_0}M(x).
  \ee
However,  $M(a_0)$ is in general a closed linear relation which is
bounded from below.

\begin{theorem}[\cite{DM91, mmm92}]\label{th:lsbTheta}
Let $A\ge a\,{\rm I}_{\gH}$ with some $a\ge 0$ and let $\Pi = \{\cH,\Gamma_0,\Gamma_1\}$
be a boundary triplet for $A^*$ such that $A_0=A_F$. Also, let $\Theta=\Theta^\ast
\in\wt{\cC}(\cH)$ and $A_\Theta$ be the corresponding self-adjoint extension \eqref{eq:ATheta}. If
$M(a)\in \cB(\cH)$, then:
\begin{itemize}
\item[(i)] $A_\Theta \ge a\,{\rm I}_{\gH}$ if and only if $\Theta-M(a) \ge \bO_{\cH}$.
\item[(ii)]
\[
\kappa_-(A_\Theta - a\,{\rm I}) = \kappa_-(\Theta-M(a)).
\]
\end{itemize}
If additionally $A$ is positive definite, that is, $a>0$, then:
\begin{itemize}
\item[(iii)] $A_\Theta$ is positive definite if and only if $\Theta(0):=\Theta-M(0)$ is positive definite.
\item[(iv)] For every $p\in (0,\infty]$ the following equivalence holds
\[
A_{\Theta}^- \in \mathfrak{S}_p(\gH) \ \Longleftrightarrow \ \Theta(0)^- \in \mathfrak{S}_p(\cH),
\]
where $\Theta(0)^-:= \Theta(0)_{\op}^-\oplus \Theta(0)_{\infty}$. 
\item[(v)] For every $\gamma\in (0,\infty)$ the following equivalence holds
\[
\lambda_j(A_{\Theta}) =  j^{-\gamma}(a+o(1)) \ \Longleftrightarrow \  \lambda_j({\Theta(0)})= j^{-\gamma}(b+o(1))
\]
as $j\to \infty$. Moreover, either $ab\not = 0$ or $a = b = 0$.
\end{itemize}
\end{theorem}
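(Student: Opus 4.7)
\textbf{Proof plan for Theorem \ref{th:lsbTheta}.} The plan is to reduce every statement to a Birman--Schwinger--type principle derived from the Krein resolvent formula associated with the boundary triplet $\Pi$. Recall that if $\gamma(z) := (\Gamma_0\upharpoonright \cN_z)^{-1}$ denotes the $\gamma$-field, then for each $z\in \rho(A_0)\cap \rho(A_\Theta)$ one has
\[
(A_\Theta - z)^{-1} - (A_0 - z)^{-1} \;=\; \gamma(z)\bigl(\Theta - M(z)\bigr)^{-1}\gamma(\bar z)^{*}.
\]
Since $A_0 = A_F \ge a_0 \ge a$, the half-line $(-\infty, a_0)$ lies in $\rho(A_0)$, and on $(-\infty, a_0)$ the Weyl function is holomorphic and strictly operator-increasing; together with the hypothesis $M(a)\in \cB(\cH)$ this gives meaning to the (bounded) perturbation $\Theta - M(z)$ for $z\le a$.

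For (i), one observes that $z<a$ belongs to $\sigma(A_\Theta)\setminus \sigma(A_0)$ if and only if $0\in\sigma(\Theta - M(z))$: this follows from the Krein formula together with the fact that $\gamma(z)$ is a topological isomorphism from $\cH$ onto $\cN_z$. Monotonicity of $M$ on $(-\infty, a]$ yields $\Theta - M(z) \ge \Theta - M(a)$ whenever $z\le a$. Hence $\Theta - M(a)\ge 0$ forces $\Theta - M(z)$ to be invertible with positive inverse for every $z<a$, so $A_\Theta\ge a\,\rI_\gH$. Conversely, if $\Theta - M(a)$ has a negative eigenvalue, by continuity of $z\mapsto \Theta - M(z)$ and the unbounded decrease $M(z)\to -\infty$ strongly as $z\to-\infty$ (this is the place where $A_0=A_F$ is essential) one locates a $z_*<a$ at which $\Theta - M(z_*)$ has $0$ in its spectrum, producing a spectral point of $A_\Theta$ below $a$.

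For (ii), the same idea is upgraded to a counting statement. Since $M(z)$ decreases strongly to $-\infty$ as $z\to -\infty$, $\kappa_-(\Theta - M(z))\to 0$, while at $z=a$ we have $\kappa_-(\Theta - M(a))$ negative eigenvalues. Each jump of this integer-valued nondecreasing step-function as $z$ increases from $-\infty$ to $a$ corresponds, via Krein's formula and analytic perturbation of the (norm-continuous) family $\Theta - M(z)$, to an eigenvalue of $A_\Theta$ of the same multiplicity in $(-\infty,a)$; summing gives $\kappa_-(A_\Theta - a\,\rI) = \kappa_-(\Theta - M(a))$. Items (iii)--(v) then follow by specializing to $a=0$ (legitimate when $A$ is positive definite) and invoking the Krein formula at $z=0$: the operator $A_\Theta^{-1} - A_0^{-1} = \gamma(0)(\Theta - M(0))^{-1}\gamma(0)^{*}$ shows that, on the negative spectral subspace, $A_\Theta^-$ and $\Theta(0)^-$ are related by the bounded sandwich $\gamma(0)\,\cdot\,\gamma(0)^{*}$, which preserves Schatten ideals and, via Ky Fan inequalities and the fact that $M(0)$ is a bounded (hence Schatten-trivial) perturbation, preserves the leading eigenvalue asymptotics up to the constants $a,b$ in \eqref{eq:weakSp}.

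The principal obstacle I expect is the careful bookkeeping in (ii) when $\Theta$ has nontrivial multivalued part: one must set $\kappa_-$ via the operator part $\Theta_{\op}$ (cf.\ \eqref{eq:ThetaDecomp}) and verify that the Krein correspondence respects both eigenvalue multiplicities and the decomposition $\Theta = \Theta_{\op}\oplus \Theta_\infty$. A secondary technicality in (iv)--(v) is ensuring that the sandwiching by $\gamma(0), \gamma(0)^{*}$ gives two-sided Schatten/eigenvalue bounds; this uses boundedness and boundedly invertibility of $\gamma(0)\colon \cH\to \cN_0$, which is available precisely because $0\in\rho(A_0)$ under the positive-definiteness hypothesis $a>0$.
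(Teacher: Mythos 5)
The paper does not prove this theorem at all: it is imported from the literature, and the remark immediately following it points to Theorems~5 and 6 of \cite{DM91} for (i)--(ii) and to Theorem~3 of \cite{mmm92} for (iii)--(v). So there is no in-paper argument to compare against; what can be assessed is whether your sketch would stand on its own, and there are two genuine gaps.

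First, your arguments for the converse in (i) and for (ii) lean on $M(x)\to-\infty$ ``strongly'' as $x\to-\infty$ and on $\kappa_-(\Theta-M(x))\to 0$. Neither is available. When $A_0=A_F$ one only gets \emph{weak} divergence, $\lim_{x\to-\infty}(M(x)h,h)_\cH=-\infty$ for each $h\neq 0$ --- the paper itself stresses this right after Theorem~\ref{th:lsb=lsb}, and the whole point of that theorem is that \emph{uniform} divergence is an extra hypothesis, not a consequence of $A_0=A_F$. Moreover, if $A_\Theta$ is not lower semibounded then $\kappa_-(\Theta-M(x))=\infty$ for every $x$, so the picture of an integer-valued step function climbing from $0$ to $\kappa_-(\Theta-M(a))$ collapses; the theorem is stated for general self-adjoint $\Theta$ and both sides of (ii) may be infinite. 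The route that actually works (and is what \cite{DM91} does) is the quadratic-form decomposition: for real $x<\inf\sigma(A_F)$ every $f\in\dom(A_\Theta)$ splits as $f=f_0+\gamma(x)\Gamma_0f$ with $f_0\in\dom(A_0)$, and
\[
\bigl((A_\Theta-x)f,f\bigr)_\gH=\bigl((A_0-x)f_0,f_0\bigr)_\gH+\bigl((\Theta-M(x))\Gamma_0f,\Gamma_0f\bigr)_\cH,
\]
which yields $\kappa_-(A_\Theta-x)=\kappa_-(\Theta-M(x))$ in one stroke (no counting of jumps, no finiteness needed), after which one lets $x\uparrow a$ using \eqref{eq:srMa}. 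You should also not assume $\Theta-M(a)$ has a negative \emph{eigenvalue}; only a nontrivial negative spectral subspace is guaranteed.

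Second, in (iv)--(v) the object $A_\Theta^{-1}$ you sandwich need not exist: in the regime of interest the negative eigenvalues of $A_\Theta$ accumulate at $0$, so $0\in\sigma_{\ess}(A_\Theta)$. Even replacing it by $(A_\Theta-z)^{-1}$, the sandwich $\gamma(z)(\Theta-M(z))^{-1}\gamma(\bar z)^\ast$ plus the \emph{additive} positive term $(A_0-z)^{-1}$ gives Schatten membership in one direction only and does not by itself control the individual eigenvalue asymptotics $\lambda_j=j^{-\gamma}(a+o(1))$: adding a nonnegative operator moves negative eigenvalues in an uncontrolled way, and your appeal to ``$M(0)$ bounded, hence Schatten-trivial'' is not correct --- a bounded perturbation is neither trace-ideal small nor asymptotics-preserving. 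Making (v) work requires the finer operator identities of \cite[Theorem~3]{mmm92}, in which the two-sided bounds $c\,{\rm I}_\cH\le\gamma(0)^\ast\gamma(0)\le C\,{\rm I}_\cH$ are combined with a comparison of the negative parts themselves rather than of resolvent differences. Your overall instinct (Kre\u{\i}n formula, monotonicity of $M$ on $(-\infty,a_0)$, the role of $A_0=A_F$) is the right one, but as written the sketch proves neither direction of (ii) in the non-semibounded case nor the equivalences in (iv)--(v).
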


\begin{remark}
For the proofs of (i) and (ii) consult Theorems~5 and 6 in \cite{DM91}; the proofs of (iii)--(v) can be found in \cite[Theorem~3]{mmm92}.
\end{remark}

We also need the following important statement (see \cite[Theorem~3]{DM91} and \cite[Theorem~8.22]{DM17}).

\begin{theorem}[\cite{DM91}]\label{th:lsb=lsb}
Assume the conditions of Theorem \ref{th:lsbTheta}. Then the following statements
\begin{itemize}
\item[(i)] $\Theta\in\wt{\cC}(\cH)$ is lower semibounded,
\item[(ii)] $A_\Theta$ is lower semibounded,
\end{itemize}
are equivalent if and only if $M(x)$ tends uniformly to $-\infty$ as $x\to -\infty$, that is, for every $N>0$ there exists $x_N<0$ such that $M(x)< -N\cdot {\rm I}_{\cH}$ for all $x<x_N$.
\end{theorem}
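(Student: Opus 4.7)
The plan is to prove the outer biconditional by handling sufficiency and necessity separately, with the nontrivial direction constructed as a contrapositive.

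\textbf{Sufficiency.} Assume $M(x)$ tends uniformly to $-\infty$ as $x\to-\infty$; I claim (i) and (ii) are equivalent for every self-adjoint $\Theta\in\wt{\cC}(\cH)$. The implication (ii)$\Rightarrow$(i) in fact holds without any growth assumption on $M$: if $A_\Theta\ge c\,{\rm I}_\gH$, lower $c$ if necessary so that $c\le a$ (then $A\ge c\,{\rm I}_\gH$ and $M(c)\in\cB(\cH)$ by monotonicity of $M$ on $(-\infty,a_0)$ together with the hypothesis $M(a)\in\cB(\cH)$), and apply Theorem \ref{th:lsbTheta}(i) with $a$ replaced by $c$ to conclude $\Theta\ge M(c)$; since $M(c)$ is a bounded self-adjoint operator, $\Theta$ inherits a finite lower bound. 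For (i)$\Rightarrow$(ii), suppose $\Theta\ge c\,{\rm I}_\cH$ in the quadratic-form sense. By uniform divergence, pick $N>-c$ and $x_N<0$ with $M(x)<-N\cdot{\rm I}_\cH$ for all $x<x_N$. Lowering $x$ further if needed to ensure $x\le a$, the chain
\[
M(x) < -N\cdot{\rm I}_\cH < c\cdot{\rm I}_\cH \le \Theta
\]
yields $\Theta-M(x)\ge\bO_\cH$, so Theorem \ref{th:lsbTheta}(i) gives $A_\Theta\ge x\,{\rm I}_\gH$.

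\textbf{Necessity.} I argue the contrapositive: if $M(x)$ does not tend uniformly to $-\infty$, a bounded $\Theta$ will witness the failure of equivalence. By negating the uniform limit, there are $N_0>0$, a sequence $x_n\downarrow-\infty$, and unit vectors $e_n\in\cH$ with $(M(x_n)e_n,e_n)_\cH\ge-N_0$. Take the bounded self-adjoint operator $\Theta:=-(N_0+1)\,{\rm I}_\cH$, which trivially satisfies (i). If we had $A_\Theta\ge c\,{\rm I}_\gH$ for some $c\le a$, then Theorem \ref{th:lsbTheta}(i) would force $M(c)\le\Theta=-(N_0+1)\,{\rm I}_\cH$. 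But eventually $x_n\le c$, and strict increase of $M$ on $(-\infty,a_0)$ gives $M(c)\ge M(x_n)$, whence
\[
(M(c)e_n,e_n)_\cH \;\ge\; (M(x_n)e_n,e_n)_\cH \;\ge\; -N_0 \;>\; -(N_0+1),
\]
contradicting $M(c)\le-(N_0+1)\,{\rm I}_\cH$. So $A_\Theta$ is not lower semibounded although $\Theta$ is, breaking the equivalence.

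\textbf{Main obstacles.} The delicate part is bookkeeping around Theorem \ref{th:lsbTheta}(i), which is stated only for the distinguished shift $a$; shifting to an arbitrary $c\le a$ has to be justified by noting $M(c)\in\cB(\cH)$ (automatic from holomorphicity of $M$ on $\C\setminus[a_0,\infty)$) and that subtracting $c\,{\rm I}_\gH$ from $A$ preserves all triplet data. A second subtlety is the meaning of $\Theta-M(x)\ge\bO_\cH$ when $\mul(\Theta)\ne\{0\}$: via the decomposition $\Theta=\Theta_{\op}\oplus\Theta_\infty$ with $\cH_{\op}=\overline{\dom(\Theta)}$, the inequality is read through the compression of $M(x)$ to $\cH_{\op}$, and the orthogonality $\mul(\Theta)\perp\dom(\Theta)$ ensures the $\Theta_\infty$-part contributes nothing to (semi)boundedness; these bookkeeping points are absorbed into the cited statement of Theorem \ref{th:lsbTheta} and do not affect the logical structure above.
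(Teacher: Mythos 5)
Your argument is correct. Note, however, that the paper does not actually prove this theorem: it is imported verbatim from Derkach--Malamud (\cite[Theorem~3]{DM91}, see also \cite[Theorem~8.22]{DM17}), and the only commentary offered is the remark after the statement that the implication (ii)$\Rightarrow$(i) always holds. Your proof is therefore a self-contained reconstruction rather than a variant of anything in the text, and it follows the standard route: both directions reduce to the single-point criterion of Theorem \ref{th:lsbTheta}(i) applied at an arbitrary shift $c\le a$, which you correctly justify by replacing $A$ with $A-c\,{\rm I}_\gH$ (this preserves the triplet and the property $A_0=A_F$, turns the Weyl function into $M(\cdot+c)$, and $M(c)\in\cB(\cH)$ because $(-\infty,a_0)\subset\rho(A_F)$). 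The sufficiency direction and the counterexample $\Theta=-(N_0+1)\,{\rm I}_\cH$ in the necessity direction are both sound, and the contradiction via strict monotonicity of $M$ on $(-\infty,a_0)$ is exactly the right mechanism. Two small points of bookkeeping: in the necessity step you should lower $c$ strictly below $a$, so that $c<a_0$ and both $c$ and the tail of the sequence $(x_n)$ lie in the interval on which monotonicity of $M$ is asserted; and your reading of the theorem as a statement quantified over all self-adjoint $\Theta$ (which the counterexample construction presupposes) is indeed the intended one, since for a single fixed $\Theta$ the ``only if'' direction would be false.
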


The implication $(ii)\Rightarrow (i)$ always holds true (cf. Theorem \ref{th:lsbTheta}(i)), however, the validity of the converse implication requires that $M$ tends uniformly to $-\infty$. Let us mention in this connection that the weak convergence of $M(x)$ to $-\infty$, i.e., the relation
\[
\lim_{x\to-\infty}(M(x)h, h)_\cH = -\infty 
\]
holds for all $ h\in \cH\setminus \{0\}$ whenever $A_0 = A_F$. Moreover, this relation  characterizes Weyl
functions of the Friedrichs extension $A_F$ among  all non-negative (and even lower semibounded) self-adjoint extensions of $A$ (see \cite{KreOvc78},  \cite[Proposition 4]{DM91}).

The next new result establishes a connection between the essential spectra of $A_\Theta$ and $\Theta$ and also it can be seen as an improvement of Theorem \ref{th:lsbTheta} (iv). 

\begin{theorem}\label{th:ess-impl-ess}
Let $A\ge a_0\, I_{\gH}>0$  and let $\Pi =
\{\cH,\Gamma_0,\Gamma_1\}$ be a boundary triplet for $A^*$ such that $A_0=A_F$. Also, let $M$ be the corresponding Weyl function and let $\Theta=\Theta^\ast\in \wt{\cC}(\cH)$ be such that $A_\Theta=A_\Theta^\ast$ is lower semibounded. Then the following equivalences  hold:
   \begin{align}
\inf \sigma_{\ess}(A_\Theta)\ge 0\ \Longleftrightarrow \ \inf\sigma_{\ess}(\Theta - M(0))\ge 0,  \label{ess-B=ess-AB} \\
\inf \sigma_{\ess}(A_\Theta)>0\ \Longleftrightarrow \ \inf\sigma_{\ess}(\Theta - M(0))>0,  \label{ess-B-impl-ess-AB} \\
\inf\sigma_{\ess}(A_\Theta)=0 \ \Longleftrightarrow  \ \inf \sigma_{\ess}(\Theta - M(0))=0. \label{ess-B-impl-ess-AB2}
   \end{align}
\end{theorem}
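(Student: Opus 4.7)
The plan is to deduce Theorem~\ref{th:ess-impl-ess} from the Schatten-class equivalence in Theorem~\ref{th:lsbTheta}(iv) (at $p=\infty$) together with a shift trick. The elementary observation behind everything is that for a lower semibounded self-adjoint operator $T$ one has $T^-\in\gS_\infty$ iff $\inf\sigma_\ess(T)\ge 0$, since the spectrum of $T$ below $0$ is then discrete and accumulates only at $0$. Applied to $A_\Theta$ and $B:=\Theta-M(0)$, Theorem~\ref{th:lsbTheta}(iv) with $p=\infty$ immediately yields the first equivalence \eqref{ess-B=ess-AB}. The third equivalence \eqref{ess-B-impl-ess-AB2} follows automatically once the second \eqref{ess-B-impl-ess-AB} is known, since $\inf\sigma_\ess=0$ is the conjunction of $\ge 0$ and $\not>0$. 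So the real task is to establish \eqref{ess-B-impl-ess-AB}.

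For \eqref{ess-B-impl-ess-AB} I plan to use a shift argument. For $\varepsilon\in(0,a_0)$ the operator $A':=A-\varepsilon\rI_\gH$ is still uniformly positive with Friedrichs extension $A_0-\varepsilon\rI_\gH$; the same boundary triplet $\Pi$ serves $(A')^\ast$ with Weyl function $M_{A'}(z)=M(z+\varepsilon)$. Applying the already-proven equivalence \eqref{ess-B=ess-AB} to the pair $(A',\Theta)$ yields
\[
\inf\sigma_\ess(A_\Theta)\ge\varepsilon \ \Longleftrightarrow \ \inf\sigma_\ess\!\bigl(B-C(\varepsilon)\bigr)\ge 0, \qquad C(\varepsilon):=M(\varepsilon)-M(0)\ge\bO_\cH.
\]
Taking the supremum over $\varepsilon>0$ converts this into $\inf\sigma_\ess(A_\Theta)>0 \Longleftrightarrow \exists\,\varepsilon>0:\inf\sigma_\ess(B-C(\varepsilon))\ge 0$, whereas $\inf\sigma_\ess(B)>0$ is by definition equivalent to the existence of $\delta>0$ with $\inf\sigma_\ess(B-\delta\rI_\cH)\ge 0$. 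It then remains to show that these two quantified conditions coincide.

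One direction is clean: since $M$ is analytic at $0$ (because $0\in\rho(A_0)$), $\|C(\varepsilon)\|\to 0$ as $\varepsilon\downarrow 0$, so whenever $\inf\sigma_\ess(B)>0$ we can pick $\varepsilon$ with $\|C(\varepsilon)\|\le\inf\sigma_\ess(B)$ and conclude $\inf\sigma_\ess(B-C(\varepsilon))\ge 0$ by the min--max principle applied to $B-C(\varepsilon)\ge B-\|C(\varepsilon)\|\rI_\cH$. The converse is the main obstacle: to pass from $\inf\sigma_\ess(B-C(\varepsilon))\ge 0$ back to $\inf\sigma_\ess(B-\delta\rI_\cH)\ge 0$ for some $\delta>0$, a \emph{quantitative lower} bound $C(\varepsilon)\ge c_\varepsilon\rI_\cH$ with $c_\varepsilon>0$ is needed, so that $B-C(\varepsilon)\le B-c_\varepsilon\rI_\cH$ and one can take $\delta:=c_\varepsilon$. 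I plan to establish this via the standard $\gamma$-field representation $M'(s)=\gamma(s)^\ast\gamma(s)$, where $\gamma(s):=(\Gamma_0\upharpoonright\cN_s)^{-1}\colon\cH\to\cN_s$. Because $\Gamma_0\upharpoonright\cN_s$ is a bounded bijection between Hilbert spaces (on $\cN_s$ the graph norm is equivalent to the $\gH$-norm, since $A^\ast f=sf$ there), the open mapping theorem forces $\gamma(s)$ to be bounded below, and hence $M'(s)\ge c_s\rI_\cH$ for some $c_s>0$. Norm-continuity of $M'$ at $0$ then produces $M'(s)\ge(c_0/2)\rI_\cH$ uniformly on some interval $[0,s_0]$, and integrating yields the required $C(\varepsilon)\ge(c_0/2)\varepsilon\rI_\cH$ for $\varepsilon\in(0,s_0]$, completing the argument.
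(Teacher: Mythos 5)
Your proof is correct, and after the (identical) first step---deriving \eqref{ess-B=ess-AB} from Theorem \ref{th:lsbTheta}(iv) with $p=\infty$ and \eqref{ess-B-impl-ess-AB2} from the other two---it takes a genuinely different route to \eqref{ess-B-impl-ess-AB} than the paper does. The paper proves that equivalence directly: for ``$\Leftarrow$'' it truncates $B:=\Theta-M(0)$ at the level $a=\inf\sigma_{\ess}(B)>0$ to produce a uniformly positive $\wt B$ with $B-\wt B$ compact, obtains uniform positivity of $A_{\wt B}$ from Theorem \ref{th:lsbTheta}(iii), and transfers the conclusion through the compact-resolvent-difference equivalence of Theorem \ref{th:properext}(iv); for ``$\Rightarrow$'' it argues by contradiction via the index identity $\kappa_-(A_B-x)=\kappa_-(B-M(x))$ of Theorem \ref{th:lsbTheta}(ii) together with the uniform positivity of $M(x)-M(0)$ for $x\in(0,a_0)$. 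You instead bootstrap \eqref{ess-B=ess-AB} itself, applied to the shifted pairs $(A-\varepsilon\rI_\gH,\Theta)$ whose Weyl function is $M(\cdot+\varepsilon)$, and reduce both implications to the two-sided estimate $c\,\varepsilon\,\rI_\cH\le M(\varepsilon)-M(0)\le\|M(\varepsilon)-M(0)\|\,\rI_\cH\to\bO_\cH$; the crucial lower bound is exactly the strict monotonicity of the Weyl function of the Friedrichs extension on $(-\infty,a_0)$ that the paper quotes from Derkach--Malamud, which you re-derive from $M'(s)=\gamma(s)^\ast\gamma(s)$ and the open mapping theorem. Your approach buys uniformity---one shift handles both directions and in fact yields quantitative two-sided bounds relating $\inf\sigma_{\ess}(A_\Theta)$ and $\inf\sigma_{\ess}(B)$---at the price of importing the $\gamma$-field identity, which is standard but not recorded in the paper's appendix; the paper's argument stays within the facts it has explicitly stated (plus the quoted monotonicity of $M$) and, in its ``$\Leftarrow$'' half, additionally identifies $\sigma_{\ess}(A_B)$ with $\sigma_{\ess}(A_{\wt B})$. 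Two small points, which the paper glosses over in the same spirit: all inequalities should be read on the operator part of $\Theta-M(\varepsilon)$ (the multivalued part is $\varepsilon$-independent and spectrally inert), and in your converse direction the witness $\varepsilon$ must be taken in $(0,s_0]$ where the uniform bound $M'(s)\ge(c_0/2)\rI_\cH$ holds---which is harmless, since $\inf\sigma_{\ess}(A_\Theta)>0$ gives $\inf\sigma_{\ess}(A_\Theta)\ge\varepsilon$ for all sufficiently small $\varepsilon$.
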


  \begin{proof}
First observe that \eqref{ess-B=ess-AB} easily follows from Theorem \ref{th:lsbTheta}(iv). Hence it remains to prove \eqref{ess-B-impl-ess-AB} since \eqref{ess-B-impl-ess-AB2} follows from \eqref{ess-B=ess-AB} and \eqref{ess-B-impl-ess-AB}.

Since $A$ is uniformly positive and $A_0=A_F$, we can assume without loss of generality that  $M(0) =\bO_\cH$. Indeed, $M(0)\in \cB(\cH)$ and hence we can replace the  boundary triplet $\Pi = \{\cH,\Gamma_0,\Gamma_1\}$  by the triplet $\Pi_0
= \{\cH,\Gamma_0,\Gamma_1 - M(0)\Gamma_0\}$ and in this case the  Weyl function
$M(\cdot)$ and the boundary relation $\Theta$ are replaced respectively by $M(\cdot) - M(0)$ and  $\Theta -
M(0)$. Moreover, for simplicity we shall assume that $\Theta=B\in \cC(\cH)$ is a self-adjoint linear operator (cf.~\eqref{eq:ThetaResolv}). 

We divide the proof of \eqref{ess-B-impl-ess-AB} into two parts. 

(i) Let us first establish the implication ``$\Leftarrow$"\ in \eqref{ess-B-impl-ess-AB}. %
 For $a := \inf\sigma_{\ess}(B) >0$, we set 
\begin{align*}
 \cH_1 & :=\ran E_B\big([a,\infty)\big), & \cH_2 & :=\ran E_B\big((-\infty,a)\big) = \cH_1^\perp,
\end{align*}
and then define the operators $B_j := B \upharpoonright \cH_j$, $j\in\{1,2\}$. 
Since both subspaces $\cH_1$  and $\cH_2$ are reducing for $B$,  $B = B_1\oplus B_2$. Moreover, we set
   \begin{equation}\label{aux-oper_wt-B1}
\wt B:=B_1\oplus a {\rm I}_{\cH_2}\ge a {\rm I}_{\cH}>0.
  \end{equation}
Combining this inequality with the assumption $M(0) = \bO_\cH$ and applying Theorem
\ref{th:lsbTheta}(iii), we obtain that $A_{\wt B}\ge \ti{a}\,{\rm I}_\gH$ for some $\ti{a} > 0$. 

On the other hand, $B$ is lower semibounded since so is $A_B$ (see a remark after Theorem \ref{th:lsb=lsb}). Hence the operator $B_2$ is lower
semibounded  too and by the definition of $B_2$ either $B_2$ is
finite-rank or the point $a$ is the only accumulation point for $\sigma(B_2)$, i.e.,
$(B_2 - a\,{\rm I}_{\cH_2}) \in{\gS}_{\infty}(\cH_2)$. Therefore,
  \begin{equation}\label{3}
B - \wt B =  
\bO_{\cH_1}\oplus (B_2 -
a\,{\rm I}_{\cH_2}) \in{\gS}_{\infty}(\cH).
 \end{equation}
By Theorem \ref{th:properext} (iv), this relation yields
  \begin{equation}\label{A.16}
(A_{B}-\I)^{-1} - (A_{\wt B}-\I)^{-1} \in{\gS}_{\infty}(\gH),
\end{equation}
which, in turn, implies $\sigma_{\ess}(A_B) = \sigma_{\ess}(A_{\wt B})$. Hence
   \begin{equation}\label{infess-B-impl-infess-AB2}
\inf\sigma_{\ess}(A_B) = \inf \sigma_{\ess}(A_{\wt B}) \ge \ti{a}>0.
   \end{equation}
This proves the implication ``$\Leftarrow$"\, in \eqref{ess-B-impl-ess-AB}. 

(ii) To prove the remaining implication ``$\Rightarrow$"\, in \eqref{ess-B-impl-ess-AB}, let  $b := \inf\sigma_{\ess}(A_B) >0$ and assume the contrary, that is $a = \inf\sigma_{\ess}(B)
\le 0$. Then at least one of the following two
conditions is satisfied: 
\begin{align*}
\dim\ran E_{B}\big((-\infty,0)\big) & = \infty, & 
\dim\ran E_{B}\big([0, \delta)\big) & = \infty \ \ \text{for all}\ \ \delta>0.
\end{align*}
In the first case, Theorem \ref{th:lsbTheta}(ii) implies $\kappa_-(A_B) = \kappa_-(B)  = \infty$.  Since $A_B$ is lower semibounded, we get  $b=\inf\sigma_{\ess}(A_B)\le 0$, which contradicts the assumption $b>0$.

In the second case, recall that $A\ge a_0\, I_{\gH}$ with $a_0>0$. The corresponding Weyl function $M$ is analytic on $(-\infty,a_0)$ and $M(x)=M(x)-M(0)$ is positive definite for all $x\in (0,a_0)$. Fix some $x\in (0,a_0\wedge b)$ and let $\varepsilon>0$ be such that $M(x)\ge \varepsilon\, {\rm I}_{\cH}$. Noting that 
\[
(Bf,f)_{\cH} < \delta \|f\|^2_{\cH}
\]
for all $f\in \ran E_{B}([0, \delta)\big)\setminus\{0\}$, we get 
\[
((B-M(x))f,f)_{\cH} <(\delta-\varepsilon)\|f\|_\cH^2< 0
\]
for all $f\in \ran E_{B}([0, \delta))\setminus\{0\}$ whenever $\delta<\varepsilon$. By  Theorem \ref{th:lsbTheta}(ii),   
\[
\kappa_-(A_B -x) = \kappa_-(B - M(x)) = \infty,
\]
 and hence $\inf\sigma_{\ess}(A_B)\le x<b$ since $A_B$ is lower semibounded. This contradiction finishes the proof.
   \end{proof}

\subsection{Direct sums of boundary triplets}\label{ss:a4}

Let $J$ be a countable index set, $\# J= \aleph_0$. For each $j\in J$, let $A_j$ be a closed densely defined symmetric operator in a separable Hilbert space $\gH_j$ such that $0<\mathrm{n}_+(A_j)=\mathrm{n}_-(A_j)\leq \infty$. Also, let $\Pi_j=\{\cH_j,\gG_{0,j},\gG_{1,j}\}$ be a boundary triplet for the operator $A_j^*$, $j\in J$. In the Hilbert space $\gH:=\oplus_{j\in J}\gH_j$, consider the operator $A:=\oplus_{j\in J} A_j$, which is symmetric and $\mathrm{n}_+(A)=\mathrm{n}_-(A) = \infty$. Its adjoint is given by $A^\ast=\oplus_{j\in J} A_j^\ast$. Let us define a direct sum $\Pi := \oplus_{j\in J}\Pi_j$ of boundary triplets $\Pi_j$ by setting
\begin{align}\label{pi_plus}
 \cH & = \oplus_{j\in J}\cH_n, & \gG_0 & :=\oplus_{j\in J} \gG_{0,n},\quad \gG_1:=\oplus_{j\in J} \gG_{1,n}.
 \end{align}
Note that $\Pi=\{\cH, \Gamma_0,\Gamma_1\}$ given by \eqref{pi_plus} may not form a boundary triplet for $A^\ast$ in the sense of Definition
\ref{def_ordinary_bt} (for example, $\gG_0$ or $\gG_1$ may be unbounded) and first counterexamples were constructed by A.\ N.\ Kochubei. 
The next result provides several criteria for \eqref{pi_plus}
to be a boundary triplet for the operator $A^\ast=\oplus_{n=1}^\infty A^\ast_n $.

\begin{theorem}[\cite{KM10,MalNei_08,cmp13}]\label{th_bt_criterion}
Let $A=\oplus_{j\in J} A_j$ and let  $\Pi=\{\cH,\gG_0,\gG_1\}$ be defined by
\eqref{pi_plus}. Then the following conditions are equivalent:
\begin{itemize}
\item[(i)] $\Pi=\{\cH,\gG_0,\gG_1\}$ is a boundary triplet for the operator $A^\ast$.
\item[(ii)]
The mappings $\gG_0$ and $\gG_1$ are bounded as mappings from $\dom(A^\ast)$ equipped with the graph norm to $\cH$.
\item[(iii)] The Weyl functions $M_j$ corresponding to the triplets $\Pi_j$, $j\in J$, satisfy the following condition
    \begin{equation}\label{WF_criterion}
\sup_{j\in J}\big( \|M_j(\I)\|_{\cH_j}\vee \|(\im M_j(\I))^{-1}\|_{\cH_j}\big) < \infty.
   \end{equation}
\item[(iv)] If in addition $a\in \R$ is a point of a regular type of the operator $A$ (i.e., there exists a positive constant $c>0$ such that $\|(A-a)f\|\ge c\|f\|$ for all $f\in\dom(A)$), then
(i)--(iii) are further equivalent to
  \begin{equation}\label{III.2.2_02}
\sup_{j \in J}\max\big\{\|M_j(a)\|_{\cH_j},\, \|M'_j(a)\|_{\cH_j},\, \|\bigl(M'_j(a)\bigr)^{-1}\|_{\cH_j}\}<\infty.
     \end{equation}
\end{itemize}
\end{theorem}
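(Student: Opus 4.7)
The plan is to analyze the direct-sum construction componentwise via a von Neumann type decomposition
\[
\dom(A_j^*) = \dom(A_{0,j}) \dotplus \cN_{\I,j} \dotplus \cN_{-\I,j},
\]
where $A_{0,j} := A_j^* \upharpoonright \ker(\Gamma_{0,j})$, and to express the boundary maps $\Gamma_{0,j}, \Gamma_{1,j}$ in terms of the $\gamma$-field $\gamma_j(z) := (\Gamma_{0,j} \upharpoonright \cN_{z,j})^{-1}$ and the Weyl function $M_j$. This reduces the global question (whether $\oplus_j \Pi_j$ is a boundary triplet for $A^*$) to a uniform-in-$j$ question about the operator-valued quantities $M_j(\pm \I)$, exactly as encoded by \eqref{WF_criterion}.

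First I would establish (i)$\Leftrightarrow$(ii). The implication (i)$\Rightarrow$(ii) is immediate from Definition \ref{def_ordinary_bt}. For the converse, Green's identity \eqref{eq:green_f} for $\Pi$ is obtained by summing the corresponding identities for $\Pi_j$, which is legitimate once we know $\Gamma_0 f, \Gamma_1 f \in \cH$, i.e., under (ii). The remaining point is surjectivity of $(\Gamma_0, \Gamma_1)$ onto $\cH \times \cH$. Given $((h_j)_j, (g_j)_j)$, I would seek a preimage of the form $f_j = \gamma_j(\I) a_j + \gamma_j(-\I) b_j$. The identities $\Gamma_{0,j} \gamma_j(z) = \rI_{\cH_j}$, $\Gamma_{1,j} \gamma_j(z) = M_j(z)$, together with $M_j(-\I) = M_j(\I)^*$, reduce $\Gamma_{0,j} f_j = h_j$, $\Gamma_{1,j} f_j = g_j$ to an explicit $2\times 2$ block linear system whose solution is controlled by $\|M_j(\I)\|$ and $\|(\im M_j(\I))^{-1}\|$. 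Using the standard identity $\gamma_j(\I)^* \gamma_j(\I) = \im M_j(\I)$, condition \eqref{WF_criterion} then makes the construction uniformly bounded across $j$ and produces the required $f \in \dom(A^*)$, giving (iii)$\Rightarrow$(i) en route.

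For (ii)$\Rightarrow$(iii) I would run the same decomposition backwards: write an arbitrary $f_j \in \dom(A_j^*)$ as $f_j = f_{0,j} + \gamma_j(\I) a_j + \gamma_j(-\I) b_j$, compute $\Gamma_{0,j} f_j = a_j + b_j$ and $\Gamma_{1,j} f_j = \Gamma_{1,j} f_{0,j} + M_j(\I) a_j + M_j(-\I) b_j$, and translate the graph norm of $f_j$ into the natural norm $\|f_{0,j}\|_{A_{0,j}}^2 + \langle \im M_j(\I) a_j, a_j\rangle + \langle \im M_j(\I) b_j, b_j\rangle$ obtained from the identity $\gamma_j(\I)^* \gamma_j(\I) = \im M_j(\I)$ and the orthogonality of $(A_{0,j}-\I)\dom(A_{0,j})$ to $\cN_{\I,j}\oplus\cN_{-\I,j}$. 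Testing the uniform bound on $\Gamma_{0,j}, \Gamma_{1,j}$ first against vectors in $\cN_{\I,j}$ yields $\sup_j \|(\im M_j(\I))^{-1}\| < \infty$, and then against general elements of $\cN_{\I,j} \dotplus \cN_{-\I,j}$ yields $\sup_j \|M_j(\I)\| < \infty$, producing \eqref{WF_criterion}.

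Finally, (iv) follows from a variant of the above at the real point $a$ of regular type. Because $a$ is regular for $A = \oplus_j A_j$, it is uniformly regular for each $A_j$, the Weyl functions $M_j$ extend holomorphically across $a$, and $M_j'(a) = \gamma_j(a)^* \gamma_j(a) \ge 0$ plays the role previously played by $\im M_j(\I)$. Performing the analogous decomposition with $\cN_{a,j}$ in place of $\cN_{\pm \I,j}$ and using the holomorphy of $M_j$ to replace difference quotients by $M_j'(a)$ yields the equivalence of \eqref{III.2.2_02} with the conditions (i)--(iii). The main obstacle throughout is obtaining the uniform-in-$j$ two-sided norm equivalence between the graph norm on the direct sum and the norm induced by the $\gamma$-field decomposition; once this uniform equivalence is in hand, Green's identity, surjectivity, and self-adjointness of $A_0$ all reduce to componentwise statements for the individual triplets $\Pi_j$.
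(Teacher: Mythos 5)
The paper itself gives no proof of this statement: Theorem \ref{th_bt_criterion} is quoted from \cite{KM10,MalNei_08,cmp13} as a known result, so there is no in-paper argument to compare yours against. That said, your proposal reconstructs the standard proof from those references: reduce everything to the components via the $\gamma$-fields and Weyl functions, test the boundedness of $\Gamma_0,\Gamma_1$ against defect vectors to extract \eqref{WF_criterion}, and solve a $2\times 2$ block system on $\cN_{\I,j}\dotplus\cN_{-\I,j}$ for surjectivity. The logical cycle (i)$\Rightarrow$(ii)$\Rightarrow$(iii)$\Rightarrow$(i) closes correctly, and the surjectivity computation (with $M_j(\I)-M_j(\I)^\ast=2\I\im M_j(\I)$ controlling the solution) is right.

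Two intermediate claims you rely on are, however, false as written. First, the sum $\dom(A_{0,j})\dotplus\cN_{\I,j}\dotplus\cN_{-\I,j}$ is not direct: one has $\dom(A_j^\ast)=\dom(A_{0,j})\dotplus\cN_{z,j}$ for a \emph{single} $z\in\rho(A_{0,j})$, so adjoining both defect subspaces to $\dom(A_{0,j})$ overcounts by $\mathrm{n}_\pm(A_j)$ dimensions. Second, $(A_{0,j}-\I)\dom(A_{0,j})=\ran(A_{0,j}-\I)=\gH_j$ because $A_{0,j}$ is self-adjoint and $\I\in\rho(A_{0,j})$, so it is orthogonal to nothing; the graph-orthogonal decomposition is von Neumann's $\dom(A_j^\ast)=\dom(A_j)\dotplus\cN_{\I,j}\dotplus\cN_{-\I,j}$, taken relative to the minimal operator $A_j$ (on which both $\Gamma_{0,j}$ and $\Gamma_{1,j}$ vanish). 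Both slips are repairable without changing your architecture. For (ii)$\Rightarrow$(iii) no decomposition is needed at all: test $\Gamma_{0,j},\Gamma_{1,j}$ on $f=\gamma_j(\I)a\in\cN_{\I,j}$, where the graph norm satisfies $\|f\|^2+\|A_j^\ast f\|^2=2\langle\im M_j(\I)a,a\rangle$, and both bounds in \eqref{WF_criterion} drop out. For (iii)$\Rightarrow$(ii) use the single-point splitting $f_j=f_{0,j}+\gamma_j(\I)\Gamma_{0,j}f_j$ together with $f_{0,j}=(A_{0,j}-\I)^{-1}(A_j^\ast-\I)f_j$ and $\Gamma_{1,j}f_{0,j}=\gamma_j(-\I)^\ast(A_{0,j}-\I)f_{0,j}$; all constants are then uniform in $j$ under \eqref{WF_criterion}. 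Your treatment of (iv) is only a sketch and silently assumes $a\in\rho(A_{0,j})$ for every $j$ (needed for $M_j$ to be holomorphic at $a$ and for \eqref{III.2.2_02} to make sense), but the substitution of $M_j'(a)=\gamma_j(a)^\ast\gamma_j(a)$ for $\im M_j(\I)$ is the correct mechanism.
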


Based on these criteria, different regularizations $\wt{\Pi}_j$ of triplets  $\Pi_j$ such that the corresponding direct sum
$\wt{\Pi} = \oplus_{j\in J} \wt{\Pi}_j$ forms a boundary triplet for $A^\ast=\oplus_{j\in J} A_j^\ast$ were suggested in \cite{cmp13, KM10, MalNei_08}.

\bigskip
\noindent
\ack We thank Noema Nicolussi for useful discussions and helpful remarks. We are also grateful to the referees for the careful reading of our manuscript, their remarks and hints with respect to the literature that have helped to improve the exposition.

A.K. appreciates the hospitality at the Department of Theoretical Physics, Nuclear Physics Institute, during several short stays in 2016, where a part of this work was done. 


\bigskip

\end{document}